	\numberwithin{equation}{section} 
\newcommand{\cA}{\mathcal{A}}
\newcommand{\cH}{\mathcal{H}}
\newcommand{\cZ}{\mathcal{Z}}
\newtheorem{theorem}{\protect\theoremname}[section]
\newtheorem{definition}[theorem]{\protect\definitionname}
\newtheorem{lemma}[theorem]{\protect\lemmaname}
\newtheorem{proposition}[theorem]{\protect\propositionname}
\newtheorem{example}[theorem]{\protect\examplename}
\newtheorem{corollary}[theorem]{\protect\corollaryname}
\newtheorem{remark}[theorem]{\protect\remarkname}
\newtheorem{problem}[theorem]{\protect\problemname}
\providecommand{\corollaryname}{Corollary}
\providecommand{\claimname}{Claim}
\providecommand{\definitionname}{Definition}
\providecommand{\lemmaname}{Lemma}
\providecommand{\notationname}{Notation}
\providecommand{\remarkname}{Remark}
\providecommand{\problemname}{Problem}
\providecommand{\propositionname}{Proposition}
\providecommand{\examplename}{Example}
\providecommand{\theoremname}{Theorem}
\providecommand{\conjecturename}{Conjecture}
\definecolor{darkgreen}{RGB}{0,125,0}
\newcounter{vlNoteCounter}
\newcounter{vkNoteCounter}
\newcounter{refNoteCounter}
\newsavebox{\AntiL}
\savebox{\AntiL}{
\begin{tikzpicture}[node distance = 5mm and 1cm,start chain=going right, leaf/.style={on chain,join}, dot/.style={fill=black,circle,scale=0.5,on chain, join}]
{[start chain = trunk]
\node [dot] {};
{[start branch = 1 going below]
  \node [leaf]{$\frac{1}{2}$}; 
}
\node [dot]{};
{[start branch = 2 going below]
  \node [leaf]{$\frac{D-2}{2(D-1)}$}; 
}
\node [dot]{};
{[start branch = 3 going below]
  \node [leaf]{$\frac{D-3}{2(D-1)}$}; 
}
\node [dot]{};
{[start branch = 2 going below]
  \node [leaf,minimum size = 7mm]{$\dots$}; 
}
\node [dot]{};
{[start branch = 4 going below]
  \node [leaf,minimum size = 6mm]{$0$}; 
}
\node [leaf]{1};
}
\end{tikzpicture}
}
\newsavebox{\LinBound}
\savebox{\LinBound}{
\begin{tikzpicture}[node distance = 5mm and 1cm,start chain=going right, leaf/.style={on chain,join}, dot/.style={fill=black,circle,scale=0.5,on chain, join}]
{[start chain = trunk]
\node [dot] {};
{[start branch = 1 going above]
  \node [leaf]{$u_4$};
 }
{[start branch = 1 going below]
  \node [leaf]{$0$};  
}
\node [dot]{};
{[start branch = 2 going above]
  \node [leaf]{$u_3$};
 }
{[start branch = 2 going below]
  \node [leaf]{$0$}; 
}
\node [dot]{};
{[start branch = 3 going above]
  \node [leaf]{$u_2$};
 }
{[start branch = 3 going below]
  \node [leaf]{$0$}; 
}
\node [dot]{};
{[start branch = 4 going above]
  \node [leaf]{$u_1$};
 }
{[start branch = 4 going below]
  \node [leaf]{$0$}; 
}
\node [dot]{};
{[start branch = 5 going below]
  \node [leaf,minimum size = 6mm]{$0$}; 
}
\node [leaf]{1};
}
\end{tikzpicture}
}
\begin{document}

\begin{titlepage}
\title{Analysis of Hannan Consistent Selection for Monte Carlo\\ Tree Search in Simultaneous Move Games}

\author{Vojt\v{e}ch Kova\v{r}\'{i}k, Viliam Lis\'{y}
		\medskip \\
        vojta.kovarik@gmail.com, viliam.lisy@agents.fel.cvut.cz
		\bigskip \\
        Artificial Intelligence Center, Department of Computer Science\\
		Faculty of Electrical Engineering, Czech Technical University in Prague\\
		Zikova 1903/4, Prague 6, 166 36, Czech Republic \\       
}
	
\date{2017}

\maketitle

\begin{abstract}
Hannan consistency, or no external regret, is a~key concept for learning in games.
An action selection algorithm is Hannan consistent (HC) if its performance is eventually as good as selecting the~best fixed action in hindsight.
If both players in a~zero-sum normal form game use a~Hannan consistent algorithm, their average behavior converges to a~Nash equilibrium (NE) of the~game.
A similar result is known about extensive form games, but the~played strategies need to be Hannan consistent with respect to the~counterfactual values, which are often difficult to obtain.
We study zero-sum extensive form games with simultaneous moves, but
otherwise perfect information. These games generalize normal form games and they are 
a special case of extensive form games.
We study whether applying HC algorithms in each decision point of these games directly to the~observed payoffs leads to convergence to a~Nash equilibrium.
This learning process corresponds to a~class of Monte Carlo Tree Search algorithms, which are popular for playing simultaneous-move games but do not have any known performance guarantees.
We show that using HC algorithms directly on the~observed payoffs is not sufficient to guarantee the~convergence. With an~additional averaging over joint actions, the~convergence is guaranteed, but empirically slower. We further define an~additional property of HC algorithms, which is sufficient to guarantee the~convergence without the~averaging and we empirically show that commonly used HC algorithms have this property.
\end{abstract}

\end{titlepage}

\setcounter{page}{2}

\section{Introduction\label{sec: Intro}}

Research on learning in games led to multiple algorithmic advancements, such as the~variants of the~counterfactual regret minimization algorithm \citep{zinkevich2007regret}, which allowed for achieving human performance in poker \citep{DeepStack,brown2018libratus}.
Learning in games has been extensively studied in the~context of normal form games, extensive form games, as well as Markov games \citep{fudenberg1998theory,littman1994markov}.

One of the~key concepts in learning in games is Hannan consistency \citep{hannan1957approximation,hart2000simple,cesa2006prediction}, also known as no external regret. An~algorithm for repetitively selecting actions from a~fixed set of options is Hannan consistent (HC), if its performance approaches the~performance of selecting the~best fixed option all the~time.
This property was first studied in normal form games known to the~players, but it has later been shown to be achievable also if the~algorithm knows only its own choices and the~resulting payoffs~\citep{Auer2003Exp3}.

If both players use a~Hannan consistent algorithm in self-play in a~zero-sum normal form game, the~empirical frequencies of their action choices converge to a~Nash equilibrium (NE) of the~game\footnote{All of the games discussed in the paper might have multiple Nash equilibria, and regret minimization causes the empirical strategy to approach the set of these equilibria. For brevity, we will write that a strategy ``converges to a NE'' to express that there is eventually always some (but not necessarily always the same) NE close to the strategy.} \citep{Blum07}. A~similar convergence guarantee can be provided for zero-sum imperfect information extensive form games \citep{zinkevich2007regret}. However, it requires the~algorithms selecting actions in each decision point to be Hannan consistent with respect to counterfactual values. Computing these values requires either an~expensive traversal of all states in the~game or a~sampling scheme with reciprocal weighting~\citep{lanctot2009monte} leading to high variance and slower learning theoretically~\citep{gibson2012generalized} and in larger games even practically~\citep{bosansky2016aij}. Therefore, using them should be avoided in simpler classes of games, where they are not necessary.

In this paper, we study zero-sum extensive form games with simultaneous moves, but otherwise perfect information (SMGs).
These games generalize normal form games but are a~special case of extensive form games.
Players in SMGs simultaneously choose an~action to play, after which the~game either ends and the~players receive a~payoff, or the~game continues to a~next stage, which is again an~SMG.
These games are extensively studied in game playing literature (see \citealt{bosansky2016aij} for a~survey). They are also closely related to Markov games \citep{littman1994markov}, but they do not allow cycles in the~state space and the~players always receive a~reward only at the~end of the~game.
We study whether applying HC algorithms in each decision point of an~SMG directly to the~observed payoffs leads to convergence to a~Nash equilibrium as in normal form games, or whether it is necessary to use additional assumptions about the~learning process.

Convergence properties of Hannan consistent algorithms in simultaneous-move games are interesting also because using a~separate HC algorithm in each decision point directly corresponds to popular variants of Monte Carlo Tree Search (MCTS) \citep{coulom2006efficient,kocsis2006} commonly used in SMGs. In SMGs, these algorithms have been used for playing card games \citep{teytaud2011upper,lanctot2013goof}, variants of simple computer games \citep{Perick12Comparison}, and in the~most successful agents for General Game Playing \citep{Finnsson08}.
However, there are no known theoretical guarantees for their performance in this class of games. The main goal of the paper is to understand better the conditions under which MCTS-like algorithms perform well when applied to SMGs.

\subsection{Contributions}
We focus on two player zero-sum extensive form games with simultaneous moves but otherwise perfect information.
We denote a~class of algorithms that learn a~strategy from a~sequence of simulations of self-play in these games as SM-MCTS. These algorithms use simple \emph{selection policies} to chose next action to play in each decision point based on statistics of rewards achieved after individual actions. We study whether Hannan consistency of the~selection policies is sufficient to guarantee convergence of SM-MCTS to a~Nash equilibrium of the~game. We also investigate whether a similar relation holds between the approximate versions of Hannan consistency and Nash equilibria.

In Theorem~\ref{thm: counterexample}, we prove a~negative result, which is in our opinion the~most surprising and interesting part of the~paper, which is the~fact that Hannan consistency alone does not guarantee a~good performance of the~standard SM-MCTS. We present a~Hannan consistent selection policy that causes SM-MCTS to converge to a~solution far from the~equilibrium.

We show that this could be avoided by a slight modification to SM-MCTS (which we call SM-MCTS-A), which updates the~selection policies by the~average reward obtained after a~joined action of both players in past simulations, instead of the~result of the~current simulation.
In Theorem~\ref{thm: SM-MCTS-A convergence}, we prove that SM-MCTS-A combined with any Hannan consistent (resp. approximately HC) selection policy with guaranteed exploration converges to a~subgame-perfect Nash equilibrium (resp. an~approximate subgame-perfect NE) in this class of games.
For selection policies which are only approximately HC, we present bounds on the~eventual distance from a~Nash equilibrium.

In Theorem~\ref{thm: SM-MCTS convergence}, we show that under additional assumptions on the~selection policy, even the~standard SM-MCTS can guarantee the~convergence. We do this by defining the~property of having unbiased payoff observations (UPO), and showing that it is a~sufficient condition for the~convergence.
We then empirically confirm that the~two commonly used Hannan consistent algorithms, Exp3 and regret matching, satisfy this property, thus justifying their use in practice.
We further investigate the~empirical speed of convergence and show that the~eventual distance from the~equilibrium is typically much better than the~presented theoretical guarantees.
We empirically show that SM-MCTS generally converges as close to the~equilibrium as SM-MCTS-A, but does it faster.

Finally, we give theoretical grounds for some practical improvements, which may be used with SM-MCTS, but have not been formally justified. These include removal of exploration samples from the~resulting strategy and the~use of average strategy instead of empirical frequencies of action choices.
All presented theoretical results trivially apply also to perfect information games with sequential moves. The~counterexample in Theorem~\ref{thm: counterexample}, on the~other hand, applies to Markov and general extensive form games.

\subsection{Article outline}
In Section~\ref{sec: background} we describe simultaneous-move games, the~standard class of SM-MCTS algorithms, and our modification SM-MCTS-A. We then describe the~multi-armed bandit problem, the~definition of Hannan consistency, and explain two of the~common Hannan consistent bandit algorithms (Exp3 and regret matching).
We also explain the~relation of the~studied problem with counterfactual regret minimization, multi-agent reinforcement learning, and shortest path problems.

In Section \ref{sec: application to SM-MCTS(-A)}, we explain the relation of SM-MCTS setting with the~multi-armed bandit problem. In this section, we also define the technical notation which is used in the proofs of our main results.

In Section~\ref{sec:Counterexample}, we provide a~counterexample showing that for general Hannan consistent algorithms, SM-MCTS does not necessarily converge. While interesting in itself, this result also hints at which modifications are sufficient to get convergence.
The~positive theoretical results are presented in Section~\ref{sec: convergence}. First, we consider the~modified SM-MCTS-A and present the~asymptotic bound on its convergence rate.
We follow by defining the~unbiased payoff observations property and proving the~convergence of SM-MCTS based on HC selection policies with this property. We then present an~example which gives a~lower bound on the~quality of a~strategy to which SM-MCTS(-A) converges.

In Section~\ref{sec: exploitability}, we make a~few remarks about which strategy should be considered as the~output of SM-MCTS(-A).
In Section~\ref{sec:Experimental}, we present an~empirical investigation of convergence of SM-MCTS and SM-MCTS-A, as well as empirical evidence supporting that the commonly used HC-algorithms guarantee the~UPO property. Finally, Section~\ref{sec:Discussion} summarizes the~results and highlights open questions which might be interesting for future research.
Table~\ref{tab:notation} then recapitulates the notation and abbreviations used throughout the paper.

\section{Background\label{sec: background}}
We now introduce the~game theory fundamentals and notation used throughout the paper. We define simultaneous-move games, describe
the class of SM-MCTS algorithms and its modification SM-MCTS-A, and afterward, we discuss existing selection policies and their properties.

\subsection{Simultaneous move games}\label{sec:SM games}

A finite two-player zero-sum game with perfect information and simultaneous moves
can be described by a~tuple $G = (\mathcal{H},\mathcal{Z},\mathcal{A},\mathcal{T},u,h_{0})$,
where $\mathcal{H}$ is a~set of inner states and $\mathcal{Z}$ denotes the~terminal states\footnote{In our analysis, we will assume that the~game does not contain chance nodes, where the~next node is randomly selected by "nature" according to some probability distribution. We do this purely to reduce the~amount of technical notation required -- as long as the state-space is finite, all of the~results hold even if the~chance nodes are present. This can be proven in a~straightforward way.}.
$\mathcal{A}=\mathcal{A}_{1}\times\mathcal{A}_{2}$ is the~set of
joint actions of individual players and we denote $\mathcal{A}_{1}(h)=\{1,\dots, m^{h}\}$ and $\mathcal{A}_{2}(h)=\{1,\dots, n^{h}\}$ the~actions available to
individual players in state $h\in\mathcal{H}$. The~game begins in an~initial state $h_{0}$. The~transition function $\mathcal{T}:\mathcal{H}\times\mathcal{A}_{1}\times\mathcal{A}_{2}\mapsto\mathcal{H} \cup\mathcal{Z}$ defines the~successor state given a~current state and actions for both players. For brevity, we denote $\mathcal{T}(h,i,j)\equiv h_{ij}$.
We assume that all the~sets are finite and that the~whole setting is modeled as a~tree.\footnote{That is, if $(h,i,j)\neq(h',i',j')$ then $h_{ij}\neq h_{i'j'}$.}
The payoffs of player 1 and 2 are defined by the utility function $u = (u_1,u_2) :\mathcal{Z}\to [0,1]^2$. We assume constant-sum games (which are equivalent to zero-sum games): $\forall z\in\mathcal{Z},u_{2}(z)=1-u_{1}(z).$

A \textit{matrix game} is a~special case of the~setting above, where $G$ only has a~single stage, $h$ -- we have $\mathcal{H}=\{h\}$. In this case we can simplify the~notation and represent the~game by the~matrix $M=(v^M_{ij})_{i,j}\in[0,1]^{m\times n}$, where for $(i,j)\in\mathcal{A}_1\times\mathcal{A}_2=\{1,\dots,m\}\times\{1,\dots,n\}$, we have $u_1(h_{ij})=v^M_{ij}$. In other words,  $v^M_{ij}$ corresponds to the~payoff received by player 1 if player 1 chooses the~row $i$ and player 2 chooses the~column $j$.
A \textit{strategy} $\sigma_{p}\in\Delta(\mathcal{A}_{p})$ is a~distribution over the
actions in $\mathcal{A}_{p}$. If $\sigma_{1}$ is represented as
a row vector and $\sigma_{2}$ as a~column vector, then the~expected
value to player 1 when both players play with these strategies is
$u_{1}(\sigma_{1},\sigma_{2})=\sigma_{1}M\sigma_{2}$. Given a~profile
$\sigma=(\sigma_{1},\sigma_{2})$, define the~utilities against best
response strategies to be $u_{1}(br,\sigma_{2})=\max_{\sigma_{1}'\in\Delta(\mathcal{A}_{1})}\sigma_{1}'M\sigma_{2}$
and $u_{1}(\sigma_{1},br)=\min_{\sigma_{2}'\in\Delta(\mathcal{A}_{2})}\sigma_{1}M\sigma_{2}'$.
A strategy profile $(\sigma_{1},\sigma_{2})$ is an~$\epsilon$\textit{-Nash equilibrium} of the~matrix game $M$ if and only if 
\begin{equation}
u_{1}(br,\sigma_{2})-u_{1}(\sigma_{1},\sigma_{2})\leq\epsilon\hspace{1cm}\mbox{and}\hspace{1cm}u_{1}(\sigma_{1},\sigma_{2})-u_{1}(\sigma_{1},br)\leq\epsilon\label{eq:nfgNE}
\end{equation}

A \textit{behavioral strategy} for player $p$ is a~mapping from states
$h\in\mathcal{H}$ to a~probability distribution over the~actions
$\mathcal{A}_{p}(h)$, denoted $\sigma_{p}(h)$. Given a~profile $\sigma=(\sigma_{1},\sigma_{2})$,
define the~probability of reaching a~terminal state $z$ under $\sigma$
as $\pi^{\sigma}(z)=\pi^{\sigma}_{1}(z)\pi^{\sigma}_{2}(z)$, where each $\pi^{\sigma}_{p}(z)$ is a~product of probabilities of the~actions taken by player $p$
along the~path to $z$. Define $\Sigma_{p}$ to be the~set of behavioral
strategies for player $p$. Then for any strategy profile $\sigma=(\sigma_{1},\sigma_{2})\in\Sigma_{1}\times\Sigma_{2}$
we define the~expected utility of the~strategy profile (for player
1) as 
\begin{equation}
u(\sigma)=u(\sigma_{1},\sigma_{2})=\sum_{z}\pi^{\sigma}(z)u_{1}(z)
\end{equation}
An $\epsilon$-Nash equilibrium profile ($\sigma_{1},\sigma_{2}$)
in this case is defined analogously to (\ref{eq:nfgNE}). In other
words, none of the~players can improve their utility by more than
$\epsilon$ by deviating unilaterally. If $\sigma=(\sigma_{1},\sigma_{2})$
is an~exact Nash equilibrium (an~$\epsilon$-NE with $\epsilon=0$),
then we denote the~unique value of $G$ as $v^G=v^{h_0}=u(\sigma_{1},\sigma_{2}) = \min_{\sigma'_2} \max_{\sigma'_1} u(\sigma'_1,\sigma'_2)$ (because of the last identity, NE in this setting could also be called ``minimax optima'').

A \textit{subgame rooted in state} $h\in\mathcal{H}$ is the~game $(\mathcal{N},\mathcal{H},\mathcal{Z},\mathcal{A},\mathcal{T},u,h)$. This is the~same game as the~original one, except that it starts at the~node $h$ instead of $h_0$. We denote by $v^{h}$ the~value of this subgame. An~$\epsilon$-Nash equilibrium profile $\sigma$ is called \textit{subgame-perfect} if $\sigma$ is  $\epsilon$-Nash equilibrium of the~subgame rooted at $h$ for every $h\in\mathcal{H}$ (even at the~nodes which are unreachable under this strategy).

Two-player perfect information games with simultaneous moves are sometimes
appropriately called \textit{stacked matrix games} because at every
state $h$ there is a~joint action set $\mathcal{A}_{1}(h)\times\mathcal{A}_{2}(h)$
where each joint action $(i,j)$ either leads to a~terminal state with utility $u_1(h_{ij})$ or to a~subgame rooted in $h_{ij}$. This subgame is itself another stacked matrix game and its unique value $v^{h_{ij}}$ can be determined by backward induction (see Figure~\ref{fig:tree}). Thus finding the~optimal strategy at $h$ is the~same as finding the~optimal strategy in the~matrix game $(v^h_{ij})_{i,j}$, where $v^h_{ij}$ is either $u_1(h_{ij})$ or $v^{h_{ij}}$.

\begin{figure}
\centering \includegraphics[width=0.6\textwidth]{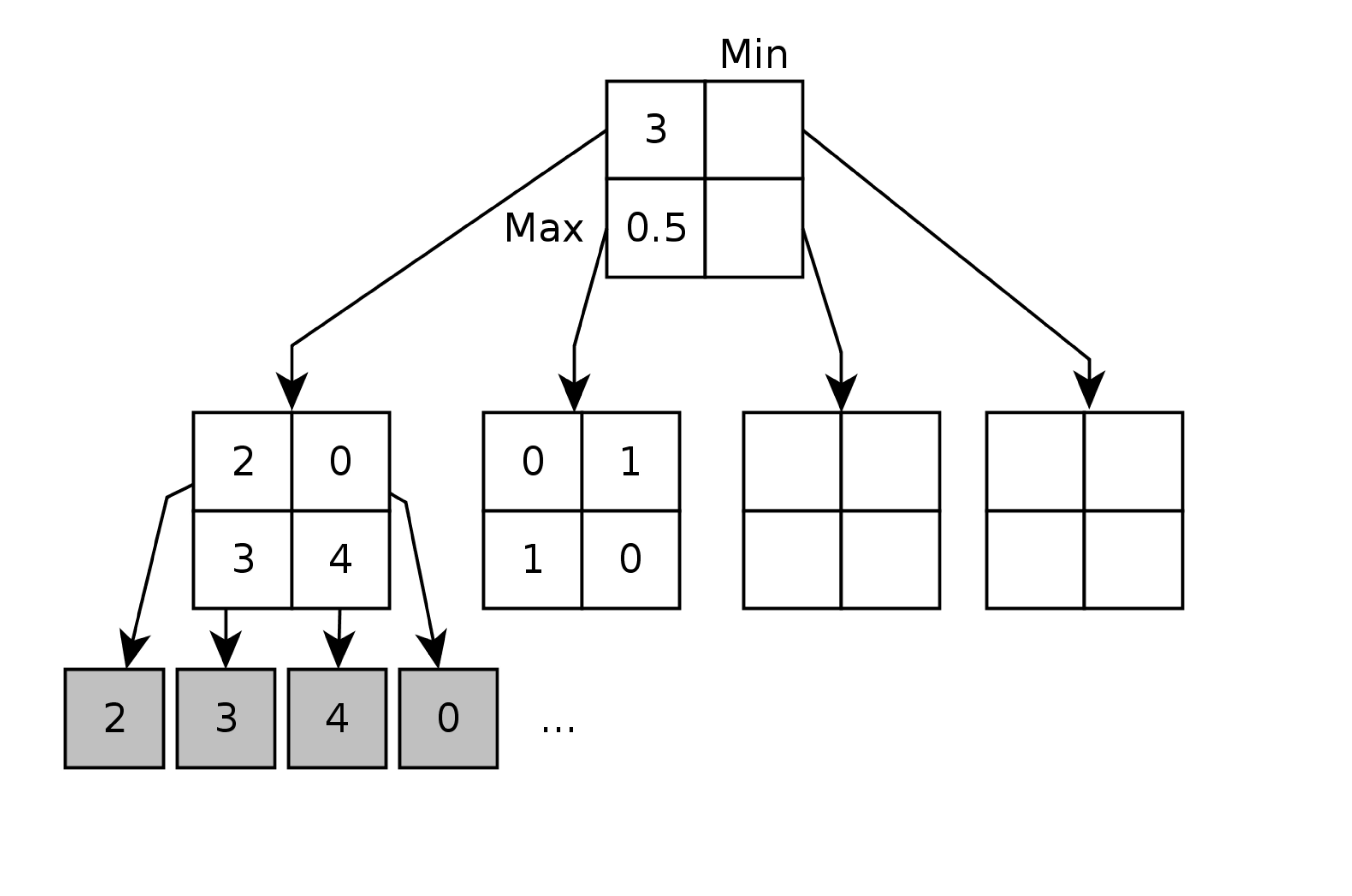}
\caption{Example game tree of a~game with perfect information and simultaneous
moves. Only the~leaves contain actual rewards  -- the~values in
the inner nodes are achieved by optimal play in the~corresponding
subtree, they are not part of the~definition of the~game.}\label{fig:tree}
\end{figure}

\subsection{Simultaneous Move Monte Carlo Tree Search} \label{section: SM-MCTS definition}

Monte Carlo Tree Search (MCTS) is a~simulation-based state space search
algorithm often used in game trees.
The main idea is to iteratively run simulations from the current state $h_0$ until the~end of the game, incrementally growing a~tree $\mathbf T \subset \cH \cup \cZ$ rooted at the~current state.
In the~basic form of the~algorithm, $\mathbf T$ only contains $h_0$ and a~single leaf is added each iteration.
Each iteration starts at $h_0$.
If MCTS encounters a node $h \in \mathbf T$ whose all children in $\cH \cup \cZ$ are already in $\mathbf T$, it uses the statistics maintained at $h$ to select one of its children that it then transitions to.
If MCTS encounters a node $h \in \mathbf T$ that has children that aren't yet in $\mathbf T$, it adds one of them to $\mathbf T$ and transitions to it.
Then we apply a~rollout policy (for example, a~random action selection) from this new leaf of $\mathbf T$ to some terminal state of the~game.
The outcome of the~simulation is then returned as a~reward to the~new leaf and all its predecessors.

In Simultaneous Move MCTS (SM-MCTS), the~main difference is that a~joint action of both players is selected and used to transition to a~following state. The~algorithm has been previously
applied, for example in the~game of Tron~\citep{Perick12Comparison}, Urban Rivals~\citep{teytaud2011upper}, and in general game-playing~\citep{Finnsson08}.
However, guarantees of convergence to a~NE remain unknown, and \citet{Shafiei09} show that the~most popular selection policy (UCB) does not converge to a~NE, even in a~simple one-stage game.
The convergence to a~NE depends critically on the~selection and update policies applied. We describe variants of two popular selection algorithms in Section \ref{sec: bandits}. 

\subsubsection*{Description of SM-MCTS algorithm}
In Algorithm~\ref{alg:smmcts}, we present a~generic template of MCTS algorithms for simultaneous-move games (SM-MCTS). We then proceed to explain how specific algorithms are derived from this template. 

\renewcommand{\algorithmiccomment}[1]{\hfill\eqparbox{COMMENT}{\texttt{// #1}}}
\newcommand{\LineIf}[2]{\STATE \algorithmicif\ {#1}\ \algorithmicthen\ {#2}}
\begin{algorithm}[b]
\centering
\begin{minipage}{0.9\linewidth}
\textbf{SM-MCTS(}$h$ -- current state of the~game\textbf{)}
\begin{algorithmic}[1]
\IF[current state is terminal]{$h \in \mathcal{Z}$}
	\RETURN $u_1(h)$ \label{alg:smmcts:terminal}
\ENDIF
\IF[current state is in memory]{$h \in \mathbf{T}$}
  \STATE $(i, j) \leftarrow$ \emph{\underline{Select}}$(h)$\label{alg:smmcts:select}
  \STATE $h' \leftarrow h_{ij}$\COMMENT{get the~next game state} \label{alg:smmcts:getnext}
  \STATE $x \leftarrow $ SM-MCTS($h'$)\label{alg:smmcts:reccall}
  \STATE \emph{\underline{Update}}$(h,i,j,x)$\label{alg:smmcts:up}
  \RETURN $x$\label{alg:smmcts:return}
\ELSE[state is not in memory]
  \STATE $\mathbf{T} \leftarrow \mathbf{T} \cup \lbrace h \rbrace$\COMMENT{add state to memory}\label{alg:smmcts:addstate}
  \STATE $x \leftarrow$ Rollout($h$)\COMMENT{finish the~game randomly} \label{alg:smmcts:rollout}
  \RETURN $x$
\ENDIF
\end{algorithmic}
\end{minipage}
\caption{Simultaneous Move Monte Carlo Tree Search}\label{alg:smmcts}
\end{algorithm}

Algorithm~\ref{alg:smmcts} describes a~single iteration of SM-MCTS. $\mathbf{T}$ represents the~incrementally built MCTS tree, in which each state is represented by one node. Every node $h\in \mathbf T$ maintains algorithm-specific statistics about the~iterations that previously used this node. In the~terminal states, the~algorithm returns the~value of the~state for the~first player (line~\ref{alg:smmcts:terminal}). If the~current state has a~node in the~current MCTS tree $\mathbf{T}$, the~statistics in the~node are used to select an~action for each player (line~\ref{alg:smmcts:select}). These actions are executed (line~\ref{alg:smmcts:getnext}) and the~algorithm is called recursively on the~resulting state (line~\ref{alg:smmcts:reccall}). The~result of this call is used to update the~statistics maintained for state $h$ (line~\ref{alg:smmcts:up}). If the~current state is not stored in tree $\mathbf{T}$, it is added to the~tree (line~\ref{alg:smmcts:addstate}) and its value is estimated using the~rollout policy (line~\ref{alg:smmcts:rollout}). The~rollout policy is usually uniform random action selection until the~game reaches a~terminal state, but it can also be based on domain-specific knowledge. Finally, the~result of the~Rollout is returned to higher levels of the~tree.

The template can be instantiated by specific implementations of the~updates of the~statistics on line~\ref{alg:smmcts:up} and the~selection based on these statistics on line \ref{alg:smmcts:select}. Selection policies can be based on many different algorithms, but the~most successful ones use algorithms for solving the multi-armed bandit problem introduced in Section~\ref{sec: bandits}.
Specifically, we firstly decide on an~algorithm $A$ and choose all of its parameters except for the~number of available actions. Then we run a~separate instance of this algorithm for each node $h \in \mathbf{T}$ and each of the~players. In each node, the~action for each player is selected based only on the history. The~update procedure then uses the~values $u_1(h)$ and and $1-u_1(h)$ to update the~statistics of player 1 and 2 at $h$ respectively (this way, both the~values are in the~interval $[0,1]$).

This work assumes that, except for the SM-MCTS back-propagation, the selection algorithms do not communicate with each other in any way.
We refrain from analyzing the (potentially more powerful) algorithms which have access to non-local variables (e.g. a~global clock, abstraction-based information sharing) because such analysis would be significantly more complicated, but also because the primary goal of this paper is to understand the limits of the simpler selection policies.

\subsubsection*{SM-MCTS-A algorithm}

\renewcommand{\algorithmiccomment}[1]{\hfill\eqparbox{COMMENT}{\texttt{// #1}}}
\begin{algorithm}[b]
\centering
\begin{minipage}{0.9\linewidth}
\textbf{SM-MCTS-A(}$h$ -- current state of the~game\textbf{)}
\begin{algorithmic}[1]
\IF[current state is terminal]{$h \in \mathcal{Z}$}
	\RETURN $(u_1(h),u_1(h))$ \label{alg:smmcts-A:terminal}
\ENDIF
\IF[current state is in memory]{$h \in \mathbf{T}$}
  \STATE $(i, j) \leftarrow$ \emph{\underline{Select}}$(h)$		\COMMENT{generate actions}
  \STATE $t^h \leftarrow t^h+1$  				\COMMENT{increase visit counter}\label{alg:smmcts-A:updatet}
  \STATE $(x,\overline{x}) \leftarrow $ SM-MCTS($h_{ij}$)	\COMMENT{get the latest and average reward}\label{alg:smmcts-A:reccall}
  \STATE \emph{\underline{Update}}$(h,i,j,\overline x)$ \COMMENT{update by the average reward}\label{alg:smmcts-A:update}
  \STATE $G^h \leftarrow G^h + x$ 			\COMMENT{update cumulative reward}\label{alg:smmcts-A:updateG}
  \RETURN $(x, G^h/t^h)$				\COMMENT{return reward \& current state average reward}\label{alg:smmcts-A:return}
\ELSE[state is not in memory]
  \STATE $\mathbf{T} \leftarrow \mathbf{T} \cup \lbrace h \rbrace$\COMMENT{add state to memory}
  \STATE $t^h$, $G^h$ $\leftarrow 0$	\COMMENT{initialize visit counter and cumulative reward}
  \STATE $x \leftarrow$ Rollout($h$)\COMMENT{finish the~game randomly}
  \RETURN $(x,x)$
\ENDIF
\end{algorithmic}
\end{minipage}
\caption{SM-MCTS-A, a variant of SM-MCTS which is guaranteed to converge to an equilibrium (under the assumptions of Theorem \ref{thm: SM-MCTS-A convergence}).}\label{alg:smmcts-A}
\end{algorithm}

We also propose an~``averaged'' variant of the~algorithm, which we denote as SM-MCTS-A (see Algorithm \ref{alg:smmcts-A}). Its main advantage over SM-MCTS is that it guarantees convergence to a~NE under much weaker conditions (Theorem \ref{thm: SM-MCTS-A convergence}).
Focusing on the differences between these algorithms helps us better understand what is missing to guarantee convergence of SM-MCTS.

SM-MCTS-A works similarly to SM-MCTS, except that during the back-propagation phase, every visited node $h\in \mathbf{T}$ sends back to its parent the~average reward obtained from $h$ so far (denoted $G^h/t^h$). The parent is then updated by this ``averaged'' reward $G^h/t^h$, rather than by the current reward $x$ (line~\ref{alg:smmcts-A:update}). In practice, this is achieved by doing a bit of extra book-keeping --- storing the cumulative reward $G^h$ and the number of visits $t^h$ and updating them during each visit of $h$ (lines~\ref{alg:smmcts-A:updatet} and \ref{alg:smmcts-A:updateG}) and back-propagating both the current and average rewards (line~\ref{alg:smmcts-A:return}).

We note that in our previous work \citep{lisy2013convergence} we prove a~result similar to Theorem \ref{thm: SM-MCTS-A convergence} here. However, the~algorithm that we used earlier is different from SM-MCTS-A algorithm described here. In particular, SM-MCTS-A uses averaged values for decision making in each node but propagates backward the~non-averaged values (unlike the~previous version, which also updates the~selection algorithm based on the~averaged values, but then it propagates backward these averaged numbers -- and on the~next level, it takes averages of averages and so on). Consequently, this new version is much closer to the~non-averaged SM-MCTS used in practice, and it has faster empirical convergence.

\subsection{Multi-armed bandit problem} \label{sec: bandits}
The multi-armed bandit (MAB) problem is one of the~basic models in online learning. It often serves as the~basic model for studying fundamental trade-offs between exploration and exploitation in an~unknown environment \citep{auer1995gambling,auer2002finite}. In practical applications, the~algorithms developed for this model have recently been used in online advertising \citep{pandey2007}, convex optimization \citep{flaxman2005online}, and, most importantly for this paper, in Monte Carlo tree search algorithms \citep{kocsis2006,browne2012survey,gelly2011monte,teytaud2011upper,coulom2007efficient}.
More details can be found in an~extensive survey of the~field by \citet{bubeck2012survey}.

\begin{definition}[Adversarial multi-armed bandit problem]\label{def: MABproblem}
Multi-armed bandit problem is specified by a~set of actions $\{1,\dots, K\}$ and a~sequence of reward vectors $x(1), x(2), \dots$, where $x(t) = ( x_i(t) )_{i=1}^K \in[0,1]^K$ for each $t\in\mathbb N$. In each time step, an~agent selects an~action $i(t)$ and receives the~payoff $x_{i(t)}(t)$.

\emph{Note that the~agent does not observe the~values $x_{i}(t)$ for $i \neq i(t)$. There are many special cases and generalizations of this setting (such as the~stochastic bandit problem), however, in this paper, we only need the~following variant of this concept:}

The (adaptive) \emph{adversarial} MAB problem $(P)$ is identical to the~setting above, except that each $x_i(t+1)$ is a random variable that might depend on $x(1),\dots,x(t)$ and $i(1),\dots,i(t)$.

A bandit algorithm is any procedure which takes as an~input the~number $K\in\mathbb N$, the~sequence of actions $i(1),...,i(t)$ played thus far and the~rewards
received as a~result and returns the~next action $i(t+1)\in\{1,...,K\}$ to be played.
\end{definition}

Bandit algorithms usually attempt to optimize their behavior with respect to some notion of regret. Intuitively, their goal is the~minimization of the~difference between playing the~strategy given by the~algorithm and playing some baseline strategy, which can use information not available to the~agent. For example, the~most common notion of regret is the~\emph{external regret}, which is the~difference between playing according to the~prescribed strategy and playing the~fixed optimal action all the~time.

\begin{definition}[External Regret]\label{def:extRegret}
The \emph{external regret} for playing a~sequence of actions $i(1)$, $i(2)$, \dots , $i(t)$ in $(P)$ is defined as
\[ R(t) = \max_{i=1,\dots,K}\sum_{s=1}^t x_i(s) - \sum_{s=1}^t x_{i(s)}(s). \]
By $r(t)$ we denote the~average external regret $r(t):=\frac 1 t R(t)$.
\end{definition}

\subsection{Hannan consistent algorithms}
A desirable goal for any bandit algorithm is the~classical notion of Hannan consistency. Having this property means that for high enough $t$, the~algorithm performs nearly as well as it would if it played the~optimal constant action since the~beginning.
\begin{definition}[Hannan consistency]\label{def:HC}
An algorithm is $\epsilon$-Hannan consistent for some $\epsilon\geq0$ if $\limsup_{t\rightarrow\infty} r(t)\leq \epsilon$ holds with probability 1, where the~``probability'' is understood with respect to the~randomization of the~algorithm. An~algorithm is Hannan consistent if it is 0-Hannan consistent.
\end{definition}

We now present regret matching and Exp3, two of the~$\epsilon$-Hannan consistent algorithms previously used in MCTS. The~proofs of Hannan consistency of variants of these two algorithms, as well as more related results, can be found in a~survey by \citet[Section 6]{cesa2006prediction}. The~fact that the~variants presented here are $\epsilon$-HC is not explicitly stated there, but it immediately follows from the~last inequality in the~proof of Theorem 6.6 in the~survey.

\subsubsection{Exponential-weight algorithm for Exploration and Exploitation \label{sec:exp3}}

\begin{algorithm}
\centering
\begin{minipage}{0.9\linewidth}
\begin{algorithmic}[1]
\REQUIRE{$K$ - number of actions; $\gamma$ - exploration parameter}
\STATE $\forall i \, : \ G_i \leftarrow 0$\COMMENT{initialize cumulative sum estimates}
\FOR[in each iteration] {$t \leftarrow 1,2,\dots$}
  \STATE $\forall i \, : \ \mu_{i} \leftarrow \frac{\exp(\tfrac{\gamma}{K} G_i)}{\sum_{j=1}^K \exp(\tfrac{\gamma}{K} G_j)}$\COMMENT{compute the~new strategy}
  \STATE $\forall i \, : \mu'_i \leftarrow (1-\gamma)\mu_i+\tfrac{\gamma}{K}$\COMMENT{add uniform exploration}
  \STATE Sample action $i(t)$ from distribution $\mu'$ and receive reward $r$
  \STATE $G_{i(t)} \leftarrow G_{i(t)} + r / \mu'_{i(t)}$\COMMENT{update the~cumulative sum estimates}
\ENDFOR
\end{algorithmic}
\end{minipage}
\caption{Exponential-weight algorithm for Exploration and Exploitation (Exp3) algorithm for regret minimization in adversarial bandit setting}
\label{alg:exp3}
\end{algorithm}

The most popular algorithm for minimizing regret in adversarial bandit setting is the~Ex\-ponential-weight algorithm for Exploration and Exploitation (Exp3) proposed by \cite{Auer2003Exp3}, further improved by \cite{stoltz2005incomplete} and then yet further by \citet[Sec. 3]{bubeck2012survey}. The~algorithm has many different variants for various modifications of the~setting and desired properties. We present a~formulation of the~algorithm based on the~original version in Algorithm~\ref{alg:exp3}.

Exp3 stores the~estimates of the~cumulative reward of each action over all iterations, even those in which the~action was not selected. In the~pseudo-code in Algorithm~\ref{alg:exp3}, we denote this value for action $i$ by $G_i$. It is initially set to $0$ on line~1. In each iteration, a~probability distribution $\mu$ is created proportionally to the~exponential of these estimates. The~distribution is combined with a~uniform distribution with probability $\gamma$ to ensure sufficient exploration of all actions (line~4). After an~action is selected and the~reward is received, the~estimate for the~performed action is updated using \emph{importance sampling} (line~6): the~reward is weighted by one over the~probability of using the~action.
As a~result, the~expected value of the~cumulative reward estimated only from the~time steps where the~agent selected the~action is the~same as the~actual cumulative reward over all the~time steps.

In practice, the~optimal choice of the~exploration parameter $\gamma$ strongly depends on the~computation time available for each decision and the~specific domain \citep{Tak14smmcts}. However, the~amount of exploration $\gamma$ directly translates to $\gamma$-Hannan consistency of the~algorithm.
We later show that, asymptotically, smaller $\gamma$ yields smaller worst-case error when the~algorithm is used in SM-MCTS.

\subsubsection{Regret matching \label{sec:rm}}
An alternative learning algorithm that allows minimizing regret in adversarial bandit setting is regret matching \citep{hart2001reinforcement}, later generalized as \emph{polynomially weighted average forecaster} \citep{cesa2006prediction}. Regret matching (RM) corresponds to selection of the~parameter $p=2$ in the~more general formulation. It is a~general procedure originally developed for playing known general-sum matrix games in \cite{hart2000simple}. The~algorithm computes, for each action in each step, the~regret for not playing another fixed action every time the~action has been played in the~past. The~action to be played in the~next round is selected randomly with probability proportional to the~positive portion of the~regret for not playing the~action.

The average strategy\footnote{The average strategy is defined as $\bar \mu(t) := \frac{1}{t} \sum_{s=1}^t \mu(s)$, where $\mu(s)$ is the strategy used at iteration $s$.} resulting from this procedure has been shown to converge to the~set of coarse correlated equilibria in general-sum games. As a~result, it converges to a~Nash equilibrium in a~zero-sum game. The~regret matching procedure in \cite{hart2000simple} requires the~exact information about all utility values in the~game, as well as the~action selected by the~opponent in each step. In \cite{hart2001reinforcement}, the~authors modify the~regret matching procedure and relax these requirements. Instead of computing the~exact values for the~regrets, the~regrets are estimated in a~similar way as the~cumulative rewards in Exp3. As a~result, the~modified regret matching procedure is applicable to the MAB problem.

\begin{algorithm}
\centering
\begin{minipage}{0.95\linewidth}
\begin{algorithmic}[1]
\REQUIRE{$K$ - number of actions; $\gamma$ - the~amount of exploration}
\STATE $\forall_i \, : \ R_i \leftarrow 0$\COMMENT{initialize regret estimates}
\FOR[in each iteration] {$t \leftarrow 1,2,\dots$}
  \STATE $\forall i \, : \ R_i^+ \leftarrow \max\{0,R_i\}$ 
  \IF[all regrets are non-positive] {$\sum_{j=1}^K R_j^+ = 0$}
    \STATE $\forall i \, : \ \mu_i \leftarrow 1/K$\COMMENT{use uniform strategy}
  \ELSE[otherwise, update the strategy based on regrets]
    \STATE $\forall i \, : \ \mu_i \leftarrow (1-\gamma)\frac{R_i^+}{\sum_{j=1}^K R_j^+} + \frac{\gamma}{K}$
  \ENDIF
  \STATE Sample action $i(t)$ from distribution $\mu$ and receive reward $r$
  \STATE $\forall i \, : \ R_i \leftarrow R_i - r$ \COMMENT{update regrets}
  \STATE $R_{i(t)} \leftarrow R_{i(t)} + r / \mu_{i(t)}$
\ENDFOR
\end{algorithmic}
\end{minipage}
\caption{Regret matching variant for regret minimization in adversarial bandit setting.}
\label{alg:banditRM}
\end{algorithm}

We present the~algorithm in Algorithm~\ref{alg:banditRM}.
The algorithm stores the~estimates of the~regrets for not playing action $i$ in all time steps in the~past in variables $R_i$.
On lines 3-7, it computes the~strategy for the~current time step.
If there is no positive regret for any action, a~uniform strategy is used (line~5).
Otherwise, the~strategy is chosen proportionally to the~positive part of the~regrets (line~7).
The uniform exploration with probability $\gamma$ is added to the~strategy as in the~case of Exp3. 
It also ensures that the~addition on line 10 is bounded.

\cite{cesa2006prediction} prove that regret matching eventually achieves zero regret in the~adversarial MAB problem, but they provide the~exact finite time bound only for the~perfect-information case, where the~agent learns rewards of all arms.

\subsection{An Alternative to SM-MCTS: Counterfactual Regret Minimization} \label{sec: CFR}
Counterfactual regret minimization (CFR) is an~iterative algorithm for computing approximate Nash equilibria in zero-sum extensive-form games with imperfect information (EFGs).
Since we use EFGs only at a~few places, we overload the~defined notation with corresponding concepts form EFGs.
In the EFG setting, the elements of $\mathcal H$ are typically called \emph{histories} rather than states. Unlike in SMGs, each non-terminal history $h\in \mathcal H$ only has a single acting player. The second difference is that in EFGs, histories are partitioned into \emph{information sets}. Instead of observing the current history $h$ directly, the acting player only sees the information set $I$ that $h$ belongs to.
The EFG framework is more general than the SMG one since each simultaneous decision in an~SMG can be modeled by two consecutive decisions in an~EFG (where the~player who acts second does not know the action chosen by the~first player).

Let $\sigma$ be the~strategy profile of the~players and denote $\pi^\sigma(h)$ the~probability of reaching $h\in \mathcal H$ from the~root of the~game under the~strategy profile $\sigma$, $\pi^\sigma(h,z)$ the~probability of reaching history $z$ given the~game has already reached history $h$. We use the~lower index at $\pi$ to denote the~players who contribute to the~probability, i.e., $\pi_p^\sigma(h)$ is player $p$'s contribution\footnote{It is the~product of the~probabilities of the~actions executed by player $p$ in history $h$.} to the~probability of reaching $h$ and $\pi_{-p}^\sigma(h)$ is the~contribution of the~opponent of $p$ and chance if it is present, i.e., $\pi^\sigma(h)=\pi_p^\sigma(h)\cdot\pi_{-p}^\sigma(h)$. We further denote by $ha$ the~history reached after playing action $a$ in history $h$.
The \emph{counterfactual value} of player $p$ playing action $a$ in an~information set $I$ under a~strategy $\sigma$ is the~expected reward obtained when player $p$ first chooses the~actions to reach $I$, plays action $a$, and then plays based on $\sigma_p$, while the~opponent and chance play according to $\sigma_{-p}$ all the~time:
\[ v_p^\sigma(I,a) = \sum_{(h,z)\in Z_I}\pi_{-p}^\sigma(h)\pi^\sigma(ha,z)u_p(z),\]
where $Z_I = \{(h,z)|z\in\mathcal{Z}, h\in I, h \text{ is prefix of } z\}$ are the~terminal histories that visit information set $I$ by a~prefix $h$.

Counterfactual regret in an~information set is the~external regret (see Definition~\ref{def:extRegret}) with respect to the~counterfactual values.
Counterfactual regret minimization algorithms minimize counterfactual regret in each information set, which provably leads to convergence of average strategies to a~Nash equilibrium of the~EFG \citep{zinkevich2007regret}.
The variant of counterfactual regret minimization most relevant for this paper is \emph{Monte Carlo Counterfactual Regret Minimization} (MCCFR) and more specifically outcome sampling. MCCFR minimizes the~counterfactual regrets by minimizing their unbiased estimates obtained by sampling. In the~case of outcome sampling, these estimates are computed based on sampling a~single terminal history, as in MCTS. Let $q(z)$ be the~probability of sampling a~history $z$. The~\emph{sampled counterfactual value} is:
\begin{equation*}
\tilde{v}^{\sigma}_p(I,a) = \left\{
\begin{array}{ll}
\frac{1}{q(z)} \pi^{\sigma}_{-p}(h) \pi^{\sigma}(ha,z) u_p(z) & \mbox{if } (h,z) \in Z_I\\
0 & \mbox{otherwise.}
\end{array} \right.
\end{equation*}
The sampling probability typically decreases exponentially with the~depth of the~tree. Therefore, $\frac{1}{q(z)}$ will often be large, which causes high variance in sampled counterfactual value. This has been shown to cause slower convergence both theoretically~\citep{gibson2012generalized} and practically~\citep{bosansky2016aij}.


\subsection{Relation to Multi-agent Reinforcement Learning}

Our work is also related to multi-agent reinforcement learning (MARL) in Markov games \citep{littman1994markov}. The~goal of reinforcement learning is to converge to the~optimal policy based on rewards obtained in simulations. Markov games are more general than SMGs in allowing immediate rewards and cycles in the~state space. However, any SMG can be viewed as a~Markov game. Therefore, the~negative results presented in this paper apply to Markov games as well.

To the~best of our knowledge, the~existing algorithms in MARL literature do not help with answering the~question of convergence of separate Hannan consistent strategies in individual decision points. They either explicitly approximate and solve the~matrix games for individual stages of the~Markov games \citep[e.g.,~][]{littman1994markov} or do not have convergence guarantees beyond repeated matrix games \citep{bowling2002multiagent}.

\subsection{Relation to Stochastic Shortest Path Problem}
Recently, some authors considered variants of Markov decision processes (MDP, see for example \citealp{puterman2014markov}), where the~rewards may change over time (stochastic shortest path problem, discussed for example in \citealp{neu2010online}) or even more generally, where the~rewards \emph{and} the~transition probabilities may change over time \citep{yu2009arbitrarily,abbasi2013online}.

SM-MCTS can be viewed as a~special case of this scenario, where the~transition probabilities change over time, but the~rewards remain the~same. Indeed, assuming the~role of one of the~players, we can view each node of the~game tree as a~state in MDP. From a~state $h$, we can visit its child nodes with a probability which depends on the~strategy of the~other player. This strategy is unknown to the~first player and will change over time. 

Our setting is more specific than the~general version of MDP -- the~state space contains no loops, as it is, in fact, a~tree. Moreover, the~rewards are only received at the~terminal states and correspond to the~value of these states. Consequently, an~algorithm which would perform well in this special case of MDPs with variable transition probabilities could also be successfully used for solving simultaneous-move games. However, to the~best of our knowledge, so far all such algorithms require additional assumptions, which do not hold in our case.

\section{Application of the MAB Problem to SM-MCTS(-A)} \label{sec: application to SM-MCTS(-A)}
To analyze SM-MCTS(-A) we need to know how it is affected by the~selection policies it uses (line \ref{alg:smmcts:select} in Algorithms \ref{alg:smmcts} and \ref{alg:smmcts-A}) and by the~properties of the~game it is applied to.
In this section, we first introduce some additional notation related to the~MAB problem and SM-MCTS(-A).
We then frame the~events at $h$ as a~separate MAB problem $(P^h)$ (resp. $(\bar P^h)$ for SM-MCTS-A) in such a~way that applying the~bandit algorithm from $h$ to $(P^h)$ yields exactly the~output observed at line $\ref{alg:smmcts:select}$.

Throughout the~paper we will use the~following notation for quantities related to MAB problems:
In any MAB problem $(P)$, the~reward assignment $x_i(t)$ is such that
\begin{equation}\label{eq:MAB}
x_i(t) = \textrm{the reward the~agent would receive in $(P)$ if they chose the~action } i \textrm{ at time } t .
\end{equation}
We define the~notions of \emph{cumulative payoff} $G(\cdot)$ and \emph{maximum cumulative payoff}  $G_{\max}(\cdot)$ and relate these quantities to the~external regret\footnote{For definition of the~external regret, see Definition \ref{def:extRegret}.}:
\begin{align}
G(t) & := \sum_{s=1}^{t}x_{i(s)}(s) \nonumber \\
G_{\max}(t) & := \max_{i\in\mathcal{A}_{1}}\sum_{s=1}^{t}x_{i}(s), \label{eq:MAB_problem_1} \\
R(t) & = G_{\max}(t)-G(t). \nonumber 
\end{align}
We also define the~corresponding average notions and relate them to the~average regret:
\begin{equation} \begin{split} \label{eq:MAB_problem_2}
g(t) 		& := G(t)/t, \\
g_{\max}(t) & := G_{\max}(t)/t , \\
r(t) 		& = g_{\max} (t)-g(t) .
\end{split} \end{equation}

Next, we introduce the~additional notation related to SM-MCTS(-A). By $i^h(t)$ and $j^h(t)$ we denote the~action chosen by player 1 (resp. 2) during the~$t$-th visit of $h$.
To track the~number of uses of each action, we set
\footnote{Note that we define $t^h_i$ as the~number of uses of $i$ up to the~$(t-1)$-th visit of $h$, increased by 1, even though the~more natural candidate would be simply the~number of uses of $i$ up to the~$t$-th visit. However, this version will simplify the~notation later and for the~purposes of computing the~empirical frequencies, the difference between the two definitions becomes negligible with increasing $t$.\label{footnote: t_i}}
\begin{align}
t^h_{i} & := 1 + \left| \left\{ 1\leq s \leq t-1 | \ i^h(s) = i \right\} \right| , \nonumber \\
t^h_{ij} & := 1 + \left| \left\{ 1\leq s \leq t-1 | \ i^h(s)=i \ \& \ j^h(s)=j \right\} \right| \label{eq:local_time}  
\end{align}
and define $t^h_{j}$ analogously to $t^h_{i}$. Note that when $(i,j)=(i^h(t),j^h(t))$, $t^h_{ij}$ is actually equal to the~number of times this joint action has been used up to (and including) the~$t$-th iteration. For $(i,j)\neq(i^h(t),j^h(t))$, this is equal to the~same number increased by 1.

We now define $(P^h)$.
When referring to the~quantities from \eqref{eq:MAB_problem_1} and \eqref{eq:MAB_problem_2} which correspond to $(P^h)$, we will add the~superscript $h$ (for example $G^h_{\max}$, $g^h$, $r^h$). To keep the~different levels of indices manageable, we will sometimes write e.g. $g^h_{ij}$ and $x^h_{ij}$ instead of $g^{h_{ij}}$ and $x^{h_{ij}}$.
To indicate whether an~average regret is related to player 1 or 2, we denote the~corresponding quantities as $r^h_1$ and $r^h_2$.

Since we want the~reward assignment corresponding to $(P^h)$ to coincide with what is happening at $h$ during SM-MCTS(-A), \eqref{eq:MAB} requires us to define $x^h_i(t)$ as
\begin{align}
x^h_i(t) := & \text{ the~reward $x$ from line \ref{alg:smmcts:reccall} in Figure~\ref{alg:smmcts} (resp. \ref{alg:smmcts-A}) that we would get} \nonumber\\
	& \text{ during the~$t$-th visit of $h$ if we switched the~first action at line \ref{alg:smmcts:select}} \nonumber\\
	& \text{ to $i$ (while keeping the~choice of player 2 as $j=j^h(t)$).} \label{equation: MAB description of SM-MCTS}
\end{align}
In particular, the reward for the selected action $i=i^h(t)$ is
\begin{align} \label{equation: equivalent MAB description of SM-MCTS}
x^h(t) =	x^h_{i^h(t)}(t) = \text{ the~reward obtained during the~$t$-th visit of $h$.}
\end{align}
For $i\neq i^h(t)$, the~unobserved reward $x_i^h(t)$ corresponds the value we would receive if we ran SM-MCTS(-A)$(h_{ij^h(t)})$ during the $t$-th visit of $h$.
Before the $t$-th visit of $h$, its child $h_{ij^h(t)}$ has been visited $(t^h_{ij^h(t)}-1)$-times. The hypothetical visit from \eqref{equation: MAB description of SM-MCTS} would therefore be the~$t^h_{ij^h(t)}\textnormal{-th}$ one, implying that
\begin{equation}\label{eq:reward_for_action}
x^h_i(t) =  x^h_{ij^h(t)}(t^h_{ij^h(t)}) .
\end{equation}

We note a property of $x^h_i(t)$ that obviously follows from either \eqref{equation: MAB description of SM-MCTS} or \eqref{eq:reward_for_action}, but might be unintuitive and is crucial for understanding the behavior of SM-MCTS(-A).
Consider an action $i\neq i^h(t)$ that has not been selected at time $t$. Then, assuming the opponent keeps playing $j^h(t)$, the random variable $x^h_i(t)$ will \emph{not change} until player 1 selects $i$ and $t^h_{ij^h(t)}$ increases. (Because nodes that do not get visited remain inactive and none of their variables change.)
This setting where rewards come from ``reward pools'' and stay around until they get ``used up'' is in a~direct contrast with the~non-adaptive MAB setting where even the non-selected rewards disappear.
However, we argue that this behavior is inherent to the~presented version of SM-MCTS(-A), and might cause its non-averaged variant to malfunction (as demonstrated in Section \ref{sec:Counterexample}).

The MAB problem $(\bar P^h)$ is defined analogously to $(P^h)$, except that we denote the rewards as $\bar x_i^h(t)$ instead of $x_i^h(t)$, and define $\bar x_i^h(t)$ as the number $\bar x$ (rather than $x$) from line~\ref{alg:smmcts-A:reccall} from Algorithm~\ref{alg:smmcts-A}.
This $\bar x$ is, by definition, equal to the average reward $g^h_{ij}(\cdot)$ from the corresponding child node.
It follows that $\bar x^h_i(t)$ coincides with $g^h_{ij^h(t)}( t^h_{ij^h(t)} )$ (more precisely, $g^h_{ij^h(t)}( t^h_{ij^h(t)} )$  is a realization of the random variable $\bar x^h_i(t)$ ).

\medskip

Lastly, we define the \emph{empirical} and \emph{average strategies} corresponding to a specific run of SM-MCTS(-A).
By $t^h(T)$ we denote the~number of visits of $h$ up to the~$T$-th iteration of SM-MCTS(-A).
By \emph{empirical frequencies} we mean the strategy profile $\hat \sigma(T) = (\hat \sigma_p^h(t^h(T)) )_{p=1,2,\ h\in\cH}$ defined as $\hat{\sigma}^h_{1}(t)(i) := t^h_{i} / t$ (resp. $t_j/t$ for player 2).
The average strategy $\bar \sigma (T)$ is defined analogously, with
\begin{equation}\label{eq:avg_str}
\bar \sigma_p^h(t)(i) := \frac{1}{t} \sum_{s=1}^t \sigma_p^h(s)(i)
\end{equation}
in place of $\hat \sigma_p^h(t)(i)$.
The~following lemma says the~two strategies can be used interchangeably. The~proof consists of an~application of the~strong law of large numbers and can be found in the~appendix.
\begin{restatable}{lemma}{empaavg} \label{lem:emp_and_avg}
In the limit, the~empirical frequencies and average strategies will almost surely be equal. That is, 
$\limsup_{t \rightarrow \infty} \max_{i \in \cA_1}\,|\hat{\sigma}_1(t)(i)-\bar{\sigma}_1(t)(i)| = 0$ holds with probability $1$.
\end{restatable}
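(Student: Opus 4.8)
The plan is to treat the statement one node $h$ and one action $i \in \cA_1$ at a time: since both $\cH$ and $\cA_1$ are finite, almost-sure convergence to $0$ of each individual difference $|\hat\sigma_1^h(t)(i) - \bar\sigma_1^h(t)(i)|$ implies, via a finite union of the corresponding probability-one events, that the maximum over $i$ (and, by the symmetric argument for player 2) also tends to $0$ almost surely. So fix $h$ and $i$ and write $p_s := \sigma_1^h(s)(i)$ for the probability the selection policy assigns to $i$ on the $s$-th visit of $h$, and $\chi_s := \mathbf 1[i^h(s) = i]$ for the indicator that $i$ was actually chosen.

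The first step is to express the difference as a Ces\`aro average of a single sequence plus a negligible correction. By definition $\bar\sigma_1^h(t)(i) = \frac1t \sum_{s=1}^t p_s$, while the off-by-one convention in \eqref{eq:local_time} gives $\hat\sigma_1^h(t)(i) = t_i^h/t = \frac1t\left(1 + \sum_{s=1}^{t-1}\chi_s\right) = \frac1t\sum_{s=1}^t \chi_s + \frac{1 - \chi_t}{t}$. Since $|1-\chi_t| \le 1$, the last term is $O(1/t)$ and vanishes; hence it suffices to show $\frac1t\sum_{s=1}^t (\chi_s - p_s) \to 0$ almost surely.

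The key step is to recognise $D_s := \chi_s - p_s$ as a bounded martingale difference sequence. Let $\mathcal F_s$ denote the $\sigma$-algebra generated by the whole history of the SM-MCTS(-A) run up to and including the action selection performed on the $s$-th visit of $h$. Because the selection policy at $h$ is a deterministic function of the statistics accumulated during the previous visits of $h$, the probability $p_s$ is $\mathcal F_{s-1}$-measurable, and the realised choice satisfies $\bE[\chi_s \mid \mathcal F_{s-1}] = p_s$; thus $\bE[D_s \mid \mathcal F_{s-1}] = 0$ and $|D_s| \le 1$. The partial sums $M_t = \sum_{s=1}^t D_s$ therefore form a martingale with increments bounded by $1$, and the Azuma--Hoeffding inequality yields $\Pr(|M_t| \ge \epsilon t) \le 2\exp(-\epsilon^2 t/2)$ for every $\epsilon > 0$. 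As $\sum_t \exp(-\epsilon^2 t/2) < \infty$, Borel--Cantelli shows that $|M_t|/t \ge \epsilon$ for only finitely many $t$ almost surely; letting $\epsilon$ run through a sequence decreasing to $0$ gives $M_t/t \to 0$ almost surely, which is exactly what is needed. (Equivalently, one may invoke the strong law of large numbers for martingale differences with bounded increments.)

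I expect the only genuinely delicate point to be the bookkeeping around the filtration in the key step: one must keep in mind that visits of $h$ occur at random iterations of the outer search, and verify that $p_s$ is measurable with respect to the information available just before the $s$-th selection while $\chi_s$ has conditional mean $p_s$. Once the selection policy is understood as a deterministic map from the local statistics at $h$ to a distribution over actions, from which $i^h(s)$ is then sampled, this is immediate; the remaining ingredients (the off-by-one estimate and the concentration argument) are routine.
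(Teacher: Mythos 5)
Your proof is correct and follows the same decomposition as the paper: both fix a node $h$ and action $i$, absorb the off-by-one in the definition of $t^h_i$ as an $O(1/t)$ correction, and reduce the claim to showing that the Ces\`aro average of the martingale differences $\chi_s - \sigma_1^h(s)(i)$ vanishes almost surely. The one place you diverge is the concluding step: the paper invokes the ``martingale version of the Central Limit Theorem,'' which as stated only yields distributional control of $M_t/\sqrt{t}$ rather than almost-sure convergence of $M_t/t$; your Azuma--Hoeffding plus Borel--Cantelli argument (equivalently, the martingale strong law for bounded increments, which is what the paper's surrounding text actually announces) is the rigorous version of what the paper intends, so if anything your write-up is the more careful of the two. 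Your attention to the filtration bookkeeping --- that $p_s$ is measurable with respect to the local history before the $s$-th selection at $h$, even though visits of $h$ occur at random iterations of the outer search --- addresses a point the paper glosses over.
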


\section{Insufficiency of Local Regret Minimization for Global Convergence} \label{sec:Counterexample}

One might hope that any HC selection policy in SM-MCTS would guarantee that the~average strategy converges to a~NE. Unfortunately, this is not the~case -- the~goal of this section is to present a~corresponding counterexample. The~behavior of the~counterexample is summarized by the~following theorem:
\begin{theorem} \label{thm: counterexample}
There exists a~simultaneous-move zero-sum game $G$ with perfect information and a~HC algorithm $A$, such that when $A$ is used as a~selection policy for SM-MCTS, then the~average strategy $\bar{\sigma}\left(t\right)$ almost surely converges outside of the~set of $\frac 1 5$-Nash equilibria.
\end{theorem}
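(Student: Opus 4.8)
The plan is to construct explicitly a game $G$ together with a pathological (but genuinely Hannan consistent) selection policy $A$, and then to analyze the coupled stochastic dynamics that SM-MCTS induces on $G$. The conceptual heart of the construction is the gap, highlighted in the discussion around \eqref{equation: MAB description of SM-MCTS}, between two notions that coincide for a fixed matrix game but diverge here: having vanishing average regret against the \emph{observed} reward-pool payoffs $x^h_i(t)$, versus playing a Nash equilibrium of the \emph{true} game. Because a non-selected action keeps its reward frozen until it is next played, the counterfactual payoff that the root bandit charges itself for a rarely played action $i$ is, by \eqref{eq:reward_for_action}, a stale single sample $x^h_{ij^h(t)}(t^h_{ij^h(t)})$ rather than the subgame value $v^{h_{ij}}$. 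Hannan consistency (Definition~\ref{def:HC}) only forces the root to do as well as the best fixed action \emph{with respect to these frozen values}, so if the frozen values of the equilibrium action stay biased, the root can be HC yet converge to a non-equilibrium strategy.

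First I would design a small stacked game (at least two levels, since in a one-stage game the ``samples'' returned from leaves are deterministic and carry no bias; additional depth can be used to amplify the gap) in which one root action $i^\ast$ is essential for equilibrium but leads to a subgame whose single-sample outcomes fluctuate in $[0,1]$ during learning. Next I would build $A$ as a deliberately pathological member of the HC class: to avoid contradicting Theorem~\ref{thm: SM-MCTS convergence}, $A$ must \emph{lack} the unbiased-payoff-observations property, and to make the comparison with Theorem~\ref{thm: SM-MCTS-A convergence} meaningful it should still satisfy guaranteed exploration. The key verification step here, and the reason the result is surprising, is to prove that this $A$ is nonetheless Hannan consistent in the sense of Definition~\ref{def:HC}, i.e.\ $\limsup_t r(t)\le 0$ in \emph{every} adaptive adversarial MAB instance; this must be checked independently of the SM-MCTS context, where the reward pools are an artifact invisible to the regret definition.

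The core of the argument is to show that a \emph{downward} bias in the root's perceived value of $i^\ast$ is self-sustaining. The mechanism I would exploit is the correlation, created by the reward pool, between the value of a single returned sample and how long it stays frozen: a policy that reacts to a low sample of $i^\ast$ by delaying the next play of $i^\ast$ makes that low value persist and thus accumulate disproportionate weight in the cumulative reward $\sum_s x^h_{i^\ast}(s)$, dragging the root's estimate of $i^\ast$ below $v^{h_{i^\ast j}}$ and so justifying the delay. Crucially the effect hinges on updating by the single sample $x$ (line~\ref{alg:smmcts:up} of Algorithm~\ref{alg:smmcts}) rather than by the running average $\bar x = g^h_{ij}$ (line~\ref{alg:smmcts-A:update} of Algorithm~\ref{alg:smmcts-A}): a running average is insensitive to one bad draw and converges to $v^{h_{ij}}$ under guaranteed exploration, which is exactly why SM-MCTS-A escapes the pathology in Theorem~\ref{thm: SM-MCTS-A convergence}. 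Translating the persistence of this bias into a bound on the empirical frequencies $t^h_{i^\ast}/t$, and then passing to the average strategy $\bar\sigma(t)$ via Lemma~\ref{lem:emp_and_avg}, yields convergence to a profile that a suitable choice of payoffs keeps more than $\tfrac15$ away from every equilibrium.

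I expect the main obstacle to be establishing the conclusion \emph{almost surely} rather than merely in expectation: the action counts, the samples returned by subgames, and the induced strategies form a coupled, randomized, non-stationary system, so I would need concentration and martingale arguments (bounding the fluctuations of $g^h_{ij}(t^h_{ij})$ and of the visit counts) to show that the self-reinforcing regime is entered and never left with probability~$1$. A secondary difficulty is calibrating the game, in particular how skewed or how deep the payoff structure must be, so that the sustained bias provably exceeds the target constant $\tfrac15$ while $A$ remains demonstrably Hannan consistent.
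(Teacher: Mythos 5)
You have correctly identified the mechanism the paper exploits: the reward-pool staleness of \eqref{eq:reward_for_action} lets a selection policy weight low samples of the equilibrium action more heavily than high ones by controlling how long each frozen value persists, so that zero local regret at the root coexists with a globally sub-optimal average strategy. Your two-level game shape, the insistence on guaranteed exploration, the observation that the algorithm must violate UPO, and the contrast with the averaged update of SM-MCTS-A all match the paper's construction (where the root chooses between a safe payoff $0$ and a matching-pennies subgame of value $\tfrac12$, and the limit strategy achieves only $\tfrac14$, an exploitability of $\tfrac14>\tfrac15$).

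However, the proposal stops exactly where the real work begins, and the plan you sketch for that part would likely not go through as stated. First, the crux is exhibiting an algorithm that follows the pathological schedule when paired with its intended partner yet is Hannan consistent against \emph{every} adaptive adversary; saying that this ``must be checked'' is naming the obligation, not discharging it. The paper's solution is structural: the bias is produced by a \emph{deterministic, pre-synchronized periodic pattern} (play $Y,X,X,Y$ at the root against a $4$-periodic joint-action cycle in the subgame, so each ``$1$'' is frozen for one step and each ``$0$'' for three), and Hannan consistency is then restored by wrapping both players' patterns with low-probability random checks that detect deviation and trigger a fallback to a safe $\epsilon$-HC algorithm, with Borel--Cantelli handling the almost-sure claims; a final $\epsilon\to 0$ scheduling yields exact HC. Second, your framing of the bias as ``self-sustaining,'' driven by a policy that \emph{reacts} to a low observed sample by delaying the next play, points toward a reactive design that would require the concentration and martingale machinery you anticipate --- and would make it much harder to certify HC, since the delays must be arranged so that the frozen values of the under-played action average to exactly the achieved payoff (here $\tfrac14$), or regret appears. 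The paper avoids this entirely: nothing is self-reinforcing, the schedule is fixed in advance, and the only randomness to control is that of the mutual checks. Without the explicit synchronized pattern and the check-and-fallback construction, the proposal does not yet constitute a proof.
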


How is such a~pathological behavior possible? Essentially, it is because the~sampling in SM-MCTS is closely related to the~observed payoffs. By synchronizing the~sampling algorithms in different nodes in a~particular way, we will introduce a~bias to the~payoff observations. This will make the~optimal strategy look worse than it actually is, leading the~players to adopt a~different strategy.

Note that the~algorithm from Theorem \ref{thm: counterexample} does have the~guaranteed exploration property defined in Section \ref{sec: convergence}, which rules out some trivial counterexamples where parts of the~game tree are never visited.

\subsection{Simplifying remarks}\label{sec: WLOG}
We present two observations regarding the~proof of Theorem \ref{thm: counterexample}.

Firstly, instead of a~HC algorithm, it is enough to construct an~$\epsilon$-HC algorithm $A_\epsilon$ with the~prescribed behavior for arbitrary $\epsilon>0$. The~desired $0$-HC algorithm can then be constructed in a~standard way -- that is, by using a~1-Hannan consistent algorithm $A_{1}$ for some period $t_{1}$, then a~$\frac 1 2$-HC algorithm $A_{1/2}$ for a~longer period $t_{2}$ and so on. By choosing a~sequence $\left(t_{n}\right)_n$, which increases quickly enough, we can guarantee that the~resulting combination of algorithms $\left(A_{1/n}\right)$ is 0-Hannan consistent.

Furthermore, we can assume without loss of generality that the~algorithm $A$ knows if it is playing as the~first or the~second player and that in each node of the~game, we can actually use a~different algorithm $A$. This is true because the~algorithm always accepts the~number of available actions as input. Therefore we could define the~algorithm differently based on this number, and modify our game in some trivial way which would not affect our example (such as duplicating rows or columns).

\subsection{The~counterexample}
The structure of the~proof of Theorem \ref{thm: counterexample} is now as follows. First, we introduce the~game $G$ (Figure \ref{fig: hra}) and a~sequence of joint actions in $G$ which leads to
\[ \underset{h\in\mathcal{H}}\sum r_1^h(t^h(T)) = \underset{h\in\mathcal{H}}\sum r_2^h(t^h(T)) = 0 \ \ \& \ 
\ u_1(\bar\sigma(T)) \leq v^G - \frac 1 4 . \]
This behavior will serve as a~basis for our counterexample.
However, the~``algorithms'' generating this sequence of actions will be oblivious to the~actions of the~opponent, which means that they will not be $\epsilon$-HC. In the~second step of our proof (Lemma \ref{lemma: modification of A}), we modify these algorithms in such a~way that the~resulting sequence of joint actions stays similar to the~original sequence, but the~new algorithms are $\epsilon$-HC. In combination with the simplifying remarks from Section~\ref{sec: WLOG}, this gives Theorem \ref{thm: counterexample}.

\subsubsection{Deterministic version of the counterexample} \label{sec: deterministic counterexample}

Let $G$ be the~game from Figure \ref{fig: hra}. First, we will note what the~Nash equilibrium strategy $\pi$ in $G$ looks like. Having done that, we will describe a~sequence of actions in $G$ which leads to a~different (non-NE) strategy $\bar\sigma$. We will then analyze the~properties of this sequence, showing that utility of $\bar\sigma$ is sub-optimal, even though  the~regrets $r^h_p$ for every $h\in \mathcal H = \{I,J\}$ will be equal to zero. The~counter-intuitive part of the~example is the~fact that this pathological sequence satisfies $r^I_1=0$.
\medskip

\begin{figure} 
\centering
\begin{subfigure}{0.6\textwidth}
	\tikzset{
	solid node/.style={circle,draw,inner sep=1.5,fill=black},
	hollow node/.style={circle,draw,inner sep=1.5}
	}
	\begin{tikzpicture}[scale=1.5,font=\footnotesize]
	\tikzstyle{level 1}=[level distance=10mm,sibling distance=20mm]
	\tikzstyle{level 2}=[level distance=12.5mm,sibling distance=25mm]
	\tikzstyle{level 3}=[level distance=12.5mm,sibling distance=15mm]
	\node(0)[solid node,label=above:{Player 1 (node $I$)}]{}
		child{node(1)[hollow node,label=below: {0}]{} edge from parent node[left,xshift=-3]{X}}
		child{node(2)[solid node,label=right:{Player 1 (node $J$)}]{}
			child{node(3)[solid node]{}
				child{node[hollow node,label=below:{$1$}]{} edge from parent node[left]{$L$}}
				child{node[hollow node,label=below:{$0$}]{} edge from parent node[right]{$R$}}
				edge from parent node[left,xshift=-3]{$U$}
			}
			child{node(4)[solid node]{}
				child{node[hollow node,label=below:{$0$}]{} edge from parent node[left]{$L$}}
				child{node[hollow node,label=below:{$1$}]{} edge from parent node[right]{$R$}}
				edge from parent node[right,xshift=3]{$D$}
			}
			edge from parent node[right,xshift=3]{$Y$}
		}
	;
	\draw[dashed,rounded corners=10]($(3) + (-.2,.25)$)rectangle($(4) +(.2,-.25)$);
	\node at ($(3)!.5!(4)$) {Player 2 (node $J$)};
	\end{tikzpicture}
\end{subfigure}
\begin{subfigure}{0.3\textwidth}
	\includegraphics[width=\textwidth]{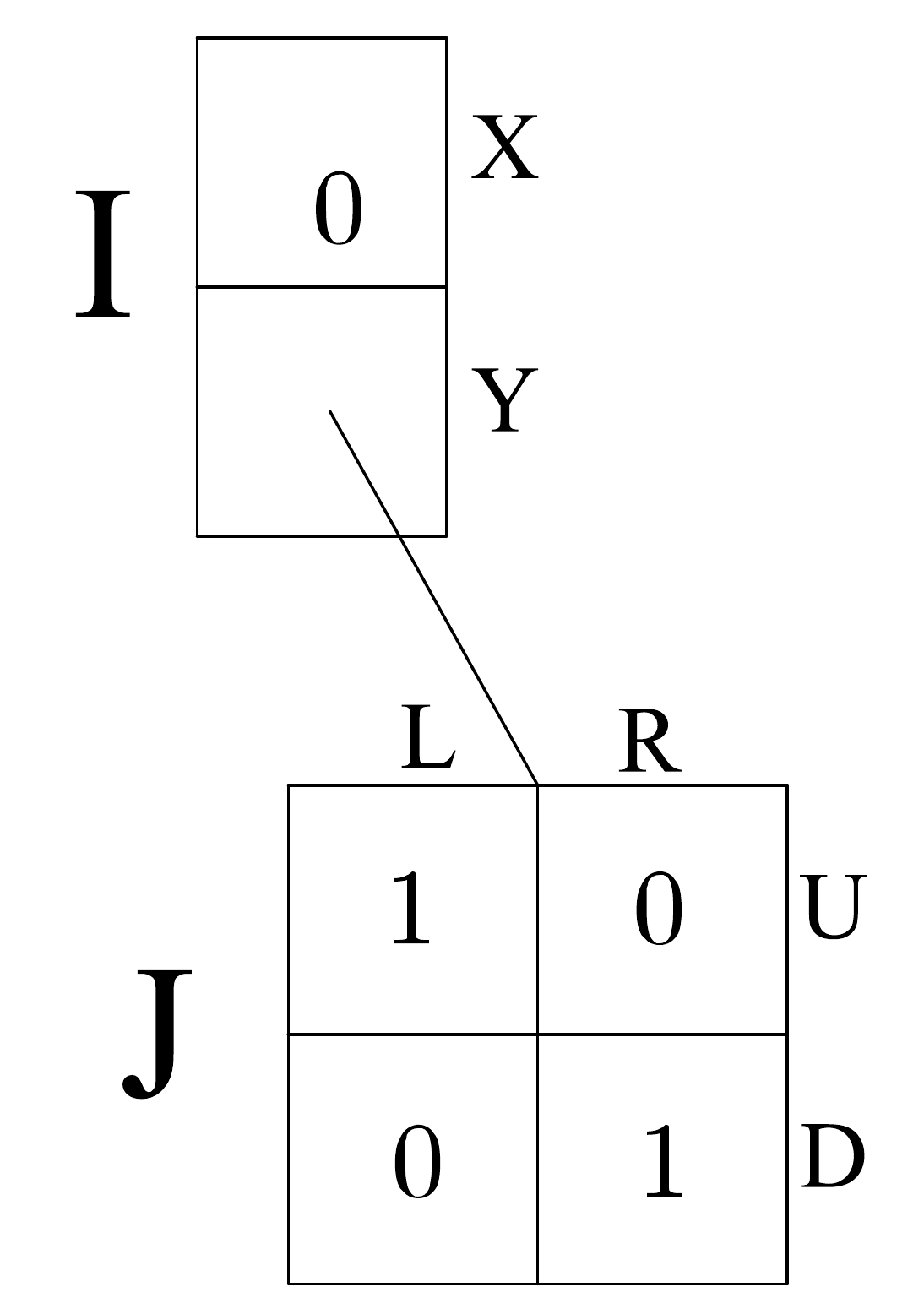}
\end{subfigure}
\caption[Game $G$]{Example of a~game in which it is possible to minimize regret at each of the~nodes while producing highly sub-optimal average strategy. The~extensive form representation of the~game is on the~left side, the~SMG representation is on the~right side. Since the~game is zero-sum, only the~utilities for Player 1 are displayed. \label{fig: hra}}
\end{figure}

When it comes to the~optimal solution of $G$, we see that $J$ is the~well known game of matching pennies. The~equilibrium strategy in $J$ is $\pi^J_1=\pi^J_2=\left(\frac{1}{2},\frac{1}{2}\right)$ and the~value of this subgame is $v^J=\frac 1 2$. Consequently, player $1$ always wants to play $Y$ at $I$, meaning that the~NE strategy at $I$ is $\pi^I_1=(0,1)$ and the~value of the~whole game is $v^G=\frac 1 2$.

We define the~``pathological'' action sequence as follows.
Let $h\in{I,J}$ and $t\in\mathbb N$.
We set\footnote{Recall that $i^h(t)$ and $j^h(t)$ are the~actions of player 1 and 2 in $t$-th visit of a~node $h$.}
\begin{eqnarray*}
	\left( i^I(1), i^I(2), \ \dots \right) & = & \left( Y,X,X,Y,\ Y,X,X,Y,\ \dots \right), \\
	\left( \left(i^J(1),j^J(1)\right), \ \left(i^J(2),j^J(2)\right), \ \dots \right) & = & \left( (U,L), (U,R), (D,R), (D,L), \ \dots \right),	
\end{eqnarray*}
where the~dots mean that both sequences are $4$-periodic. The~resulting average strategy converges to $\bar \sigma$, where $\bar \sigma^I=\bar \sigma^J_1=\bar \sigma^J_2=\left(\frac{1}{2},\frac{1}{2}\right)$.

We now calculate the~observed rewards which correspond to the~behavior described above. We will use the~notation from Section \ref{sec: application to SM-MCTS(-A)} to describe the~events at $I$ and $J$ as MAB problems $(P^I)$ and $(P^J)$. The~action sequence at $J$ is defined in such a~way that we have
\begin{equation} \label{equation: x J}
\left( x^J\left(1\right),x^J\left(2\right),\ \dots \right) =\left( 1,0,1,0,\ \dots \right) .
\end{equation}
At the~root $I$ of $G$, we clearly have  $x^I_X(t)=0$ for all $t\in\mathbb N$. Since for $t=2,3,6,7,10,11,\dots$ we have $i^I(t)=X$, the~rewards received at $I$ during these iterations will be $x^I(t)=0$. Combining this with \eqref{equation: x J}, we see that the~rewards received at $I$ are
\begin{equation} \label{equation: x I} \begin{split}
\left( x^I(1),x^I(2),\ \dots \right) & = \left( x^J(1), 0, 0, x^J(2), x^J(3), 0, 0, x^J(4), \ 			\dots \right) \\
	& = \left( 1,0,0,0,1,0,0,0, \ \dots\right) .
\end{split} \end{equation}
In particular, the~average reward in $G$ converges to the~above mentioned $\frac 1 4$.

We claim that the~limit average strategy $\bar{\sigma}$ is far from being optimal. Indeed, as we observed earlier, the~average payoff at $I$ converges to $\frac 1 4$, which is strictly less than the~game value $v^G=\frac 1 2$. Clearly, player $1$ is playing sub-optimally by achieving only $u_1(\bar{\sigma}) = \frac{1}{4}$. By changing their action sequence at $I$ to $(Y,Y,Y,\dots)$, they increase their utility to $\frac 1 2$.

However, none of the players observes any \emph{local} regret. First, we check that no regret is observed at $J$. We see that the~average payoff from $J$ converges to $\frac 1 2$. In the~node $J$, each player takes each action exactly\footnote{To be precise, the ratio is exactly 50\% for iterations divisible by 4, and slightly different for the rest.} half the time. Since we are in the~matching pennies game, this means that neither of the~players can improve their payoff at $J$ by changing all their actions to any single action. Thus for both players $p=1,2$, we have $r_p^J(t)=0$.

Next, we prove that player 1 observes no regret at $I$. We start by giving an~intuitive explanation of why this is so. When computing regret, the~player compares the~rewards they received by playing as they did with the~rewards they would receive if they changed their actions to $Y,Y,\dots$. If, at time $t=1$, they asks themselves: ``What would I receive if I played $Y$?'', the~answer is ``$1$''. They \emph{do} play $Y$, and so the~strategy in $J$ changes and when they next asks themselves the~same question at time $t=2$, the~answer is ``$0$''. However this time, they \emph{do not} play $Y$ until $t=4$, and so the~answer remains ``$0$'' for $t=3$ and $t=4$. Only at $t=5$ does the~answer change to ``1'' and the~whole process repeats. This way, they is ``tricked'' into thinking that the~average reward coming from $J$ is $\frac{1+0+0+0}{4}=\frac 1 4$, rather than $\frac 1 2$.

To formalize the above idea, we use the~notation introduced in Section \ref{sec: application to SM-MCTS(-A)}. In order to get $r^I_1(t)\rightarrow 0$, it suffices to show that for $t=4k$, the~$\frac{1}{t}\sum_{s=1}^t x^I_Y(s) = \frac{1}{4}$. Since the~whole pattern is clearly 4-periodic, this reduces to showing that
\[ x^I_Y(1)+x^I_Y(2)+x^I_Y(3)+x^I_Y(4) = 1+0+0+0 = 1 . \]
Since for $t=1$, the~action $Y$ actually got chosen, we have $Y=i^I(1)$ and
\[ x^I_Y(1) = \ x^I_{i^I(1)}(1) \overset{\text{def.}}=
	x^I(1) \overset{\eqref{equation: x J}}= 1 . \]
The~same argument yields $x^I_Y(4)=0$. For $t=2$ and $t=3$, the~first iteration $s$ satisfying $s\geq t$ and $i^I(s)=Y$ is $s=4$. By definition in \eqref{equation: equivalent MAB description of SM-MCTS}, it follows that both $x^I_Y(2)$ and $x^I_Y(3)$ are equal to $x^I(4)$, which is zero by \eqref{equation: x I}.

\subsubsection{Hannan consistent version of the counterexample} \label{sec: HC counterexample}
The following lemma states that the~(deterministic) algorithms of player $1$ and $2$ described in Section \ref{sec: deterministic counterexample} can be modified into $\epsilon$-Hannan consistent algorithms in such a~way that when facing each other, their behavior remains similar to the~original algorithms.

\begin{restatable}{lemma}{modificationofA}
\label{lemma: modification of A}Let $G$ be the~game from Figure~\ref{fig: hra}. Then for each $\epsilon>0$
there exist $\epsilon$-HC algorithms $A^{I},\, A_{1}^{J},\, A_{2}^{J}$,
such that when these algorithms are used for SM-MCTS in $G$, the~resulting average strategy $\bar{\sigma}\left(t\right)$
converges to $\bar{\sigma}$, where $\bar{\sigma}^{I}=\bar{\sigma}_{1}^{J}=\bar{\sigma}_{2}^{I}=\left(\frac{1}{2},\frac{1}{2}\right)$.
\end{restatable}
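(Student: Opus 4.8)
The plan is to take the deterministic, oblivious ``algorithms'' implicitly defined by the periodic action sequences in Section~\ref{sec: deterministic counterexample} and perturb them into genuine $\epsilon$-HC algorithms whose interaction reproduces (up to vanishing error) the same periodic behavior, hence the same limiting average strategy $\bar\sigma$. The key tension is that the deterministic sequences achieve zero regret only because the opponent cooperates by playing the matching-pennies pattern; against an arbitrary opponent they would be highly exploitable and thus not $\epsilon$-HC. So the construction must make each algorithm \emph{follow the scripted action} when its opponent behaves as expected, but \emph{fall back} to a standard $\epsilon$-HC routine (e.g.\ a slightly-exploring Exp3 or regret matching, as in Section~\ref{sec: bandits}) whenever the observed play deviates enough to threaten the regret guarantee.

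First I would fix $\epsilon>0$ and, for each of the three decision points, define a hybrid algorithm that runs the scripted periodic sequence but continuously monitors its own realized average regret $r^h_p(t)$ (computable from the observed rewards $x^h_i(t)$ as in Section~\ref{sec: application to SM-MCTS(-A)}). As long as the regret stays safely below $\epsilon$, the algorithm keeps playing the script; the point is that in self-play against the other scripted algorithms the regret is identically $0$ (this is exactly what Section~\ref{sec: deterministic counterexample} established: $r^I_1=r^J_1=r^J_2=0$), so the monitor never triggers and the scripted behavior survives. To make the algorithm $\epsilon$-HC against \emph{every} adversary, I would interleave a small amount of a known $\epsilon'$-HC base algorithm and use a switching/doubling argument: partition time into blocks, and if at the end of a block the empirical regret is too large, switch permanently (or for a long stretch) to the base $\epsilon'$-HC algorithm. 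A standard guarantee for such a ``follow the script, else defer to a no-regret expert'' combiner — essentially the same combination trick described in the simplifying remarks of Section~\ref{sec: WLOG} for building a $0$-HC algorithm from $\tfrac1n$-HC pieces — yields that the combined algorithm is $\epsilon$-HC for all opponents, while being \emph{behaviorally identical} to the script on the specific self-play trajectory.

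The main obstacle I expect is handling the \emph{adaptivity/coupling} between the three nodes: the reward $x^I_i(t)$ that player~1 sees at the root is not exogenous but is produced by the play at $J$ (via \eqref{eq:reward_for_action}), so the three algorithms are not minimizing regret against independent adversarial streams — they jointly generate the streams. Consequently I must verify two things simultaneously and consistently: (i) on the self-play path the monitors at all three nodes stay quiet, so the scripts run forever and $\bar\sigma(t)\to\bar\sigma$; and (ii) the $\epsilon$-HC property, which is a \emph{worst-case over adversaries} statement, is not violated by the delayed ``reward-pool'' semantics noted after \eqref{eq:reward_for_action}, whereby an unchosen action's reward is frozen until the action is next selected. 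I would discharge (i) by showing the scripted self-play exactly reproduces \eqref{equation: x J} and \eqref{equation: x I}, giving zero realized regret so no switch ever occurs; and (ii) by invoking the $\epsilon$-HC guarantee of the base algorithm together with the fact that switching can only cost a bounded amount of regret, which is absorbed into $\epsilon$ as $t\to\infty$.

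Finally, since the base algorithms and the switching logic are randomized only through the base component, the convergence of $\bar\sigma(t)$ to $\bar\sigma$ holds almost surely: on the (probability-one) self-play event the scripts never defer, and the $4$-periodicity forces the empirical frequencies, and hence by Lemma~\ref{lem:emp_and_avg} the average strategies, to the claimed value $\bar\sigma^I=\bar\sigma^J_1=\bar\sigma^J_2=(\tfrac12,\tfrac12)$. Combining this lemma with the simplifying remarks of Section~\ref{sec: WLOG} (reducing $0$-HC to $\epsilon$-HC and allowing a distinct algorithm per node and per player) then completes the proof of Theorem~\ref{thm: counterexample}.
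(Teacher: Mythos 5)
Your high-level plan — run the scripted periodic sequences, monitor for trouble, and fall back to a standard $\epsilon$-HC base algorithm when trouble is detected — is the same architecture the paper uses (buffer phases running an $\epsilon$-HC algorithm alternating with cooperation phases following the script). However, the specific monitor you propose has a genuine gap: you cannot ``continuously monitor the realized average regret $r^h_p(t)$'' under bandit feedback, because the external regret involves the counterfactual rewards $x^h_i(s)$ for actions $i\neq i^h(s)$, which the agent does not observe. In the particular game $G$ these values can be partially reconstructed with delay via the frozen ``reward pool'' semantics of \eqref{eq:reward_for_action}, but an algorithm whose monitor relies on that reconstruction is not $\epsilon$-HC as a general bandit algorithm: a generic adversary at node $I$ can set $x_Y(t)=1$ exactly on the rounds where the script plays $X$ and $x_Y(t)=0$ on the rounds where it plays $Y$, so the script earns $0$, the best fixed action earns $1/2$ on average, yet every pool-based regret estimate reads $0$ and your monitor never fires. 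Since Theorem~\ref{thm: counterexample} needs the constructed algorithms to be Hannan consistent in the sense of Definition~\ref{def:HC} (i.e., against arbitrary adaptive reward sequences), this breaks the proof.

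The paper avoids this by monitoring an \emph{observable} proxy instead: each algorithm deviates from its script with probability $\epsilon$ per round (a ``random check'') and records whether the \emph{received} payoff matches what the cooperation pattern predicts; the fallback is triggered when the empirical rate of observed violations threatens to exceed $2\epsilon$. This has two consequences you would need to replicate. First, the checks themselves cost regret, so the algorithms are only $C\epsilon$-HC rather than exactly reproducing the zero-regret script — which is why the lemma is stated for $\epsilon$-HC algorithms and Theorem~\ref{thm: counterexample} then composes them over a schedule $\epsilon_n\downarrow 0$ as in Section~\ref{sec: WLOG}. Second, your claim that ``on the self-play path the monitors stay quiet deterministically'' no longer holds once the checks are randomized: in self-play the violation estimate concentrates near $\epsilon$ but can spuriously cross the $2\epsilon$ threshold early on, so the paper needs a Borel--Cantelli argument (each cooperation phase is retained forever with positive probability, hence some phase is retained almost surely) together with a carefully growing buffer schedule $b_n$ to get both the a.s.\ convergence of $\bar\sigma(t)$ to $\left(\tfrac12,\tfrac12\right)$ and the worst-case regret bound. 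Your proposal as written sidesteps both difficulties only by assuming access to a quantity the algorithm cannot compute.
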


The strategy $\bar{\sigma}$ satisfies $u_{1}\left(\bar{\sigma}\right)=\frac{1}{4}$, while the~equilibrium strategy $\pi$ (where $\pi^{I}=\left(0,1\right)$, $\pi_{1}^{J}=\pi_{2}^{J}=\left(\frac{1}{2},\frac{1}{2}\right)$, as shown in Section~\ref{sec: deterministic counterexample}), gives utility $u_{1}\left(\pi\right)=\frac{1}{2}$. Therefore the~existence of algorithms from Lemma \ref{lemma: modification of A} proves Theorem \ref{thm: counterexample}.

The key idea behind Lemma \ref{lemma: modification of A} is the~following: both players repeat the~pattern from Section~\ref{sec: deterministic counterexample}, but we let them perform random checks which detect any major deviations from this pattern. If both players do this, then by Section~\ref{sec: deterministic counterexample} they observe no regret at any of the~nodes. On the~other hand, if one of them deviates enough to cause a~non-negligible regret, they will be detected by the~other player, who then switches to a~``safe'' $\epsilon$-HC algorithm, leading again to a~low regret. The~definition of the~modified algorithms used in Lemma \ref{lemma: modification of A}, along with the~proof of their properties, can be found in Appendix~\ref{sec:appendix:counterexample}.

\subsection{Breaking the~counterexample: SM-MCTS-A}

We now discuss the~pathological behavior of SM-MCTS above and explain how these issues are avoided by SM-MCTS-A and some other algorithms.

Firstly, what makes the~counterexample work? The~SM-MCTS algorithm repeatedly observes parts of the~game tree to estimate the~value of available decisions. When sampling the~node $J$, half of the~rewards propagated upwards to $I$ have value $0$, and half have value $1$. SM-MCTS uses these payoffs directly and our rigged sampling scheme abuses this by introducing a~payoff observation bias -- that is, by making the~ zeros `stay' three times longer than the~ones\footnote{For an~explanation of the~payoff observation bias, see ``No-regret-at-$I$'' paragraph in Section~\ref{sec: deterministic counterexample}.}. This causes the~algorithm sitting at $I$ to estimate the~value $v^J$ of $J$ as $\frac{1}{4}$ (when in reality, it is $\frac{1}{2}$). Since $\frac{1}{4}$ is the~average payoff at $I$, the~algorithm at $I$ suffers no regret, and it can keep on working in this pathological manner.

The culprit here is the~\emph{payoff observation bias}\footnote{More on this in Section \ref{sec: eps-UPOproperty}.}, made possible by the~synchronization of the~selection algorithm used at $I$ with the~rewards coming from $J$.

So, why does SM-MCTS-A work in the~counterexample above, when SM-MCTS did not? SM-MCTS-A also uses the~biased payoffs, but instead of directly using the~most recent sample, it works with the~average of samples observed thus far. Since these averages converge to $v^J=\frac{1}{2}$, the~estimates of $v^J$ will also converge to this value, and no amount of `rigged weighting' can ruin this. The~only way to obtain the~average reward of $\frac{1}{2}$ is to (almost) always play $Y$ at $I$ and because the~algorithm used at $I$ is HC, this is exactly what will happen.

Thus when using SM-MCTS-A, we will successfully find a~NE of the~game $G$, even when the~observations made by the~selection algorithms are very much biased. This argument can be generalized for an~arbitrary game and a~HC algorithm -- we will do so in Theorem \ref{thm: SM-MCTS-A convergence}.

We conjecture that guaranteed convergence of SM-MCTS might still be possible, provided that the~algorithms used as selection policies were HC and that the~strategies prescribed by them changed slowly enough - such as is the~case with Exp3 (where the~strategies change slower and slower).\footnote{When the~the sampling strategies are constant, no synchronization like above is possible, which makes the~payoff observations unbiased. We believe that when the~strategies change slowly enough, the~situation might be similar.}

\subsection{Breaking the~counterexample: CFR}
In Section \ref{sec: CFR} we described the~CFR algorithm, which provably converges in our setting. We now explain how CFR deals with the~counterexample.

It is a~feature of MCTS that at each iteration, we only `care' about the~nodes we visited and we ignore the~rest. The~downside is that we never realize that every time we do \emph{not} visit $J$, the~strategy suggested for $J$ by our algorithm performs extremely poorly. On the~other hand, CFR visits every node in the~game tree in every iteration. This makes it immune to our counterexample -- indeed, if we are forced to care about what happens at $J$ during every iteration, we can no longer keep on getting three zero payoffs for every 1 while still being Hannan consistent.
The same holds for MCCFR, a Monte Carlo variant of CFR which no longer traverses the~whole tree each iteration, but instead only samples a~small portion of it.

\section{Convergence of SM-MCTS and SM-MCTS-A}\label{sec: convergence}

In this section, we present the~main positive results -- Theorems \ref{thm: SM-MCTS-A convergence} and \ref{thm: SM-MCTS convergence}. Apart from a~few cases, we only present the~key ideas of the~proofs here, while the~full proofs can be found in the~appendix. For an~overview of the~notation we use, see Table \ref{tab:notation}.

To ensure that the~SM-MCTS(-A) algorithm will eventually visit each node, we need the~selection policy to satisfy the~following property. 
\begin{definition}
We say that $A$ is an~\emph{algorithm with guaranteed exploration}
if, for any simul\-taneous-move zero-sum game\footnote{We define the~guaranteed exploration property this way (using extensive form games) to avoid further technicalities. Alternatively, we could define "two-player adversarial MAB problem" analogously to how adversarial MAB problem is defined in Definition \ref{def: MABproblem}, except that it would use a~setting similar to the~first part of Definition \ref{def:doubleMABproblem}. We would then say that $A$ is an~algorithm with guaranteed exploration if, for any reward assignment $\left(a_{ij}(t)\right)$, the~limit is almost surely infinity for all joint actions.} $G$ (as further specified in Section \ref{sec:SM games}) where $A$ is used by both players as a~selection policy for SM-MCTS(-A), and for any game node $h\in\mathcal{H}$, $\lim_{t\rightarrow\infty}t^h_{ij}=\infty$ holds almost surely for every joint action $(i,j)\in\mathcal{A}_1(h)\times\mathcal{A}_2(h)$.
\end{definition}
It is an~immediate consequence of this definition that when an~algorithm with guaranteed exploration is used in SM-MCTS(-A), every node of the~game tree will be visited infinitely many times. From now on, we will therefore assume that, at the~start of our analysis, the~full game tree is already built --- we do this because it will always happen after a~finite number of iterations and, in most cases, we are only interested in the~limit behavior of SM-MCTS(-A) (which is not affected by the~events in the~first finitely many steps). 

Note that most of the~HC algorithms, namely RM and Exp3, guarantee exploration without the~need for any modifications, but there exist some HC algorithms, which do not have this property. However, they can always be adjusted in the~following way: 
\begin{definition}\label{def:modified_A}
Let $A$ be a~bandit algorithm. For fixed \emph{exploration parameter} $\gamma\in\left(0,1\right)$
we define a~modified algorithm $A^{\gamma}$ as follows.
For time $t=1,2,...$ either:
\begin{enumerate}[a)]
\item explore with probability $\gamma$, or
\item run one iteration of $A$ with probability $1-\gamma$
\end{enumerate}
(where ``explore'' means we choose
the action randomly uniformly over available actions, without updating
any of the~variables belonging to $A$).

We define an algorithm $A^{\sqrt{\cdot}}$ analogously, except that at time $t$, the probability of exploration is $1/\sqrt{t}$ rather than $\gamma$.
\end{definition}
Fortunately, $\epsilon$-Hannan consistency is not substantially influenced
by the~additional exploration: 

\begin{restatable}{lemma}{AjeHC}
\label{lemma: A* je HC}For $\epsilon\geq 0$, let $A$ be an~$\epsilon$-Hannan consistent
algorithm.
\begin{enumerate}[(i)]
\item For any $\gamma>0$, $A^{\gamma}$ is an~$(\epsilon+\gamma)$-HC algorithm with guaranteed exploration.
\item $A^{\sqrt{\cdot}}$ is an~$\epsilon$-HC algorithm with guaranteed exploration.
\end{enumerate}
\end{restatable}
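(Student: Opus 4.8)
The plan is to prove both parts by relating the regret of the modified algorithm to the regret of the original algorithm $A$, controlling the error introduced by the exploration steps. The key observation is that in $A^\gamma$ (resp. $A^{\sqrt{\cdot}}$), each time step is split into two regimes: either a uniform exploration step, or a genuine step of $A$ run on the subsequence of rounds where exploration did not occur. The guaranteed-exploration property is immediate in both cases, since at every time step each action is played with probability at least $\gamma/K$ (resp. $1/(K\sqrt{t})$), and so by a Borel--Cantelli argument each action, and hence each joint action $(i,j)$ in the SM-MCTS(-A) setting, is selected infinitely often almost surely; thus $t^h_{ij}\to\infty$. I would state this first, as it disposes of the ``guaranteed exploration'' clause of both (i) and (ii) simultaneously.

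For the Hannan-consistency bound, I would decompose the cumulative reward $G(t)=\sum_{s=1}^t x_{i(s)}(s)$ according to whether step $s$ was an exploration step or a genuine $A$-step. Let $E(t)$ be the set of exploration steps up to time $t$ and $N(t)=t-|E(t)|$ the number of genuine steps. On the genuine steps, $A$ sees a legitimate (sub)sequence of the adversarial bandit problem, so its $\epsilon$-HC guarantee yields that the average regret restricted to those steps is at most $\epsilon$ in the $\limsup$ almost surely. The exploration steps contribute rewards in $[0,1]$, so the difference between $G_{\max}(t)$ computed over all steps and the cumulative reward actually collected is controlled by $|E(t)|$ plus the genuine-step regret of $A$. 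The heart of the argument is that $|E(t)|/t \to \gamma$ almost surely (by the strong law of large numbers applied to the i.i.d.\ Bernoulli$(\gamma)$ exploration indicators) in case (i), and $|E(t)|/t\to 0$ almost surely in case (ii) (since $\sum_{s\le t} 1/\sqrt{s}=O(\sqrt t)=o(t)$, so the expected fraction vanishes, and a concentration/SLLN-type argument upgrades this to almost-sure convergence). Combining, in case (i) we get $\limsup_t r(t)\le \epsilon + \gamma$, and in case (ii) we get $\limsup_t r(t)\le \epsilon$, both with probability $1$, which are exactly the claimed bounds.

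I would be somewhat careful about one technical point: the average regret of $A$ on the genuine subsequence is taken with respect to its own local clock $N(t)$, whereas we want the bound with respect to the global clock $t$. Since $N(t)/t \to 1-\gamma$ (resp.\ $\to 1$) almost surely, rescaling between the two clocks only multiplies the genuine-step regret contribution by a bounded factor and does not inflate the $\limsup$ beyond the stated $\epsilon+\gamma$ (resp.\ $\epsilon$). Here it is convenient that rewards lie in $[0,1]$, so the maximizing action's cumulative reward over the exploration steps differs from that over all steps by at most $|E(t)|$, keeping all the error terms linear in the exploration count.

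The main obstacle I anticipate is making the almost-sure statements rigorous rather than merely in-expectation, particularly in case (ii) where the exploration probability $1/\sqrt t$ is non-stationary and the exploration indicators are independent but not identically distributed. For the guaranteed-exploration claim this requires the second Borel--Cantelli lemma (the events are independent and $\sum_t 1/(K\sqrt t)=\infty$), and for the vanishing exploration fraction it requires a law of large numbers for independent non-identically-distributed bounded variables (e.g.\ via Kolmogorov's criterion, since $\sum_t \mathrm{Var}(\mathbf 1_{E(t)})/t^2<\infty$). The rest is a direct bookkeeping computation that I would leave to the appendix, since the paper defers the full proof there anyway.
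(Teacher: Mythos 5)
Your argument for the Hannan-consistency bounds is essentially the paper's: split the run into exploration steps and genuine steps of $A$, bound the exploration steps' contribution to the regret by $|E(t)|$ using the fact that rewards lie in $[0,1]$, apply the $\epsilon$-HC guarantee of $A$ to the genuine subsequence, and use the strong law of large numbers to get $|E(t)|/t \to \gamma$ almost surely in case (i) and $|E(t)|/t \to 0$ in case (ii). The local-clock versus global-clock issue you flag is handled in the paper by the factor $(t-t_{\mathrm{ex}})/t$, giving $\limsup r^*(t)\le \gamma + \epsilon(1-\gamma)\le \gamma+\epsilon$; your bookkeeping reaches the same place, and your case (ii) is in fact more detailed than the paper's (which omits it as ``similar'').

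The one genuine gap is in the guaranteed-exploration clause. That property requires every \emph{joint} action $(i,j)$ to satisfy $t^h_{ij}\to\infty$, and the inference ``each action, and hence each joint action, is selected infinitely often'' is invalid as stated: each player individually playing every one of their own actions infinitely often does not imply every pair occurs infinitely often, since the two players' choices can de-synchronize. The correct argument --- and the one the paper gives for (ii) --- applies the second Borel--Cantelli lemma to the event that \emph{both} players explore (and pick the relevant pair) in the same round; for $A^{\sqrt\cdot}$ this event has probability of order $\frac{1}{\sqrt t}\cdot\frac{1}{\sqrt t}=\frac{1}{t}$, whose sum still diverges, but only barely --- this near-miss is precisely the content of the claim for the $1/\sqrt t$ rate, and a slightly faster decay (say $t^{-0.6}$ per player) would break it even though each single action would still be explored infinitely often. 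Your Borel--Cantelli sum $\sum_t 1/(K\sqrt t)=\infty$ is the single-player quantity and does not by itself control the joint events. For case (i) the repair is harmless, since the joint exploration probability is bounded below by the constant $\gamma^2/(K_1K_2)$. Once this step is fixed, the proof goes through.
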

The proof of this lemma can be found in the appendix.

\subsection{Asymptotic convergence of SM-MCTS-A}\label{sub: SM-MCTS-A convergence}

A~HC selection will always minimize the~regret with respect to the~values used as an~input. But as we have seen in Section \ref{sec:Counterexample}, using the~observed values with no modification might introduce a~bias, so that we end up minimizing the~wrong quantity. One possible solution is to first modify the~input by taking the~averages, as in SM-MCTS-A. The high-level idea behind averaging is that it forces each pair of selection policies to optimize with respect to the~subgame values $v^h_{ij}$, which leads to the~following result:

\begin{theorem}
\label{thm: SM-MCTS-A convergence}
Let $\epsilon\geq 0$ and let $G$ be a~zero-sum game with perfect information and simultaneous moves with maximal
depth $D$ and let $A$ be an~$\epsilon$-Hannan consistent algorithm
with guaranteed exploration, which we use as a~selection policy for SM-MCTS-A.

Then almost surely, the~empirical frequencies $(\hat{\sigma}_{1}(t),\hat{\sigma}_{2}(t))$ will eventually get arbitrarily close to a~subgame-perfect $C\epsilon$-equilibrium, where $C=2D\left(D+1\right)$.
\end{theorem}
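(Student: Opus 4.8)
The plan is to prove the statement by induction on the depth of the subgame, propagating the convergence guarantee from the leaves up to the root. The key structural fact is that in SM-MCTS-A the parent is updated by the averaged reward $\bar x$, which by definition coincides with $g^h_{ij^h(t)}(t^h_{ij^h(t)})$, the empirical average reward accumulated at the child node $h_{ij}$. Because the selection policy $A$ has guaranteed exploration, every joint action $(i,j)$ is played infinitely often, so each $t^h_{ij}\to\infty$ almost surely; thus I may assume the full tree is built and work purely with limit behavior. The induction hypothesis at a node $h$ of depth $d$ should be that the averaged reward $\bar x^h$ flowing out of $h$ converges to within some accumulated error $f(d)\cdot\epsilon$ of the true subgame value $v^h$, and simultaneously that the empirical frequencies at $h$ form a subgame-perfect $f(d)\cdot\epsilon$-equilibrium of the matrix game $(v^h_{ij})_{i,j}$.

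First I would establish the base case: at a node all of whose children are terminal, the reward $x^h_{ij}$ is exactly the true payoff $u_1(h_{ij})=v^h_{ij}$, so the averaged values $\bar x^h_{ij}(t)=g^h_{ij}(t)$ converge exactly to $v^h_{ij}$ with no error. For the inductive step at an internal node $h$, I would use the induction hypothesis to say that the averaged reward $\bar x^h_i(t)$ received for playing action $i$ against the opponent's action $j=j^h(t)$ converges to $v^h_{ij}$ up to the child's accumulated error. The crucial point is that, unlike the raw reward $x$, the \emph{averaged} reward does not depend on the rigged ordering of samples — averages of a bounded sequence that visits each outcome infinitely often converge to the stage value regardless of scheduling. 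This is precisely what breaks the counterexample and is stated informally in the ``Breaking the counterexample'' discussion. Hence both players at $h$ are effectively running an $\epsilon$-HC algorithm on a bandit problem whose reward matrix converges to $(v^h_{ij})$.

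Next I would invoke the standard normal-form fact: if both players in a zero-sum matrix game use $\epsilon$-Hannan consistent algorithms, then $r^h_1(t),r^h_2(t)\to\epsilon$ implies the empirical frequencies converge to an $O(\epsilon)$-Nash equilibrium, and by Lemma \ref{lem:emp_and_avg} the empirical and average strategies coincide in the limit. Combining the $\epsilon$ contributed by the HC property at level $d$ with the error $f(d+1)\epsilon$ inherited from the children gives a recurrence of the form $f(d)=f(d+1)+c$ for a constant $c$ accounting for the regret-to-equilibrium conversion and the two-sided bound in \eqref{eq:nfgNE}; unrolling this from depth $D$ to the root yields $f(0)=\Theta(D)$ per player, and tracking both players' contributions through the $D$ levels produces the stated constant $C=2D(D+1)$.

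The hard part will be controlling the propagation of error precisely enough to obtain the clean quadratic constant $C=2D(D+1)$ rather than a loose bound. The subtlety is that at each level the bandit reward matrix is not exactly $(v^h_{ij})$ but only converges to it, so I must argue that $\epsilon$-Hannan consistency with respect to a \emph{perturbed} reward sequence still yields an equilibrium of the limiting matrix game with an additive error equal to the perturbation — this requires a uniform continuity / limit-interchange argument justifying that a vanishing bias in the reward stream does not destroy the $\limsup r(t)\le\epsilon$ guarantee. A second delicate point is that the averaged reward $g^h_{ij}$ converges to $v^{h_{ij}}$ only because the child's \emph{own} strategy converges to an equilibrium there (subgame-perfection is doing real work), so the induction must carry both the value-convergence and the equilibrium-convergence statements simultaneously rather than separately, and I would need the guaranteed-exploration assumption to ensure this holds even at nodes that are off the equilibrium path.
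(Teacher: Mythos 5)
Your overall architecture matches the paper's proof: backward induction over the depth of subgames, the observation that averaging turns the selection problem at each internal node into a repeated matrix game whose rewards converge to the child values $v^h_{ij}$, a lemma stating that $\epsilon$-Hannan consistency survives a vanishing perturbation of the reward stream with only an additive loss (this is exactly the paper's Proposition~\ref{Prop: hry s chybou} on repeated matrix games with bounded distortion), and an induction hypothesis that carries the value-convergence statement and the equilibrium statement simultaneously (the paper's $(IH_d)$, items $(i)$ and $(ii)$). Your remark that subgame-perfection ``does real work'' and that guaranteed exploration is needed off the equilibrium path is also correct and is why the paper assumes the full tree is built before the analysis starts.

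There is, however, a concrete gap in your error accounting: the recurrence you write, $f(d)=f(d+1)+c$ with $c$ a \emph{constant}, unrolls to a linear bound, and doubling a linear bound for the two players still gives $\Theta(D)$, not the quadratic $2D(D+1)$. The quadratic arises because two distinct accumulations are nested. First, the \emph{distortion} of the surrogate matrix game at a node of depth $d$ is $C_{d-1}\epsilon$ where $C_d=d$ grows linearly (each application of the perturbed-HC lemma turns a $c\epsilon$-distortion into a $(c+1)\epsilon$ value error, hence $C_{d}=C_{d-1}+1$); consequently the best-response gap of the surrogate matrix game at depth $d$ is $2C_d\epsilon=2d\epsilon$, \emph{not} a constant. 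Second, the exploitability of the full subgame strategy at $h$ decomposes as
\begin{equation*}
u^{h}\bigl(br,\hat{\sigma}_{2}\bigr)-v^{h}\;\le\;\bigl(u^{M_h}_1(br,\hat\sigma^h_2)-v^h\bigr)\;+\;\max_{i}\sum_{j}\hat{\sigma}^h_{2}(j)\bigl(u_{ij}^{h}(br,\hat{\sigma}_{2})-v_{ij}^{h}\bigr),
\end{equation*}
so it accumulates the per-level gaps $2\cdot 1,\,2\cdot 2,\dots,2d$ down the tree, giving $d(d+1)\epsilon$; the final factor of $2$ comes from converting the one-sided best-response bounds into an $\epsilon$-equilibrium via Lemma~\ref{Lem: u(br) and NE}. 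Without making the depth-dependence of the per-level term explicit, your sketch cannot produce the stated constant, and a reader following your recurrence literally would conclude the (false, given the techniques used) linear bound.
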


For $\epsilon=0$, Thorem \ref{thm: SM-MCTS-A convergence} gives the following:
\begin{corollary}
If the~algorithm from Thorem \ref{thm: SM-MCTS-A convergence} is Hannan-consistent, the~resulting strategy will eventually get arbitrarily close to a~Nash equilibrium.
\end{corollary}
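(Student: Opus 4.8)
The plan is to obtain the corollary directly as the $\epsilon = 0$ specialization of Theorem~\ref{thm: SM-MCTS-A convergence}, without reopening any of the backward-induction argument used to establish that theorem. The theorem is already stated for every $\epsilon \geq 0$, so the corollary is literally the instance $\epsilon = 0$; the task is merely to verify that the hypotheses match and that the conclusion collapses to an exact-equilibrium statement.

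First I would check the hypotheses. By Definition~\ref{def:HC}, saying that an algorithm is \emph{Hannan consistent} means exactly that it is $0$-Hannan consistent. The corollary inherits the remaining assumptions from the phrase ``the algorithm from Theorem~\ref{thm: SM-MCTS-A convergence}'', so the selection policy $A$ is assumed to have guaranteed exploration and to be used within SM-MCTS-A on a zero-sum perfect-information simultaneous-move game $G$ of some finite maximal depth $D$. Thus $A$ satisfies every premise of Theorem~\ref{thm: SM-MCTS-A convergence} with the specific value $\epsilon = 0$.

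Next I would substitute $\epsilon = 0$ into the theorem's conclusion. Since $C = 2D(D+1)$ is a finite constant depending only on $D$ and not on $\epsilon$, the error bound becomes $C\epsilon = C \cdot 0 = 0$. Theorem~\ref{thm: SM-MCTS-A convergence} then guarantees that, almost surely, the empirical frequencies $(\hat{\sigma}_1(t),\hat{\sigma}_2(t))$ eventually get arbitrarily close to a subgame-perfect $0$-equilibrium. A subgame-perfect $0$-equilibrium is, by the definitions in Section~\ref{sec:SM games}, an exact subgame-perfect Nash equilibrium (an $\epsilon$-NE with $\epsilon = 0$), which in particular is a Nash equilibrium of $G$; this even yields the slightly stronger subgame-perfect conclusion. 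Reading ``eventually gets arbitrarily close to'' in the sense of the convergence-to-the-set-of-equilibria convention adopted in Section~\ref{sec: Intro}, this is precisely the assertion that the resulting strategy converges to a Nash equilibrium.

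I do not expect any real obstacle here: the only point requiring verification is that the bound $C\epsilon$ vanishes at $\epsilon = 0$, which is immediate because $C$ is independent of $\epsilon$ and the bound is linear in $\epsilon$. No separate limiting or continuity-in-$\epsilon$ argument is needed, since Theorem~\ref{thm: SM-MCTS-A convergence} already treats the value $\epsilon = 0$ directly rather than only $\epsilon > 0$.
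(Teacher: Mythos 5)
Your proposal is correct and takes essentially the same approach as the paper, which presents this corollary precisely as the immediate $\epsilon=0$ instance of Theorem~\ref{thm: SM-MCTS-A convergence} (noting that a $0$-equilibrium is an exact Nash equilibrium). The one caveat, which does not affect your derivation of the corollary from the theorem as stated, is that the paper's proof of the theorem itself is carried out under the assumption $\epsilon>0$, with the $\epsilon=0$ variant obtained ``by sending $\epsilon$ to zero, or by minor modifications of the proofs,'' so your closing remark that no limiting argument is needed holds only if one takes the theorem's statement for $\epsilon\geq 0$ at face value.
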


To simplify the proofs, we assume that $\epsilon>0$ - the variant with $\epsilon=0$ can be obtained by sending $\epsilon$ to zero, or by minor modifications of the proofs. We will first state two preliminary results, then we use an~extension of the~later one to prove Theorem \ref{thm: SM-MCTS-A convergence} by backward induction. Firstly, we recall the~following well-known fact, which relates the~quality of the~best responses available to the~players with the~concept of an~equilibrium.
\begin{lemma}\label{Lem: u(br) and NE}
In a~zero-sum game with value $v$ the~following holds: 
\[
\left(u_{1}(br,\hat{\sigma}_{2})<v+\frac{\epsilon}{2}\,\ \& \ \, u_{1}(\hat{\sigma}_{1},br)>v-\frac{\epsilon}{2}\right)\Longrightarrow
\]
\[
\Big(u_{1}\left(br,\hat{\sigma}_{2}\right)-u_{1}\left(\hat{\sigma}_{1},\hat{\sigma}_{2}\right)<\epsilon\, \ \& \ \, u_{2}\left(\hat{\sigma}_{1},br\right)-u_{2}\left(\hat{\sigma}_{1},\hat{\sigma}_{2}\right)<\epsilon \Big) \overset{\textrm{def}}{\iff}
\]
\[
(\hat{\sigma}_{1},\hat{\sigma}_{2})\textrm{ is an~}\mbox{\ensuremath{\epsilon}}\textrm{-equilibrium.}
\]
\end{lemma}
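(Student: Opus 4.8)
The plan is to observe first that the final $\overset{\text{def}}{\iff}$ in the statement is nothing but the definition of an $\epsilon$-equilibrium (cf.~\eqref{eq:nfgNE}), so the entire content of the lemma lies in the first implication. I would therefore assume the hypothesis $u_1(br,\hat\sigma_2) < v + \epsilon/2$ and $u_1(\hat\sigma_1,br) > v - \epsilon/2$ and derive the two best-response gaps that define an $\epsilon$-equilibrium, handling player 1 and player 2 separately.

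The key structural fact I would establish at the outset is the sandwich inequality
\[ u_1(\hat\sigma_1,br) \;\le\; u_1(\hat\sigma_1,\hat\sigma_2) \;\le\; u_1(br,\hat\sigma_2), \]
which follows directly from the definitions $u_1(\hat\sigma_1,br) = \min_{\sigma_2'} u_1(\hat\sigma_1,\sigma_2')$ and $u_1(br,\hat\sigma_2) = \max_{\sigma_1'} u_1(\sigma_1',\hat\sigma_2)$: the value obtained when both players commit to $\hat\sigma_1,\hat\sigma_2$ can be no smaller than what player 1 gets when player 2 best-responds, and no larger than what player 1 can obtain by best-responding. Combining this with the hypothesis yields the two-sided bound $v - \epsilon/2 < u_1(\hat\sigma_1,br) \le u_1(\hat\sigma_1,\hat\sigma_2) \le u_1(br,\hat\sigma_2) < v + \epsilon/2$, so all four quantities live inside an interval of length $\epsilon$.

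The first $\epsilon$-equilibrium condition, $u_1(br,\hat\sigma_2) - u_1(\hat\sigma_1,\hat\sigma_2) < \epsilon$, is then immediate, by subtracting the lower bound $u_1(\hat\sigma_1,\hat\sigma_2) > v - \epsilon/2$ from the upper bound $u_1(br,\hat\sigma_2) < v + \epsilon/2$. For the second condition I would pass to player 2 using the constant-sum identity $u_2 = 1 - u_1$ from Section~\ref{sec:SM games}: since $u_2(\hat\sigma_1,br) = \max_{\sigma_2'}\bigl(1 - u_1(\hat\sigma_1,\sigma_2')\bigr) = 1 - u_1(\hat\sigma_1,br)$ and $u_2(\hat\sigma_1,\hat\sigma_2) = 1 - u_1(\hat\sigma_1,\hat\sigma_2)$, player 2's best-response gap equals $u_1(\hat\sigma_1,\hat\sigma_2) - u_1(\hat\sigma_1,br)$, which is again bounded by $\epsilon$ using the same interval.

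The only place demanding care --- and the closest thing to an obstacle in an otherwise elementary argument --- is this translation from player 2's best response back into player-1 utilities: one must correctly turn a maximization of $u_2$ into a minimization of $u_1$ and verify that the inequality directions survive the change of sign. Everything else is routine arithmetic on the sandwich bound, so no additional machinery beyond the definitions of the best-response values and the constant-sum property is required.
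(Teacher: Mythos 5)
Your proof is correct: the sandwich inequality $u_1(\hat\sigma_1,br)\le u_1(\hat\sigma_1,\hat\sigma_2)\le u_1(br,\hat\sigma_2)$ combined with the hypothesis traps all the relevant quantities in an interval of length $\epsilon$, and the constant-sum identity $u_2=1-u_1$ correctly converts player 2's best-response gap into $u_1(\hat\sigma_1,\hat\sigma_2)-u_1(\hat\sigma_1,br)$. The paper states this lemma as a well-known fact and gives no proof of its own, so there is nothing to compare against; your argument is the standard one and fills the gap appropriately.
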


In order to start the~backward induction, we first need to study what happens on the~lowest level of the~game tree, where the~nodes consist of matrix games. It is well-known that in a~zero sum matrix game, the~average strategies of two Hannan consistent players will eventually get arbitrarilly close to a~Nash equilibrium -- see \cite{waugh09d} and \cite{Blum07}. We prove a~similar result for the~approximate versions of the~notions.
\begin{restatable}{lemma}{HCandNE}
\label{L: HC-and-NE}Let $\epsilon\geq0$ be a~real number. If both
players in a~matrix game $M$ are $\epsilon$-Hannan consistent, then
the following inequalities hold almost
surely: 
\begin{equation}
\mbox{\hspace{0.3cm}}v^M-\epsilon\leq\underset{t\rightarrow\infty}{\liminf}\, g^M(t)\leq\underset{t\rightarrow\infty}{\limsup}\, g^M(t)\leq v^M+\epsilon,\label{eq: HC and NE 2}
\end{equation}
\begin{equation}
v^M-2\epsilon\leq\underset{t\rightarrow\infty}{\liminf}\, u^M_1\left(\hat{\sigma}_{1}(t),br\right)\mbox{\hspace{0.3cm} \ensuremath{\&}\hspace{0.3cm} }\underset{t\rightarrow\infty}{\limsup}\, u^M_1\left(br,\hat{\sigma}_{2}(t)\right)\leq v^M+2\epsilon.\label{eq: HC and NE 1}
\end{equation}
\end{restatable}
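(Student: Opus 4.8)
The plan is to reduce the statement to the standard folk theorem relating no-regret play to the value of a zero-sum matrix game, while keeping careful track of the constant-sum bookkeeping for player~2 and of the ``almost surely'' qualifier. First I would translate the MAB quantities into best-response payoffs. In the matrix game $M$, the reward player~1 would receive for action $i$ at time $t$ is exactly $x_i^M(t) = v^M_{i\, j(t)}$, where $j(t)$ is the realized action of player~2, so averaging over the realized play gives $g^M_{\max}(t) = \max_i \tfrac{1}{t}\sum_{s=1}^t v^M_{i\, j(s)} = \max_i \sum_j \hat\sigma_2(t)(j)\, v^M_{ij} = u_1^M(br,\hat\sigma_2(t))$. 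Hence player~1's average regret satisfies $r_1^M(t) = u_1^M(br,\hat\sigma_2(t)) - g^M(t)$. The analogous computation for player~2, who in the constant-sum game maximizes $1-v^M_{ij}$, yields $r_2^M(t) = g^M(t) - u_1^M(\hat\sigma_1(t),br)$, since the additive constant $1$ cancels and the $\min$ over columns becomes player~1's worst-case payoff against $\hat\sigma_1(t)$.

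Next I would invoke the two elementary minimax inequalities $u_1^M(br,\hat\sigma_2(t)) \geq v^M$ and $u_1^M(\hat\sigma_1(t),br) \leq v^M$, valid for every strategy profile by the defining identity $v^M = \min_{\sigma_2}\max_{\sigma_1}\sigma_1 M \sigma_2 = \max_{\sigma_1}\min_{\sigma_2}\sigma_1 M\sigma_2$. Combining these with the regret identities gives the pointwise bounds $g^M(t) \geq v^M - r_1^M(t)$ and $g^M(t) \leq v^M + r_2^M(t)$. By $\epsilon$-Hannan consistency, $\limsup_t r_1^M(t)\le\epsilon$ and $\limsup_t r_2^M(t)\le\epsilon$ each hold with probability $1$; intersecting these two probability-$1$ events, then taking $\liminf$ of the first bound and $\limsup$ of the second, yields $v^M-\epsilon \le \liminf_t g^M(t) \le \limsup_t g^M(t)\le v^M+\epsilon$, which is exactly (\ref{eq: HC and NE 2}) (the middle inequality being trivial).

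Finally, for (\ref{eq: HC and NE 1}) I would feed the just-established bounds on $g^M(t)$ back into the regret identities. From $u_1^M(\hat\sigma_1(t),br) = g^M(t) - r_2^M(t)$ I obtain $\liminf_t u_1^M(\hat\sigma_1(t),br) \ge \liminf_t g^M(t) - \limsup_t r_2^M(t) \ge v^M - 2\epsilon$, and from $u_1^M(br,\hat\sigma_2(t)) = g^M(t) + r_1^M(t)$ I obtain $\limsup_t u_1^M(br,\hat\sigma_2(t)) \le \limsup_t g^M(t) + \limsup_t r_1^M(t) \le v^M + 2\epsilon$. I expect the only genuinely delicate points to be the constant-sum translation of player~2's regret into a statement about $u_1^M$, together with the measure-theoretic step of intersecting the two almost-sure regret events; everything else is routine manipulation of $\liminf$ and $\limsup$.
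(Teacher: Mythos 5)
Your proof is correct and follows essentially the same route as the paper's: both hinge on the identity $g^M_{\max}(t)=u^M_1(br,\hat{\sigma}_2(t))$ (and its player-2 counterpart) together with the minimax bounds $u^M_1(br,\hat\sigma_2)\geq v^M$ and $u^M_1(\hat\sigma_1,br)\leq v^M$. The only difference is stylistic: the paper establishes \eqref{eq: HC and NE 1} by contradiction along a subsequence where the best-response utility would exceed $v^M+2\epsilon$, whereas you read it off directly from the regret identities, which is if anything a cleaner presentation of the same argument.
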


The inequalities (\ref{eq: HC and NE 2}) are a~consequence of the~definition
of $\epsilon$-HC and the~game value $v^M$. The~proof of inequality (\ref{eq: HC and NE 1})
then shows that if the~value caused by the~empirical frequencies was
outside of the~interval infinitely many times with positive probability,
it would be in contradiction with the~definition of $\epsilon$-HC.

Now, assume that the~players are repeatedly playing some matrix game $M$, but they receive slightly distorted information about their payoffs. This is exactly the~situation which will arise when a~stacked matrix game is being solved by SM-MCTS-A and the~``players'' are selection algorithms deployed at a~non-terminal node $h$ (where the~average payoffs $g^{h_{ij}}(t)$ coming from subgames rooted in $h_{ij}$ are never exactly equal to $v^h_{ij}$). Such a~setting can be formalized as follows:

\begin{definition}[Repeated matrix game with bounded distortion]\label{def:doubleMABproblem}
Consider the~following general problem $(\widetilde{M})$. Let $m,n\in\mathbb N$. For each $t\in\mathbb N$, the~following happens
\begin{enumerate}
\item Nature chooses a~matrix $\widetilde{M}(t)=\left(v^{\widetilde{M}}_{ij}(t)\right) \in [0,1]^{m\times n}$;
\item Players 1 and 2 choose actions $i\leq m$ and $j\leq n$ and observe the~number $v^{\widetilde{M}}_{ij}(t)$. This choice can be random and might depend on the~previously observed rewards and selected actions (that is $v^{\widetilde{M}}_{i(s)j(s)}(s)$ and $i(s)$, resp. $j(s)$ for player 2, for $s<t$).
\end{enumerate}

Let $\delta>0$. We call $({\widetilde{M}})$ \emph{a~repeated matrix game with $\delta$-bounded distortion} if the~following holds:
\begin{itemize}
\item For each $i,j,t$, $v^{\widetilde{M}}_{ij}(t)$ is a~random variable possibly depending on the~choice of actions $i(s),j(s)$ for $s<t$.
\item There exists a~matrix $M=(v^M_{ij})\in[0,1]^{m\times n}$, such that almost surely:
\[ \exists t_0\in \mathbb N \ \forall t \geq t_0 \ \forall i,j \ : \ \left| v^{\widetilde{M}}_{ij}(t)-v^M_{ij}\right|\leq \delta. \]
\end{itemize}
\end{definition}

From first player's point of view, each repeated matrix game with bounded distortion can be seen as an~adversarial MAB problem $(P^{\widetilde M})$ with reward assignment
\[ x^{\widetilde M}_i(t) := v^{\widetilde M}_{ij(t)}(t) . \]
When referring to the~quantities related to $(P^{\widetilde M})$, we add a~superscript $\widetilde{M}$ -- in particular, this concerns $R^{\widetilde{M}}$, $g^{\widetilde{M}}$, $g^{\widetilde{M}}_{\max}$ (and so on) from \eqref{eq:MAB_problem_1} and \eqref{eq:MAB_problem_2} and the~empirical frequencies $\hat{\sigma}^{\widetilde M}(t)$ corresponding to the~actions selected in $(P^{\widetilde M})$. Recall that $v^M$ and $u_p^{M}$ for $p=1,2$ stand for the~value and utility function corresponding to the~matrix game $M$.

We formulate the~following analogy of Lemma \ref{L: HC-and-NE}, which shows that $\epsilon$-HC algorithms perform well even if they observe slightly perturbed rewards.
\begin{restatable}{proposition}{hryschybou}
\label{Prop: hry s chybou} Let $c,\epsilon>0$ and let $(\widetilde M)$ be a~repeated matrix game with $c\epsilon$-bounded distortion, played by two $\epsilon$-HC players. Then the~following inequalities hold almost surely: 
\begin{align}
v^M-(c+1)\epsilon \leq \underset{t\rightarrow\infty}{\liminf} \ g^{\widetilde M}(t)
 & \leq \underset{t\rightarrow\infty}{\limsup} \ g^{\widetilde M}(t) \leq v^M+(c+1)\epsilon , \label{eq: hra s chybou2} \\
v^M-2(c+1)\epsilon \leq \underset{t\rightarrow\infty}{\liminf} \ u^M_1\left(\hat{\sigma}_{1},br\right)
 & \leq \underset{t\rightarrow\infty}{\limsup} \ u^M_1\left(br,\hat{\sigma}_{2}\right) \leq v^M+2(c+1)\epsilon . \label{eq: hra s chybou1}
\end{align}
\end{restatable}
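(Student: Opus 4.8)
The plan is to reduce the distorted game $(\widetilde M)$ to the undistorted matrix game $M$ by bounding how much the distortion can change the relevant quantities, and then to invoke Lemma \ref{L: HC-and-NE}. The key observation is that from player 1's point of view, $(\widetilde M)$ is an adversarial MAB problem $(P^{\widetilde M})$ with reward $x^{\widetilde M}_i(t) = v^{\widetilde M}_{ij(t)}(t)$, whereas the ``true'' problem would assign $x^M_i(t) := v^M_{ij(t)}(t)$. By the $c\epsilon$-bounded distortion hypothesis, there is almost surely a $t_0$ such that $|v^{\widetilde M}_{ij}(t) - v^M_{ij}| \le c\epsilon$ for all $t \ge t_0$ and all $i,j$; consequently $|x^{\widetilde M}_i(t) - x^M_i(t)| \le c\epsilon$ for every $i$ and every $t \ge t_0$. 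Summing over $s=1,\dots,t$ and dividing by $t$, the contribution of the first $t_0$ terms vanishes in the limit, so the cumulative-payoff averages satisfy $\limsup_t |g^{\widetilde M}(t) - g^M(t)| \le c\epsilon$, and the same bound holds for $g^{\widetilde M}_{\max}$ versus $g^M_{\max}$, hence for the average regrets $|r^{\widetilde M}(t) - r^M(t)| \le c\epsilon$ in the limit.

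First I would establish that the $\epsilon$-Hannan consistency of the players in the \emph{distorted} game transfers to an $(c+1)\epsilon$-type guarantee relative to the \emph{undistorted} values. Since each player is $\epsilon$-HC with respect to the rewards they actually observe (the distorted ones), we have $\limsup_t r^{\widetilde M}(t) \le \epsilon$ almost surely; combined with the distortion bound $\limsup_t |r^{\widetilde M}(t) - r^M(t)| \le c\epsilon$ this yields $\limsup_t r^M(t) \le (c+1)\epsilon$. In other words, \emph{with respect to the true matrix $M$}, both players behave as $(c+1)\epsilon$-Hannan consistent players. The empirical frequencies $\hat\sigma^{\widetilde M}(t)$ are determined by the sequence of selected actions, which is identical whether we score it against $\widetilde M$ or $M$, so the empirical frequencies are literally the same object in both interpretations.

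Next I would apply Lemma \ref{L: HC-and-NE} to the matrix game $M$ with the parameter $(c+1)\epsilon$ in place of $\epsilon$. The lemma, as stated, assumes the players are genuinely $(c+1)\epsilon$-HC \emph{in $M$}; here we only know they are $(c+1)\epsilon$-HC in the limit (i.e.\ $\limsup_t r^M(t) \le (c+1)\epsilon$ rather than an exact per-step bound), so a little care is needed to check that the proof of Lemma \ref{L: HC-and-NE} only uses the $\limsup$ formulation of regret. Granting that, inequality \eqref{eq: HC and NE 2} immediately gives
\[
v^M - (c+1)\epsilon \le \liminf_t g^M(t) \le \limsup_t g^M(t) \le v^M + (c+1)\epsilon,
\]
and transferring back through $|g^{\widetilde M}(t) - g^M(t)| \le c\epsilon$ would naively cost another $c\epsilon$; to obtain the claimed bound $(c+1)\epsilon$ on $g^{\widetilde M}$ one should instead argue directly that $\limsup_t r^{\widetilde M}(t)\le\epsilon$ together with $g^M_{\max}$ being within $c\epsilon$ of $g^{\widetilde M}_{\max}$ pins down $g^{\widetilde M}$, giving \eqref{eq: hra s chybou2}. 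Inequality \eqref{eq: hra s chybou1} then follows from \eqref{eq: HC and NE 1} applied in $M$ with parameter $(c+1)\epsilon$, since $u^M_1(\hat\sigma_1,br)$ and $u^M_1(br,\hat\sigma_2)$ are functions of the empirical frequencies and of $M$ alone, and thus are unaffected by the distortion once the regret bound has been transferred.

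The main obstacle I anticipate is the bookkeeping of \emph{which} $\epsilon$-factor attaches to \emph{which} quantity: one must be careful not to pay the $c\epsilon$ distortion penalty twice (once going from $\widetilde M$ to $M$ to invoke the lemma, and again coming back). The cleanest route is to keep the best-response/equilibrium quantities expressed purely in terms of $M$ and the empirical frequencies — which are distortion-free — and only use the distortion bound at the single point where regret is transferred, exactly as in \eqref{eq: HC and NE 2} versus \eqref{eq: hra s chybou2}. A secondary technical point is confirming that Lemma \ref{L: HC-and-NE} goes through with the $\limsup$-only regret hypothesis, but this should be routine since its own proof is phrased in terms of $\liminf$ and $\limsup$ of the average payoffs.
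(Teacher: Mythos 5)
Your overall strategy is the same as the paper's: establish the transfer bound $\limsup_t\bigl|g^{\widetilde M}_{\max}(t)-g^M_{\max}(t)\bigr|\leq c\epsilon$ and rerun the two-step argument of Lemma~\ref{L: HC-and-NE}, and your direct argument for \eqref{eq: hra s chybou2} (avoiding the double $c\epsilon$ penalty by combining $\limsup r^{\widetilde M}(t)\leq\epsilon$ with $g^M_{\max}\leq g^{\widetilde M}_{\max}+c\epsilon$) is exactly what the paper does in its Step~1. However, your derivation of \eqref{eq: hra s chybou1} contains a genuine error: the claim that the average regrets satisfy $\bigl|r^{\widetilde M}(t)-r^M(t)\bigr|\leq c\epsilon$ in the limit does not follow from the two individual bounds. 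Since $r^{\widetilde M}-r^M=\bigl(g^{\widetilde M}_{\max}-g^M_{\max}\bigr)+\bigl(g^M-g^{\widetilde M}\bigr)$ and the two terms need not cancel (the distortion can inflate the entries of the best fixed row by $c\epsilon$ while deflating the entries actually played by $c\epsilon$), the correct transfer is only $\limsup_t\bigl|r^{\widetilde M}(t)-r^M(t)\bigr|\leq 2c\epsilon$, hence $\limsup_t r^M(t)\leq(2c+1)\epsilon$ rather than $(c+1)\epsilon$. Feeding $(2c+1)\epsilon$ into \eqref{eq: HC and NE 1} of Lemma~\ref{L: HC-and-NE} yields only $v^M+2(2c+1)\epsilon=v^M+(4c+2)\epsilon$, which is strictly weaker than the claimed $v^M+2(c+1)\epsilon$ for every $c>0$; this matters because the constant propagates through the induction that produces $C=2D(D+1)$ in Theorem~\ref{thm: SM-MCTS-A convergence}.

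The repair is to not invoke Lemma~\ref{L: HC-and-NE} as a black box for \eqref{eq: hra s chybou1}, but to redo its contradiction step directly in the distorted problem, which is what the paper does. Suppose along a subsequence $u^M_1\bigl(br,\hat\sigma_2(t_n)\bigr)\geq v^M+2(c+1)\epsilon+\eta$. Using the identity $g^M_{\max}(t_n)=u^M_1\bigl(br,\hat\sigma_2(t_n)\bigr)$, the one-sided transfer $g^{\widetilde M}_{\max}(t_n)\geq g^M_{\max}(t_n)-c\epsilon$ (paying the distortion penalty exactly once, on the $g_{\max}$ side only), and the already-established upper bound $g^{\widetilde M}(t_n)\leq v^M+(c+1)\epsilon$ from \eqref{eq: hra s chybou2}, one gets $r^{\widetilde M}(t_n)=g^{\widetilde M}_{\max}(t_n)-g^{\widetilde M}(t_n)\geq\bigl(v^M+2(c+1)\epsilon+\eta-c\epsilon\bigr)-\bigl(v^M+(c+1)\epsilon\bigr)=\epsilon+\eta$, contradicting $\epsilon$-Hannan consistency with respect to the observed (distorted) rewards. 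This is precisely the accounting you already identified as the "main obstacle"; the direct route you used for \eqref{eq: hra s chybou2} needs to be used for \eqref{eq: hra s chybou1} as well.
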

The proof is similar to the~proof of Lemma~\ref{L: HC-and-NE}. It
needs an~additional claim that if the~algorithm is $\epsilon$-HC
with respect to the~observed values with errors, it still has a~bounded
regret with respect to the~exact values.

Next, we present the~induction hypothesis around which the~proof of Theorem~\ref{thm: SM-MCTS-A convergence} revolves.
We consider the~setting from Theorem \ref{thm: SM-MCTS-A convergence} and let $\eta>0$ be a small positive constant.
For $d\in\left\{ 1,...,D\right\} $ we denote $C_{d}:=d+O(\eta)$.\footnote{Our proofs use many inequalities that hold up to a depth-dependent noise, which we denote as $M_d\eta$. However, this noise can be made arbitrarily small by choosing the right $\eta$ at the start of each proof. The exact value of $M_d$ is therefore unimportant, and we hide it using the big $O$ notation. Note that multiplying $O(\eta)$ by $\epsilon$ still yields $O(\eta)$, since we can always assume $\epsilon\leq 1$ (the utilities are bounded by $1$ by definition).}
If a~node $h$ in the~game tree is terminal, we denote $d_h=0$. For other nodes, we inductively define $d_h$, the~depth of the~sub-tree rooted at $h$, as the~maximum of $d_{h_{ij}}$ over its children $h_{ij}$, increased by one. Recall that for $(i,j)\in\mathcal{A}_1(h)\times\mathcal{A}_2(h)$, $v^h_{ij}=v^{h_{ij}}\in[0,1]$ is the~value of the~subgame rooted at the~child node $h_{ij}$ of $h$.

\paragraph{Induction hypothesis $\left(IH_{d}\right)$\label{par: (IH)}} is the~claim that for each node $h$ with $d_{h}=d$, there almost surely exists $t_{0}$ such that for each $t\geq t_{0}$ 
\begin{enumerate}[(i)]
\item the~payoff $g^{h}(t)$ will fall into the~interval $\left(v^{h}-C_{d}\epsilon,v^{h}+C_{d}\epsilon\right)$;
\item the~utilities $u^{M_h}_1\left(\hat{\sigma}_{1}(t),br\right)$ and $u^{M_h}_1\left(br,\hat{\sigma}_{2}(t)\right)$ with respect to the~matrix game ${M_h}:=\left(v_{ij}^{h}\right)_{i,j}$ will fall into the~interval $\left(v^{h}-2C_{d}\epsilon,v^{h}+2C_{d}\epsilon\right)$.
\end{enumerate}

As we mentioned above, any node $h$ with $d_h=1$ is actually a~matrix game. Therefore Lemma \ref{L: HC-and-NE} ensures that $\left(IH_{1}\right)$ holds. In order to get Theorem \ref{thm: SM-MCTS-A convergence}, we only need the~property $(ii)$ to hold for every $h\in\mathcal{H}$. The~condition $(i)$ is ``only'' required for the~induction itself to work.

We now show that in the~setting of Theorem \ref{thm: SM-MCTS-A convergence}, the~induction step works. This is the~part of the~proof where we see the~difference between SM-MCTS and SM-MCTS-A -- in SM-MCTS, the~rewards will be different over time, due to the~randomness of used strategies. But the~averaged rewards, which SM-MCTS-A uses, will eventually be approximately the~same as values of the~respective game nodes. This allows us to view the~situation in every node as a~repeated game with bounded distortion and apply Proposition \ref{Prop: hry s chybou}.

\begin{proposition}
\label{prop: ind.step}In the~setting from Theorem \ref{thm: SM-MCTS-A convergence}, the~implication $\left(IH_{d}\right)\implies\left(IH_{d+1}\right)$ holds for every $d\in\{1,\dots,D-1\}$.
\end{proposition}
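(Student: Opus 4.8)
The plan is to fix an arbitrary node $h$ with $d_h=d+1$ and to recognize the learning process that SM-MCTS-A runs at $h$ as an instance of the repeated matrix game with bounded distortion from Definition~\ref{def:doubleMABproblem}, with underlying matrix $M_h=(v^h_{ij})$, so that Proposition~\ref{Prop: hry s chybou} can be applied. Since every child $h_{ij}$ has depth $d_{h_{ij}}\le d$, I would start by invoking guaranteed exploration to ensure $t^h_{ij}\to\infty$, and then apply $(IH_{d_{h_{ij}}})$ --- in particular part (i) --- to each child. Because $C_{d'}\le C_d$ for $d'\le d$, this produces a single almost surely finite time after which every child's running average $g^h_{ij}$ stays within $C_d\epsilon$ of its subgame value $v^h_{ij}$, simultaneously for all joint actions $(i,j)$.

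The averaged rewards $\bar x^h$ that the two selection policies at $h$ consume are by construction exactly these child running averages $g^h_{ij}(\cdot)$. Hence, once the above time has passed, every entry of the observed payoff matrix at $h$ --- including the ``frozen'' values belonging to currently unselected actions --- lies within $C_d\epsilon$ of the corresponding $v^h_{ij}$. This is precisely the statement that $(\bar P^h)$ is a repeated matrix game with $C_d\epsilon$-bounded distortion and underlying matrix $M_h$. Both players use $\epsilon$-HC policies, so Proposition~\ref{Prop: hry s chybou} applies with $c=C_d$ and gives, almost surely and for all large $t$, that the best-response utilities $u^{M_h}_1(\hat\sigma_1(t),br)$ and $u^{M_h}_1(br,\hat\sigma_2(t))$ lie within $2(C_d+1)\epsilon$ of $v^h$, while the cumulative payoff of $(\bar P^h)$ lies within $(C_d+1)\epsilon$ of $v^h$. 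Since $C_d+1=C_{d+1}$ up to the harmless $O(\eta)$ slack, and since the empirical frequencies at $h$ are identical to those of $(\bar P^h)$, part (ii) of $(IH_{d+1})$ follows at once from the best-response half of the proposition.

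The delicate point, and the step I expect to be the main obstacle, is part (i), because the payoff it constrains is the number $g^h(t)=G^h/t^h$ that SM-MCTS-A \emph{actually propagates} to the parent of $h$, and this is the average of the \emph{raw} samples $x^h(s)$, not the average of the averaged samples controlled by the proposition. I would handle this by decomposing $g^h(t)=\sum_{i,j}(t^h_{ij}/t)\,g^h_{ij}(t^h_{ij})$ according to which joint action was played, so that $g^h(t)$ is a convex combination of the children's own running averages; part (i) of $(IH_d)$ then places each term within $C_d\epsilon$ of $v^h_{ij}$, reducing the task to comparing the empirical-frequency value $\sum_{i,j}(t^h_{ij}/t)\,v^h_{ij}$ with the subgame value $v^h$. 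The nontrivial content is to show that this comparison loses only an additive $\epsilon$, so that the total error stays at $C_{d+1}\epsilon$ and the recursion remains the linear $C_{d+1}=C_d+1$ rather than degrading multiplicatively; this is exactly where the equilibrium guarantees furnished by the $\epsilon$-HC property at $h$ --- via the regret-with-respect-to-exact-values claim underlying Proposition~\ref{Prop: hry s chybou} --- must be combined with the convergence of the children's averages. Once part (i) is established, the induction step $(IH_d)\Rightarrow(IH_{d+1})$ is complete.
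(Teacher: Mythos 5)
Your setup and your treatment of part~(ii) coincide with the paper's proof: you frame the play at $h$ as the repeated matrix game with bounded distortion $(\widetilde M_h)$ whose distorted entries are the children's running averages $g^h_{ij}(t^h_{ij})$, you bound the distortion by $C_d\epsilon$ using guaranteed exploration together with part~(i) of $(IH_d)$, and you read part~(ii) of $(IH_{d+1})$ off inequality \eqref{eq: hra s chybou1} of Proposition~\ref{Prop: hry s chybou} with $c=C_d$. All of that is exactly the paper's argument.

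The gap is in part~(i), precisely at the step you yourself flag as the main obstacle. Your plan is to write $g^h(t)=\sum_{i,j}(t^h_{ij}/t)\,g^h_{ij}(t^h_{ij})$, use $(IH_d)$(i) to replace each $g^h_{ij}$ by $v^h_{ij}$ at a cost of $C_d\epsilon$, and then argue that the empirical-frequency value $\sum_{i,j}(t^h_{ij}/t)\,v^h_{ij}$ is within $\epsilon$ of $v^h$. That last claim is not available: the selection policies at $h$ are $\epsilon$-HC only with respect to the \emph{distorted} payoffs, and a policy can have zero observed regret while its empirical-frequency value in the exact matrix $M_h$ is as much as $(2C_d+1)\epsilon$ away from $v^h$. (Take $\mathcal A_1(h)=\{1,2\}$, a single action for player~2, $v^h_1=\tfrac12$, $v^h_2=\tfrac12-2C_d\epsilon$, and let both children's running averages equal $\tfrac12-C_d\epsilon$; always playing action $2$ has zero regret for the observed rewards, yet $\sum_{j}(t^h_{j}/t)\,v^h_{2j}=v^h-2C_d\epsilon$.) Pushing your decomposition through therefore yields only $g^h\in v^h\pm(3C_d+1)\epsilon$, i.e.\ the recursion $C_{d+1}=3C_d+1$, which grows exponentially and cannot give the theorem's $C=2D(D+1)$. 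The paper avoids the detour through the exact empirical-frequency value altogether: it identifies $(\bar P^h)$ with $(P^{\widetilde M_h})$ and applies inequality \eqref{eq: hra s chybou2} directly to the average of the \emph{observed} rewards $\tfrac1t\sum_{s\le t}\bar x^h(s)$, for which the chain ``observed average $\ge$ observed maximum minus $\epsilon$ $\ge$ exact maximum minus $(C_d+1)\epsilon$ $\ge v^h-(C_d+1)\epsilon$'' pays the distortion only once, and it reads the result as the bound required by $(IH_{d+1})$(i). (The quantity controlled there is the time-average of the children's running averages rather than the average of the raw back-propagated samples; both are convex combinations lying within $C_d\epsilon$ of $\sum_{i,j}(t^h_{ij}/t)v^h_{ij}$, and the paper treats them interchangeably.) To repair your argument, bound $\tfrac1t\sum_{s}\bar x^h(s)$ as in Proposition~\ref{Prop: hry s chybou} instead of trying to locate $\sum_{i,j}(t^h_{ij}/t)\,v^h_{ij}$ within $\epsilon$ of $v^h$.
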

\begin{proof}
Assume that $\left(IH_{d}\right)$ holds and let $h$ be a~node with $d_{h} = d+1$. We will describe the~situation in $h$ as a~repeated matrix game $(\widetilde{M}_h)$ with bounded distortion:
\begin{itemize}
\item The~corresponding `non-distorted' matrix is $M_h:=\left(v^h_{ij}\right)_{i,j}$.
\item The~actions available to the~players $1$ and $2$ are $i\in\mathcal{A}_1(h)$ and $j\in\mathcal{A}_2(h)$.
\item The~actions $i(t),j(t)$ chosen by the~players are the~actions $i^h(t)$, $j^h(t)$.\footnote{Recall that by definition in \eqref{equation: equivalent MAB description of SM-MCTS}, $i^h(t)$ and $j^h(t)$ are the~actions chosen by SM-MCTS-A during the~$t$-th visit of the~node $h$.}
\item The~``distorted'' rewards $v^{\widetilde{M}_h}_{ij}(t)$ are the~average rewards $g^{h}_{ij}$ coming from the~subgames rooted in the~children of $h$. The~exact correspondence is
\begin{equation*}
v^{\widetilde{M}_h}_{ij}(t) := g^{h}_{ij}(t^h_{ij}) .
\end{equation*}
In other words, $v^{\widetilde{M}_h}_{ij}(t)$ is the~average of the~first $t^h_{ij}$ rewards obtained by visiting $h_{ij}$ during SM-MCTS-A.
\end{itemize}
In particular, we have $x^{\widetilde{M}_h}_i(t) = v^{\widetilde{M}_h}_{ij(t)}(t) = g^{h}_{ij^h(t)}(t^h_{ij}) $.
Comparing this with the definition of $\bar x_i^h(t)$ from Section~\ref{sec: application to SM-MCTS(-A)}, we see that the~MAB problems $(\bar P^h)$ and $(P^{\widetilde{M}_h})$ coincide. As a result, Proposition~\ref{Prop: hry s chybou} can be applied to the~actions taken at $h$ (assuming we can bound the distortion).

By $(i)$ from $\left(IH_{d}\right)$, the~distortion of $(\widetilde{M}_h)$ is eventually bounded by $C_d\epsilon=d\epsilon+O(\eta)$. Consequentially, we can apply Proposition \ref{Prop: hry s chybou} --- the~inequality \eqref{eq: hra s chybou2} shows that the~average rewards $g^h$ will eventually belong to the~interval $\left(v^{h}-\left((C_{d}+1)\epsilon+\eta\right),v^{h}+(C_{d}+1)\epsilon+\eta\right)$. Since $\left(C_d+1\right)\epsilon+\eta=C_{d+1}\epsilon$, $(i)$ from $(IH_{d+1})$ holds. The~property $(ii)$ follows from the~inequality \eqref{eq: hra s chybou1} in Proposition \ref{Prop: hry s chybou}.
\end{proof}
We are now ready to give the~proof of Theorem \ref{thm: SM-MCTS-A convergence}. Essentially, it consists of summing up the~errors from the~first part of the~induction hypothesis and using Lemma \ref{Lem: u(br) and NE}.

\begin{proof}[Proof of Theorem \ref{thm: SM-MCTS-A convergence}]
First, we observe that by Lemma \ref{L: HC-and-NE}, $\left(IH_{1}\right)$ holds, and consequently by Proposition \ref{prop: ind.step}, $\left(IH_{d}\right)$ holds for every $d=1,...,D$.

Denote by $u^{h}(\sigma)$ (resp. $u_{ij}^{h}(\sigma)$) the~expected player 1 payoff corresponding to the~strategy $\sigma$ used in the~subgame rooted at node $h\in\mathcal{H}$ (resp. its child $h_{ij}$). By Lemma \ref{Lem: u(br) and NE} we know that in order to prove Theorem \ref{thm: SM-MCTS-A convergence}, it is enough to show that for every $h\in\mathcal{H}$, the~strategy $\hat{\sigma}\left(t\right)$ will eventually satisfy
\begin{equation}
u^{h}\left(br,\hat{\sigma}_{2}\left(t\right)\right)\leq v^{h}+\left(d_h+1\right)d_h\epsilon+O(\eta). \label{eq: T8}
\end{equation}
The~role of both players in the~whole text is symmetric, therefore \eqref{eq: T8} also implies the~second inequality required by Lemma \ref{Lem: u(br) and NE}. We will do this by backward induction.

Since any node $h$ with $d_{h}=1$ is a~matrix game, it satisfies $(P^h)=(P^{M_h})$ and $u^h=u^{M_h}_1$. In particular, $(ii)$ from $\left(IH_{1}\right)$ implies that the~inequality \eqref{eq: T8} holds for any such $h$. Let $1<d\leq D$, $h\in\mathcal{H}$ be such that $d_h=d$ and assume, as a~hypothesis for backward induction, that the~inequality \eqref{eq: T8} holds for each $h'$ with $d_{h'}<d$. We observe that 
\begin{eqnarray*}
u^{h}\left(br,\hat{\sigma}_{2}\left(t\right)\right) & = & \max_{i\in\mathcal A_1(h)}\sum_{j\in\mathcal A_2(h)}\hat{\sigma}^h_{2}\left(t\right)\left(j\right)u^h_{ij}\left(br,\hat{\sigma}_{2}\left(t\right)\right)\\
 & \leq & v^{h}+\left(\max_{i\in\mathcal A_1(h)}\sum_{j\in\mathcal A_2(h)}\hat{\sigma}^h_{2}\left(t\right)\left(j\right)v_{ij}^{h}-v^{h}\right)+\\
 &  & +\max_{i\in\mathcal A_1(h)}\sum_{j\in\mathcal A_2(h)}\hat{\sigma}^h_{2}\left(t\right)\left(j\right)\left(u_{ij}^{h}\left(br,\hat{\sigma}_{2}\left(t\right)\right)-v_{ij}^{h}\right).
\end{eqnarray*}
Since $v^h_{ij}=v^{M_h}_{ij}$, the~first term in the~brackets is equal to $u^{M_h}_1(br,\hat\sigma^h_2)-v^{_h}$. By $(ii)$ from $\left(IH_{d}\right)$, this is bounded by $2d\epsilon+\eta$. By the~backward induction hypothesis we have
\[ u_{ij}^{h}\left(br,\hat{\sigma}_{2}\left(t\right)\right)-v_{ij}^{h}\leq d\left(d-1\right)\epsilon+O(\eta) \]
Therefore we have
\[ u^{h}\left(br,\hat{\sigma}_{2}\left(t\right)\right)\leq v^{h}+2d\epsilon+d\left(d-1\right)\epsilon+O(\eta)=v^{h}+\left(d+1\right)d\epsilon+O(\eta). \]
For $d=D$, $h=\textrm{root}$, Lemma~\ref{Lem: u(br) and NE} implies that $\left(\hat{\sigma}_{1}\left(t\right),\hat{\sigma}_{2}\left(t\right)\right)$ forms a~$\left(2D\left(D+1\right)\epsilon+O(\eta)\right)$-equilibrium of the~whole game.
\end{proof}

\subsection{Asymptotic convergence of SM-MCTS} \label{sec: SM-MCTS convergence}

We would like to prove an~analogous result to Theorem \ref{thm: SM-MCTS-A convergence} for SM-MCTS. Unfortunately, such a~goal is unattainable in general -- in Section \ref{sec:Counterexample} we presented a~counterexample, showing that a~such a~theorem with no additional assumptions does not hold.
We show that if an~$\epsilon$-HC algorithm with guaranteed exploration has an~additional property of `having $\epsilon$-unbiased payoff observations' ($\epsilon$-UPO, see Definition \ref{def: UPO}), it can be used as a~selection policy for SM-MCTS, and it will always find an~approximate equilibrium (Theorem \ref{thm: SM-MCTS convergence}). While we were unable to prove that specific $\epsilon$-HC algorithms have the~$\epsilon$-UPO property, in Section \ref{sec:Experimental}, we provide empirical evidence supporting our hypothesis that the~`typical' algorithms, such as regret matching or Exp3, indeed do have $\epsilon$-unbiased payoff observations.
	
\subsubsection{Definition of the~UPO property} \label{section: UPO definition}
First, we introduce the~notation required for the~definition of the~$\epsilon$-UPO property.
Recall that for $h\in \cH$ and a~joint action $(i,j)\in \cA_1(h)\times\cA_2(h)$, $x^h_{ij}(m)=x^{h_{ij}}(m)$ denotes the~$m\textnormal{-th}$ reward\footnote{Note that $x^h_{ij}(t)$ are random variables and their distribution depends on the~game to which $h$ belongs and on the~selection policies applied at all of the~nodes in the~subgame rooted in $h_{ij}$.} received when visiting $h$'s child node $h_{ij}$.  We denote the~arithmetic average of the~sequence $(x^h_{ij}(1)$, \dots, $x^h_{ij}(n))$ as 
\begin{equation}\label{not: property}
\bar{x}^h_{ij}\left(n\right)=\frac{1}{n}\sum_{m=1}^{n}x^h_{ij}\left(m\right) .
\end{equation}
The numbers $\bar x^h_{ij}$ are the~variables we would prefer to work with. Indeed, their definition is quite intuitive and when $A$ is $\epsilon$-HC and $d_{h_{ij}}=1$, the averages $\bar{x}^h_{ij}(n)$ can be used to approximate $v^h_{ij}$:
\[ \limsup_{n}\left|\bar{x}^h_{ij}\left(n\right)-v^h_{ij}\right|\leq\epsilon .\]

Unfortunately, the~variables $\bar{x}^h_{ij}\left(n\right)$ do not suffice for our analysis, because as we have seen in Section \ref{sec:Counterexample}, even when the~differences $\left| \bar{x}^h_{ij}\left(n\right)-v^h_{ij} \right|$ get arbitrarily small, SM-MCTS might still perform very poorly. In \eqref{notation: tilde s}, we define differently weighted averages $\tilde x^h_{ij}(n)$ of $(x^h_{ij}(1)$, \dots, $x^h_{ij}(n))$ which naturally appear in this setting and are crucial in the~proof of the~upcoming Theorem \ref{thm: SM-MCTS convergence}. Before we jump into details, let us expand on an~idea that we already hinted at earlier:

Assume that a~node $h_{ij}$ has already been visited $(m-1)$-times or, equivalently, that $t^h_{ij}=m$. We can compute the~$m$-th reward $x^h_{ij}(m)$ in advance and when player 2 selects $j$ for the~first time at some $s\geq t$, we offer $x^h_{ij}(m)$ as a~possible reward $x^h_i(s)$. However, if player $1$ plays \emph{something other than $i$} at $s$, $x^h_{ij}(m)$ does not ``get used up'' -- instead it ``waits'' to be used in future. Therefore, each reward $x^h_{ij}(m)$ must be weighted proportionally to the~number $w^h_{ij}(m)$ of times it got offered as a~possible reward (including the~one time when it eventually got chosen), because this is the~number of times it will show up when regret $R^h$ is calculated.

We define the~weights $w_{ij}\left(n\right)$ and the~weighted averages $\tilde{x}_{ij}\left(n\right)$ as follows\footnote{\textnormal{By $w^h_{ij}\left(m\right) * x^h_{ij}\left(m\right)$ we mean the sum of $w^h_{ij}\left(m\right)$ independent copies of $x^h_{ij}\left(m\right)$.}}: 
\begin{align}
w^h_{ij}\left(m\right) := & \left|\left\{ t\in\mathbb{N}| \ t^h_{ij} = m \ \& \ j^h(t)=j \right\}\right| \label{notation: tilde s} \\
\tilde{x}^h_{ij}\left(n\right) := & \frac{1}{\sum_{m=1}^{n}w^h_{ij}\left(m\right)}\sum_{m=1}^{n}w^h_{ij}\left(m\right) * x^h_{ij}\left(m\right), \nonumber
\end{align}
We are now ready to give the~definition:
\begin{definition}[UPO] \label{def: UPO}
We say that an~algorithm $A$ guarantees $\epsilon$-unbiased payoff observations, if for every (simultaneous-move zero-sum perfect information) game $G$ in which $A$ is used as a~selection policy for SM-MCTS, for its every node $h$ and every joint action $(i,j)$, the~arithmetic averages $\bar{x}^h_{ij}$ and weighted averages $\tilde{x}^h_{ij}$ almost surely satisfy
\[ \underset{n\rightarrow\infty}{\limsup}\left|\tilde{x}^h_{ij}\left(n\right)-\bar{x}^h_{ij}\left(n\right)\right|\leq\epsilon. \]
We will abbreviate this by saying that ``$A$ is an~$\epsilon$-UPO algorithm''.
\end{definition}

Observe in particular that for an $\epsilon$-UPO algorithm and $c>0$, we have,
\[ \limsup_{n\rightarrow\infty}\,\left|\bar{x}^h_{ij}\left(n\right)-v^h_{ij}\right|\leq c\epsilon \text{ a.s.}
\implies
\limsup_{n\rightarrow\infty}\,\left|\tilde{x}^h_{ij}\left(n\right)-v^h_{ij}\right|\leq\left(c+1\right)\epsilon \ \text{ a.s..} \]
The most relevant `examples' related to the UPO property are
\begin{enumerate}[a)]
\item the pathological algorithms from Section~\ref{sec: deterministic counterexample}, where the UPO property does not hold (and SM-MCTS fails to find a reasonable strategy),
\item Theorem \ref{thm: SM-MCTS convergence} below, which states that when a~HC algorithm is UPO, it finds an equilibrium strategy,
\item if at each node $h$, players chose their actions as independent samples of some probability distributions $P^h_1$ and $P^h_2$ over $\mathcal A_1(h)$ and $\mathcal A_2(h)$, then these selection `algorithms' are UPO.
\end{enumerate}

We agree that the definition of the UPO property, as presented in Definition \ref{def: UPO}, is quite impractical. Unfortunately, we were unable to find a replacement property which would be easier to check while still allowing the ``HC \& ? $\implies$ SM-MCTS finds a~NE'' proof to go through. We at least provide some more examples and a further discussion in the appendix (Section \ref{sec:UPO examples}).

\subsubsection{The convergence of SM-MCTS}
The goal of this section is to prove the~following theorem:
\begin{theorem}\label{thm: SM-MCTS convergence}
For $\epsilon\geq 0$, let $A$ be an~$\epsilon$-HC algorithm with guaranteed exploration that is $\epsilon$-UPO. If $A$ is used as selection policy for SM-MCTS, then the~average strategy of $A$ will eventually get arbitrarily close to a~$C'\epsilon$-NE of the~whole game, where $C'=12\left(2^{D}-1\right)-8D$.
\end{theorem}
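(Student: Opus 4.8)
The plan is to prove Theorem~\ref{thm: SM-MCTS convergence} by the same backward-induction scaffolding used for Theorem~\ref{thm: SM-MCTS-A convergence}, but with the weighted averages $\tilde x^h_{ij}$ playing the role that the arithmetic averages $g^h_{ij}$ played in the SM-MCTS-A proof. The key observation is that, from the point of view of the selection algorithm sitting at a node $h$, the regret $R^h_1(t)$ it computes is not governed by the arithmetic average $\bar x^h_{ij}$ of the rewards coming out of a child $h_{ij}$, but by the \emph{weighted} average $\tilde x^h_{ij}$: a reward $x^h_{ij}(m)$ that is prepared but not immediately consumed ``waits'' and contributes $w^h_{ij}(m)$ times to the regret computation, exactly as \eqref{notation: tilde s} records. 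This is precisely the mechanism exploited in the counterexample, and the $\epsilon$-UPO property is the assumption that forbids it from doing lasting damage: it guarantees $\limsup_n |\tilde x^h_{ij}(n)-\bar x^h_{ij}(n)|\le\epsilon$ almost surely.

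The first step is to set up an induction hypothesis $(IH'_d)$ analogous to the one used before, asserting for every node $h$ with $d_h=d$ that almost surely, eventually, (i) the relevant \emph{weighted} cumulative payoff $\tilde x^h_{ij}$ lies within some radius $C'_d\epsilon$ of $v^h_{ij}=v^{h_{ij}}$ for every child, and (ii) the best-response utilities $u^{M_h}_1(\hat\sigma_1,br)$ and $u^{M_h}_1(br,\hat\sigma_2)$ with respect to the matrix $M_h=(v^h_{ij})$ are within $2C'_d\epsilon$ of $v^h$. The base case $d=1$ is again a matrix game, so Lemma~\ref{L: HC-and-NE} applies directly (a depth-$1$ node has no averaging issue, since its children are terminal and return fixed values). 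The heart of the argument is the induction step: at a node $h$ with $d_h=d+1$, I would frame the events at $h$ as a repeated matrix game with bounded distortion $(\widetilde M_h)$, where the distorted reward for the joint action $(i,j)$ is $\tilde x^h_{ij}$ rather than $g^h_{ij}$. By $(IH'_d)(i)$, each $\bar x^h_{ij}$ is eventually within $C'_d\epsilon$ of $v^h_{ij}$; then the $\epsilon$-UPO property lets me pass from $\bar x^h_{ij}$ to $\tilde x^h_{ij}$ at the cost of one more $\epsilon$, so that $\tilde x^h_{ij}$ is eventually within $(C'_d+1)\epsilon$ of $v^h_{ij}$. This bounds the distortion of $(\widetilde M_h)$, and I can invoke Proposition~\ref{Prop: hry s chybou} to conclude both parts of $(IH'_{d+1})$.

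The recursion for the constant is where this proof diverges from the SM-MCTS-A case and produces the exponential $C'=12(2^D-1)-8D$ rather than the quadratic $2D(D+1)$. In the averaged algorithm, the error fed upward at each level was the clean arithmetic average, so the distortion radius grew only additively; here each level incurs the UPO loss of an extra $\epsilon$ \emph{and} amplifies the incoming error through Proposition~\ref{Prop: hry s chybou}'s factor, so the governing recurrence is roughly of the form $C'_{d+1}=2(C'_d+1)+\text{const}$, whose solution is exponential in $d$. I would carry out this recurrence carefully, tracking the doubling from the best-response step against the $+1$ from UPO and the $+1$ from the distortion proposition, solve the linear recurrence in closed form, and verify it matches $12(2^D-1)-8D$ at the root $d=D$. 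Finally, as in the previous proof, a call to Lemma~\ref{Lem: u(br) and NE} converts the best-response bounds of $(IH'_D)(ii)$ at the root into the statement that $(\hat\sigma_1,\hat\sigma_2)$ is a $C'\epsilon$-equilibrium, and Lemma~\ref{lem:emp_and_avg} lets me restate this for the average strategy.

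The main obstacle, and the place I expect to spend the most care, is verifying that the weighted averages $\tilde x^h_{ij}$ really are the quantities that drive the regret computation at $h$ --- that is, establishing the identity (or almost-sure asymptotic agreement) between the regret $R^h_1(t)$ implicitly minimized by the $\epsilon$-HC selection policy and the regret defined with respect to the $\tilde x^h_{ij}$. This requires a bookkeeping argument showing that when action $i$ is compared against the realized play in the regret sum, each child reward is counted with exactly the multiplicity $w^h_{ij}(m)$ from \eqref{notation: tilde s}, using the crucial ``reward pool'' behavior discussed around \eqref{eq:reward_for_action} (namely that $x^h_i(s)$ does not change until $i$ is selected and $t^h_{ij}$ advances). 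Once this correspondence is pinned down, the UPO property does exactly the job it was designed for, and the rest is the controlled propagation of errors through the induction.
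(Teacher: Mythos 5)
Your overall scaffolding (backward induction, base case via Lemma~\ref{L: HC-and-NE}, final conversion via Lemma~\ref{Lem: u(br) and NE}, and the recurrence $C'_{d+1}=2(C'_d+1)$ producing the exponential constant) matches the paper's proof, and you correctly identify the weighted averages $\tilde x^h_{ij}$ and the UPO property as the crux. However, there is a genuine flaw in how you propose to execute the induction step: you cannot invoke Proposition~\ref{Prop: hry s chybou} with $\tilde x^h_{ij}$ playing the role of the distorted rewards. That proposition requires two things simultaneously: (a) the per-round observed rewards $v^{\widetilde M}_{ij}(t)$ are eventually within $c\epsilon$ of $v^M_{ij}$ for \emph{every} $i,j,t$, and (b) the players are $\epsilon$-HC \emph{with respect to those same observed rewards}. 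In SM-MCTS the selection policy at $h$ observes the raw sample $x^h_{ij}(t^h_{ij})$ each round --- a number that can be $0$ or $1$ while $v^h_{ij}$ is, say, $1/2$ --- so condition (a) fails for the rewards the algorithm is actually HC with respect to; and the aggregate $\tilde x^h_{ij}$ is not observed at any single round, so there is no sense in which the selection policy is $\epsilon$-HC with respect to a $\tilde x$-valued distorted game. This is precisely why the paper notes that Proposition~\ref{Prop: hry s chybou} cannot be applied to SM-MCTS and replaces it with Proposition~\ref{prop: when tilde s = bar s}.

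The good news is that the bookkeeping you flag as ``the main obstacle'' in your last paragraph is exactly the missing piece, and once carried out it makes Proposition~\ref{Prop: hry s chybou} unnecessary. Rewriting the regret sum $S^h_{i^*}(T)$ shows $R^{M_h}(T) \leq R^h(T) + \max_{i^*}\left|X_{i^*}(T)\right|$, where $X_{i^*}$ collects the terms $T_j\left(\tilde x^h_{i^*j}-v^h_{i^*j}\right)$ and $T_{ij}\left(\bar x^h_{ij}-v^h_{ij}\right)$; the UPO property together with the inductive bound on $\bar x^h_{ij}$ gives $\limsup_T \left|X_{i^*}(T)\right|/T \leq (2c+1)\epsilon$, hence $\limsup_t r^{M_h}(t)\leq 2(c+1)\epsilon$. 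At that point the action selection at $h$ is a $2(C'_d+1)\epsilon$-HC play of the \emph{undistorted} matrix game $M_h$, and the induction step closes by applying Lemma~\ref{L: HC-and-NE} directly (with $\epsilon$ replaced by $2(C'_d+1)\epsilon$), yielding $C'_{d+1}=2(C'_d+1)+O(\eta)$. Two smaller points: your hypothesis $(IH'_d)(i)$ should be stated for the arithmetic average $g^h=\bar x$ (you state it for $\tilde x$ but then use it for $\bar x$ in the step), and the final constant is $C' = 4\sum_{d=1}^{D}C'_d = 12(2^D-1)-8D$, obtained by summing the per-level errors in the backward induction as in the proof of Theorem~\ref{thm: SM-MCTS-A convergence}, not by evaluating $C'_D$ at the root alone.
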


Note that if we prove the statement for $\epsilon>0$, we get the variant with $\epsilon=0$ `for free'.
The proof of Theorem \ref{thm: SM-MCTS convergence} is similar to the~proof of Theorem \ref{thm: SM-MCTS-A convergence}. The~only major difference is Proposition \ref{prop: when tilde s = bar s}, which serves as a~replacement for Proposition \ref{Prop: hry s chybou} (which cannot be applied to SM-MCTS). Essentially, Proposition \ref{prop: when tilde s = bar s} shows that under the~specified assumptions, having low ``local'' regret in some $h\in\mathcal{H}$ with respect to $x^h_i(t)$ is sufficient to bound the~regret with respect to the~rewards originating from the~matrix game $\left(v^h_{ij}\right)$.

\begin{restatable}{proposition}{whentildesbars} \label{prop: when tilde s = bar s}
For $\epsilon,c\geq0$ let an~algorithm $A$ be $\epsilon$-HC and $h\in\mathcal{H}$. If $A$ is $\epsilon$-UPO and $\underset{n\rightarrow\infty}{\limsup}\,\left|\bar{x}^h_{ij}\left(n\right)-v^h_{ij}\right|\leq c\epsilon$
holds a.s. for each $i,j$, then
\begin{equation} \label{eq: regret}
\limsup_{t\rightarrow\infty}\frac{1}{t}\left(\max_{i}\sum_{s=1}^{t}v^h_{ij(s)}-\sum_{s=1}^{t}v^h_{i(s)j(s)}\right)\leq2\left(c+1\right)\epsilon \ \textrm{ holds a.s..}
\end{equation}
In particular, the~regret $r^{M_h}(t)$ of $A$ with respect to the~matrix game $M_h=\left(v^h_{ij}\right)_{ij}$ satisfies
\[ \limsup_{t\rightarrow\infty} \ r^{M_h}(t) \leq 2\left(c+1\right)\epsilon . \]
\end{restatable}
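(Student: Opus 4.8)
The plan is to unwind the observed regret $R^h(t)$ of player~1 in the MAB problem $(P^h)$ into a weighted sum over the children of $h$, and then to replace the observed payoffs by the true subgame values $v^h_{ij}$ in two controlled steps, using the UPO property and the hypothesis $\limsup_n|\bar x^h_{ij}(n)-v^h_{ij}|\le c\epsilon$ in turn. Since the roles of the players are symmetric, it suffices to treat player~1.

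The central bookkeeping step is the identity
\[ \sum_{s=1}^t x^h_i(s) \ =\ \sum_{j}\sum_{m} w^h_{ij}(m)\, x^h_{ij}(m) \ =\ \sum_j t^h_j\,\tilde x^h_{ij}(\cdot), \]
which says that the counterfactual cumulative reward player~1 would have collected by always playing $i$ equals the $w$-weighted sum of the rewards in the child pools. This follows from the reward-pool description around \eqref{eq:reward_for_action}: the pooled reward $x^h_{ij}(m)$ is offered as $x^h_i(s)$ exactly at those steps $s$ where player~2 plays $j$ and $t^h_{ij}=m$, and by \eqref{notation: tilde s} there are precisely $w^h_{ij}(m)$ such steps, so grouping the sum over $s$ by $(j,m)$ reassembles $\tilde x^h_{ij}$ with total weight $\sum_m w^h_{ij}(m)=t^h_j$. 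Dually, the realized cumulative reward decomposes as $\sum_{s=1}^t x^h_{i(s)}(s)=\sum_{ij} t^h_{ij}\,\bar x^h_{ij}(t^h_{ij})$, since each child $h_{ij}$ contributes its realized sequence $x^h_{ij}(1),x^h_{ij}(2),\dots$ exactly once per actual visit.

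With both identities in hand I would chain the approximations. The UPO property (Definition~\ref{def: UPO}) gives $|\tilde x^h_{ij}-\bar x^h_{ij}|\le\epsilon$ and the hypothesis gives $|\bar x^h_{ij}-v^h_{ij}|\le c\epsilon$, hence $|\tilde x^h_{ij}-v^h_{ij}|\le(c+1)\epsilon$ in the $\limsup$. Using the lower bound $\tilde x^h_{ij}\ge v^h_{ij}-(c+1)\epsilon$ inside the $\max_i$ (together with $\sum_j t^h_j=t+o(t)$) bounds the first term of $R^h(t)$ from below by $\max_i\sum_j t^h_j v^h_{ij}-t(c+1)\epsilon$, while the upper bound $\bar x^h_{ij}\le v^h_{ij}+c\epsilon$ bounds the realized term from above by $\sum_{ij}t^h_{ij}v^h_{ij}+tc\epsilon$. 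Subtracting, dividing by $t$, and recognizing $\sum_j t^h_j v^h_{ij}=\sum_s v^h_{ij(s)}$ and $\sum_{ij}t^h_{ij}v^h_{ij}=\sum_s v^h_{i(s)j(s)}$, I obtain
\[ \frac{R^h(t)}{t}\ \ge\ \frac1t\Big(\max_i\sum_{s=1}^t v^h_{ij(s)}-\sum_{s=1}^t v^h_{i(s)j(s)}\Big)-(2c+1)\epsilon . \]
Finally, $\epsilon$-Hannan consistency (Definition~\ref{def:HC}) gives $\limsup_t R^h(t)/t\le\epsilon$, so rearranging yields \eqref{eq: regret} with constant $\epsilon+(2c+1)\epsilon=2(c+1)\epsilon$. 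The ``in particular'' clause is then immediate, since the parenthesized quantity divided by $t$ is by definition the matrix-game regret $r^{M_h}(t)$.

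The main obstacle is the weighted bookkeeping identity: making the informal reward-pool picture rigorous so that the weights $w^h_{ij}(m)$ account exactly for the multiplicity with which each pooled reward enters the counterfactual sum. One must also handle the off-by-one offsets in $t^h_{ij}$ and $t^h_j$ and the tail term for the partially-filled pool at time $t$, though these contribute only $o(t)$ and vanish in the $\limsup$. The most delicate point is the reused-value versus independent-copies convention in the definition of $\tilde x^h_{ij}$ (the footnote to \eqref{notation: tilde s}): the weighted average appearing in $R^h(t)$ is built from the single realized value reused $w^h_{ij}(m)$ times, so I must argue that this is the same object controlled by the UPO property, thereby licensing the swap $\tilde x\to\bar x$.
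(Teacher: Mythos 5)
Your proposal is correct and follows essentially the same route as the paper's proof: you decompose the counterfactual cumulative reward by grouping over $(j,m)$ into the $w$-weighted averages $\tilde{x}^h_{i^*j}$ (using $t^h_j=\sum_m w^h_{i^*j}(m)$) and the realized reward into the arithmetic averages $\bar{x}^h_{ij}$, then chain the UPO bound and the hypothesis to trade both for $v^h_{ij}$ at a cost of $(2c+1)\epsilon$, and finally add the $\epsilon$ from Hannan consistency — exactly the paper's $S^h_{i^*}=S^{M_h}_{i^*}+X^h_{i^*}$ decomposition with $\limsup|X^h_{i^*}(T)/T|\leq(2c+1)\epsilon$. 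The delicate points you flag (off-by-one in $t^h_{ij}$, partially filled pools, the repeated-value versus independent-copies reading of $w^h_{ij}(m)*x^h_{ij}(m)$) are real but are handled implicitly (or glossed over) in the paper's own argument as well.
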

The proof of this proposition consists of rewriting the~sums in inequality \eqref{eq: regret} and using the~fact that the~weighted averages $\tilde{x}_{ij}$ are
close to the~standard averages $\bar{x}_{ij}$.

Consider the~setting from Theorem \ref{thm: SM-MCTS convergence} and let $\eta>0$ be a~positive constant. For $d\in\left\{ 1,...,D\right\} $ we denote $C'_{d}:=3\cdot2^{d-1}-2+O(\eta$).

\paragraph{Induction hypothesis $\left(IH'_{d}\right)$} is the~claim that for each node $h$ with $d_{h}=d$, there almost surely exists $t_{0}$ such that for each $t\geq t_{0}$ 
\begin{enumerate}
\item the~payoff $g^{h}(t)$ fall into the~interval $\left(v^{h}-C'_{d}\epsilon,v^{h}+C'_{d}\epsilon\right)$;
\item the~utilities $u^{M_h}\left(\hat{\sigma}_{1}(t),br\right)$ and $u^{M_h}\left(br,\hat{\sigma}_{2}(t)\right)$ with respect to the~matrix game $M_h=\left(v_{ij}^{h}\right)_{i,j}$ fall into the~interval $\left(v^{h}-2C'_{d}\epsilon,v^{h}+2C'_{d}\epsilon\right)$.
\end{enumerate}

\begin{proposition} \label{prop: ind.step SM-MCTS} In the~setting from Theorem \ref{thm: SM-MCTS convergence}, $\left(IH'_{d}\right)$ holds for every $d\in\{1,\dots,D\}$.
\end{proposition}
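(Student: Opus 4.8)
The plan is to prove the family of claims $\left(IH'_{d}\right)$ by induction on the depth $d$, in close parallel with the proof of Theorem~\ref{thm: SM-MCTS-A convergence} (via Proposition~\ref{prop: ind.step}): the role previously played by Proposition~\ref{Prop: hry s chybou} will now be played by Proposition~\ref{prop: when tilde s = bar s}, which is the only place where the $\epsilon$-UPO assumption enters. The governing recursion is $C'_{d+1}=2\left(C'_d+1\right)$ (equivalently $C'_d=3\cdot 2^{d-1}-2$, up to the usual $O(\eta)$), and it will appear naturally as ``twice the distortion-plus-one'' produced by Proposition~\ref{prop: when tilde s = bar s}.

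For the base case $d=1$, a node $h$ with $d_h=1$ has only terminal children, so each reward $x^h_{ij}(m)$ equals the constant $u_1(h_{ij})=v^h_{ij}$. Hence the bandit problem $(P^h)$ is literally the exact repeated matrix game $M_h=\left(v^h_{ij}\right)_{i,j}$, and applying Lemma~\ref{L: HC-and-NE} to the two $\epsilon$-HC players gives $\left(IH'_1\right)$ with value interval of half-width $\epsilon=C'_1\epsilon$ and best-response interval of half-width $2\epsilon=2C'_1\epsilon$. (Here it is essential to invoke Lemma~\ref{L: HC-and-NE} directly rather than Proposition~\ref{prop: when tilde s = bar s}, so as to keep the tight constant $C'_1=1$.)

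For the induction step I would assume $\left(IH'_{d'}\right)$ for all $d'\le d$ and fix $h$ with $d_h=d+1$. The key observation is that in SM-MCTS the reward propagated from a child is the current (non-averaged) sample, so the arithmetic average $\bar x^h_{ij}(n)$ from \eqref{notation: tilde s} coincides with the average payoff $g^{h_{ij}}(n)$ of the child; by part (i) of the children's induction hypotheses (and monotonicity of $C'$) this yields $\limsup_n\left|\bar x^h_{ij}(n)-v^h_{ij}\right|\le C'_d\epsilon$ almost surely for every joint action $(i,j)$. Guaranteed exploration ensures every child is visited infinitely often, so these bounds are eventually in force simultaneously. I then apply Proposition~\ref{prop: when tilde s = bar s} with $c=C'_d$, obtaining $\limsup_t r^{M_h}(t)\le 2\left(C'_d+1\right)\epsilon=C'_{d+1}\epsilon$ for each player. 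Since both selection policies thus have regret at most $C'_{d+1}\epsilon$ with respect to the genuine matrix game $M_h$, I rerun the argument of Lemma~\ref{L: HC-and-NE} with $\epsilon$ replaced by $C'_{d+1}\epsilon$ (its conclusions depend only on the regret bounds): the best-response part gives part (ii) of $\left(IH'_{d+1}\right)$, with half-width $2C'_{d+1}\epsilon$. For part (i), combining the local $\epsilon$-HC bound for the observed rewards with the UPO estimate $\limsup_n|\tilde x^h_{ij}(n)-v^h_{ij}|\le(C'_d+1)\epsilon$ pins the observed average $g^h(t)$ to within $\left(C'_d+2\right)\epsilon\le C'_{d+1}\epsilon$ of $v^h$. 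All depth-dependent slack is absorbed into the $O(\eta)$ term exactly as in Theorem~\ref{thm: SM-MCTS-A convergence}.

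The main obstacle is precisely the step that fails for the counterexample of Section~\ref{sec:Counterexample}: guaranteeing that small \emph{local} regret with respect to the observed (and possibly biased) rewards translates into small regret with respect to the true subgame values $v^h_{ij}$. In SM-MCTS the observed cumulative reward for a fixed action $i$ weights each child sample $x^h_{ij}(m)$ by the number of times $w^h_{ij}(m)$ it was ``offered,'' so the relevant quantity is the \emph{weighted} average $\tilde x^h_{ij}$ rather than the arithmetic average $\bar x^h_{ij}$; without a bound on their discrepancy the bias can be made persistent. Proposition~\ref{prop: when tilde s = bar s} supplies precisely this bound through the $\epsilon$-UPO assumption, and the rest of the induction step is then bookkeeping of constants. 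The only subtlety requiring care is keeping the distinction between the observed average $g^h$ and the matrix-game average clean, which is why part (i) is stated in terms of $g^h$ and fed back into $\bar x$ at the parent node.
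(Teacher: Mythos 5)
Your proof is correct and follows essentially the same route as the paper's: the base case via Lemma~\ref{L: HC-and-NE}, and the induction step via the ``in particular'' part of Proposition~\ref{prop: when tilde s = bar s} (with $c=C'_d$) followed by a re-application of Lemma~\ref{L: HC-and-NE} with $\epsilon$ replaced by $2\left(C'_d+1\right)\epsilon$, yielding the recursion $C'_{d+1}=2\left(C'_d+1\right)+O(\eta)$. The only difference is that you spell out details the paper leaves implicit (the identification $\bar x^h_{ij}(n)=g^{h_{ij}}(n)$ needed to verify the hypothesis of Proposition~\ref{prop: when tilde s = bar s}, and the explicit derivation of part (i) of $\left(IH'_{d+1}\right)$ with the bound $\left(C'_d+2\right)\epsilon\leq C'_{d+1}\epsilon$), which does not change the argument.
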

\begin{proof}
We proceed by backward induction -- since the~algorithm $A$ is $\epsilon$-HC, we know that, by Lemma \ref{L: HC-and-NE}, $\left(IH_{1}^{'}\right)$ holds with $C'_{1}=1+O(\eta)$. Assume that $(IH'_d)$ holds for some $d\in\{1,\dots,D-1\}$. By the~``in particular'' part of Proposition \ref{prop: when tilde s = bar s}, the~choice of actions in $h$ is $2\left(C'_{d}+1\right)$-HC with respect to the~matrix game $(v^h_{ij})$. By Lemma \ref{L: HC-and-NE}, $(IH'_{d+1})$ holds with $C'_{d+1}:=2\left(C'_{d}+1\right)+O(\eta)$. It remains to show that $C'_{d+1}$ is equal to $3\cdot2^{d}-2+O(\eta)$:
\[ C'_{d+1}=2\left(C'_{d}+1\right)+O(\eta) = 2\left(3\cdot2^{d-1}-2+1\right)+O(\eta) = 3\cdot2^d-2+O(\eta).\]
\end{proof}
Analogously to Proposition \ref{prop: ind.step} from Section \ref{sub: SM-MCTS-A convergence}, this in turn implies the~main theorem of this section, the~proof of which is similar to the~proof of Theorem \ref{thm: SM-MCTS-A convergence}.
\begin{proof}[Proof of Theorem {\ref{thm: SM-MCTS convergence}}]
Using Proposition \ref{prop: ind.step SM-MCTS}, the~proof is identical to the~proof of Theorem \ref{thm: SM-MCTS-A convergence} -- it remains to determine the~value of $C'$. As in the~proof of Theorem \ref{thm: SM-MCTS-A convergence} we have $C'=2\cdot2\sum_{d=1}^{D}C'_d$, and we need to calculate this sum: 
\[
\sum_{d=1}^{D}C'_{d}=\sum_{d=1}^{D}\left(3\cdot2^{d-1}-2\right)=3\left(1+...+2^{D-1}\right)-2D=3\left(2^{D}-1\right)-2D.
\]
\end{proof}

\subsection{Dependence of the~eventual NE distance on the~game depth}

In Theorems \ref{thm: SM-MCTS-A convergence} and \ref{thm: SM-MCTS convergence}, the bound on the distance of the average strategy to a NE (as measured by the constant $C$) is quadratic, resp. exponential, in the game depth.
We investigate how far can the dependence by improved, constructing a counterexample where we lower-bound $C$ by some constant times the depth.
However, the more general question of lower-bounds remains open and is posed as Problem~\ref{prob: linearity in D}.

\begin{proposition}
There exists $k>0$, such that none of the~Theorems \ref{thm: SM-MCTS-A convergence} and \ref{thm: SM-MCTS convergence} hold if the~constant $C$ is replaced by $\tilde C=kD$. This holds even when the~exploration is removed from $\hat \sigma$.
\end{proposition}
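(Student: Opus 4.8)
The plan is to prove a \emph{linear-in-$D$ lower bound} by exhibiting, for every depth $D$ and every sufficiently small $\epsilon>0$, a single game $G_D$ together with selection policies that satisfy the hypotheses of \emph{both} Theorem~\ref{thm: SM-MCTS-A convergence} and Theorem~\ref{thm: SM-MCTS convergence} --- that is, policies which are $\epsilon$-HC with guaranteed exploration and additionally $\epsilon$-UPO --- yet for which the empirical frequencies $\hat{\sigma}(t)$ are eventually at least $cD\epsilon$-exploitable for some explicit constant $c>0$ independent of $D$ and $\epsilon$. Since with $\tilde C=kD$ both theorems would assert that $\hat{\sigma}(t)$ gets arbitrarily close to a subgame-perfect $kD\epsilon$-equilibrium, choosing $k<c$ (e.g. $k=c/2$) makes the realized strategy fail to be a $kD\epsilon$-equilibrium, refuting both statements at once; letting $\epsilon\to 0$ along this family shows the constant cannot be pushed below $\Theta(D)$.

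The game $G_D$ is a comb-shaped chain of depth $D$: along a single spine of nodes $h_1,\dots,h_D$, each node hosts a small fixed matrix game with a ``continue'' outcome leading one level deeper and ``side'' outcomes leading to terminals (with payoffs depending on $\epsilon$), while reaching the bottom of the spine yields payoff $1$. The side payoffs are chosen so that (a) the unique subgame-perfect equilibrium plays ``continue'' along the whole spine, and (b) the best-response exploitation of a strategy which is $\Theta(\epsilon)$-suboptimal at a single node is \emph{additive along the spine}: because ``continue'' is played with probability tending to $1$, the whole spine is reached almost always, so a single opponent best response simultaneously punishes the suboptimal mixing at every level on the realized continuation path. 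The per-level $\Theta(\epsilon)$ exploits therefore do not cancel but sum to total exploitability $\Theta(D\epsilon)$, and this mechanism affects SM-MCTS-A and SM-MCTS identically, since averaging (resp. the UPO constraint) removes payoff-observation bias but not the irreducible suboptimality of Hannan-consistent play.

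Crucially, the per-level loss must come from genuine $\epsilon$-HC suboptimality rather than from forced exploration, so that it survives the removal of exploration from $\hat{\sigma}$. By Lemma~\ref{L: HC-and-NE}, two $\epsilon$-HC players in a matrix game can reach an average profile as far as $\Theta(\epsilon)$ from equilibrium, and the local games are scaled so that this maximal deviation is actually attained with a \emph{common sign} at every level. To obtain provably $\epsilon$-HC and $\epsilon$-UPO policies realizing this worst case, I would adapt the tailored-algorithm device of Lemma~\ref{lemma: modification of A}: start from a base policy whose non-exploration average strategy is deliberately pushed into the $\Theta(\epsilon)$-suboptimal corner permitted by $\epsilon$-HC, enforce Hannan consistency by the usual random-check-plus-safe-fallback mechanism, and drive all randomization from \emph{independent} draws rather than the payoff synchronization of Section~\ref{sec: deterministic counterexample}, so that $\tilde{x}^h_{ij}$ and $\bar{x}^h_{ij}$ stay within $\epsilon$ and the UPO property holds. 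Stripping the exploration samples then perturbs each $\hat{\sigma}^{h_d}$ by only $O(\gamma)$, which can be made negligible against the per-level loss.

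The main obstacle is exactly the tension in the previous paragraph: one must force every level to realize the \emph{maximal} $\Theta(\epsilon)$ suboptimality in a consistent direction so the exploits accumulate, while keeping the policy simultaneously $\epsilon$-HC and $\epsilon$-UPO. The UPO requirement forbids the payoff-observation bias that would make large deviations easy to manufacture, so the whole deviation must be squeezed out of the irreducible $\Theta(\epsilon)$ slack of Hannan consistency itself; verifying that this slack can be realized additively across all $D$ spine levels, and that the construction is not accidentally rescued to near-equilibrium once exploration is removed, is the delicate technical core of the argument.
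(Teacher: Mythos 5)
Your proposal correctly identifies the shape of what must be shown --- a family of depth-$D$ games with admissible selection policies whose output is $\Omega(D\epsilon)$-exploitable even after exploration removal --- but it is a plan rather than a proof, and the part you defer is exactly the part that carries the mathematical content. You never specify the local matrix games, never construct the policies, and explicitly acknowledge that the compatibility of ``maximal common-sign $\Theta(\epsilon)$ local suboptimality at every spine level'' with simultaneous $\epsilon$-Hannan consistency and the $\epsilon$-UPO property is ``the delicate technical core.'' That core is nontrivial: Lemma~\ref{L: HC-and-NE} only gives an upper bound of $2\epsilon$ on local exploitability, so you must exhibit policies that actually \emph{attain} $\Theta(\epsilon)$ at every level in a direction that accumulates for one fixed player, while the UPO constraint forbids the payoff-synchronization trick of Section~\ref{sec:Counterexample} that is the paper's only existing tool for manufacturing such worst-case behavior. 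You would also need to verify that the best-responding row at each level routes through the ``continue'' child with high probability under the opponent's empirical column marginal, so that the local exploits genuinely telescope rather than being diluted by the side branches. None of this is impossible, but as written the argument does not establish the proposition.

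For comparison, the paper's proof (Example~\ref{example: upper bound}) avoids all of this by using a \emph{single-player} chain with plain Exp3 and exploration $\gamma=k\epsilon$. The linear-in-$D$ loss does not come from adversarially realized HC slack at each level but from the multiplicative compounding of exploration down the chain: the continuation value decays like $(1-\gamma/3)^D\doteq 1-\tfrac{D}{3}\gamma$, and the terminal ``exit'' payoffs $u_d$ are tuned to sit just above this compounded value, so the algorithm correctly (in the HC sense) prefers to exit near the top. Removing exploration then collapses the strategy onto that top exit, whose payoff \emph{already encodes} the accumulated $\Theta(D\gamma)$ loss --- so no per-level worst-case behavior, no common-sign argument, and no custom policies are needed. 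If you want to salvage your route, the honest fix is either to carry out the full construction of the tailored policies and prove their $\epsilon$-HC and $\epsilon$-UPO properties (a substantial undertaking), or to adopt the paper's observation that tuned side payoffs plus ordinary exploration already force the linear lower bound.
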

The proposition above follows from Example \ref{example: upper bound}.

\begin{figure}[t]
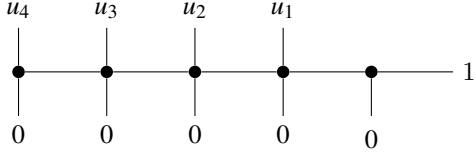

\centering
\usebox{\LinBound}
\caption{A game tree of a~single-player game where the~quality of a~strategy has linear dependence on the~exploration parameter and on the~game depth $D$. The~numbers $u_d$ satisfy $0<u_1<u_2<\dots<u_D<1$. The~player begins in the~leftmost node and the~numbers correspond to utilities in terminal nodes.}\label{fig: C is linear in D}
\end{figure}

\begin{example}[Distance from NE)]\label{example: upper bound}
Let $G$ be the~single player game\footnote{The other player always has only a~single \emph{no-op} action.} from Figure \ref{fig: C is linear in D}, $\eta>0$ some small number, and $D$ the~depth of the~game tree. Let Exp3 with exploration parameter $\gamma=k\epsilon$ be our $\epsilon$-HC algorithm (for a~suitable choice of $k$). We recall that this algorithm will eventually identify the~optimal action and play it with frequency $1-\gamma$, and it will choose randomly otherwise. Denote the~available actions at each node as (up, right, down), resp. (right, down) at the~rightmost inner node. We define each of the~rewards $u_d$, $d=1,...,D-1$ in such a~way that Exp3 will always prefer to go up, rather than right. By induction over $d$, we can see that the~choice $u_1=1-\gamma/2+\eta$, $u_{d+1}=(1-\gamma/3)u_d$ is sufficient and for $\eta$ small enough, we have
\[ u_{D-1}=(1-\frac \gamma 2+\eta)(1-\frac \gamma 3)\dots (1-\frac \gamma 3)\leq \left(1-\frac \gamma 3\right)^{D-1} \doteq 1-\frac {D-1} 3 \gamma \]
(where by $\doteq$ we mean that for small $\gamma$, in which we are interested, the~difference between the~two terms is negligible).
Consequently in each of the~nodes, Exp3 will converge to the~strategy $\left(1-\frac 2 3 \gamma, \frac 1 3 \gamma, \frac 1 3 \gamma\right)$  (resp.  $\left(1-\frac \gamma 2,\frac \gamma 2\right)$), which yields the~payoff of approximately $(1-\gamma/3)u_{D-1}$. Clearly, the~expected utility of such a~strategy is approximately
\[ u={\left(1-\gamma/3\right)}^D \doteq 1-\frac D 3\gamma. \]
On the~other hand, the~optimal strategy of always going right leads to utility 1, and thus our strategy $\hat \sigma$ is $\frac{D}{3}\gamma$-equilibrium.

Note that in this particular example, it makes no difference whether SM-MCTS or SM-MCTS-A is used. We also observe that when the~exploration is removed, the~new strategy is to go up at the~first node with probability 1, which again leads to regret of approximately $\frac D 3 \gamma$.
\end{example}
 By increasing the~branching factor of the~game in the~previous example from 3 to $b$ (adding more copies of the~``$0$'' nodes) and modifying the~values of $u_d$ accordingly, we could make the~above example converge to $2 \frac {b-2} b D\gamma$-equilibrium (resp. $\frac {b-2} b D\gamma$ once the~exploration is removed).
 
We were able to construct a~game of depth $D$ and $\epsilon$-HC algorithms, such that the~resulting strategy $\hat \sigma$ converged to $3D\epsilon$-equilibrium ($2D\epsilon$ after removing the~exploration). However, the~$\epsilon$-HC algorithms used in this example are non-standard and would require the~introduction of more technical notation. Therefore, since in our main theorem we use quadratic dependence $C=kD^2$, we instead choose to highlight the~following open question:
\begin{problem}\label{prob: linearity in D}
Does Theorem \ref{thm: SM-MCTS-A convergence} (and Theorem \ref{thm: SM-MCTS convergence}) hold with $C=kD$ for some $k>0$? If not, what is the optimal value of $C$?
\end{problem}
We hypothesize that the~answer is affirmative (and possibly the~values $k=3$, resp. $k=2$ after exploration removal, are optimal), but the~proof of such proposition would require techniques different from the~one used in the~proof of Theorem \ref{thm: SM-MCTS-A convergence}.
Moreover, it might be more interesting to consider Problem~\ref{prob: linearity in D} restricted to some class of ``reasonable'' no-regret algorithms, as opposed to arbitrary HC algorithms such as those presented in Section~\ref{sec:Counterexample}.

\section{Exploitability and exploration removal\label{sec: exploitability}}

In this section, we discuss which strategy should be considered the~output of our algorithms.
We also introduce the concept of exploitability -- a common approach to measuring strategy strength that focuses on the~worst-case performance (see for example \citealt{Johanson2011}):

\begin{definition}
\emph{Exploitability} of strategy $\sigma_{1}$ of player 1 is the~quantity
\[ \textrm{expl}_{1}\left(\sigma_{1}\right):=v-u_1\left(\sigma_{1},\textrm{br}\right),
\]
where $v$ is the~value of the~game and $\textrm{br}$ is a~second player's
best response strategy to $\sigma_{1}$.
Similarly, we denote $\textrm{expl}_{2}\left(\sigma_{2}\right) := u_1\left(\textrm{br}, \sigma_{2}\right) - v$ and $\textrm{expl}(\sigma) := \textrm{expl}_{1}(\sigma_1) + \textrm{expl}_{2}(\sigma_1)$.
\end{definition}

\noindent By definition of the~game value, exploitability is always non-negative.
Exploitability is closely related Nash equilibria, since we apparently have $\textrm{expl}_{1}\left(\sigma_{1}\right)=\textrm{expl}_{2}\left(\sigma_{2}\right)=0$ if and only if $\sigma=\left(\sigma_{1},\sigma_{2}\right)$ forms a~NE.

If SM-MCT(-A) is run for $t$ iterations, the obvious candidates for its output are the strategy $\sigma(t)$ from the last iteration, the~average strategy $\bar \sigma(t) = \frac{1}{t}\sum_{s=1}^t \sigma(s)$ (defined in Eq. \ref{eq:avg_str}), and the~empirical frequencies $\hat \sigma(t)$ (which are equivalent to $\bar \sigma(t)$ by Lemma~\ref{lem:emp_and_avg}).
Most theoretical results only give guarantees for $\bar \sigma(t)$, and indeed, using $\sigma(t)$ would be naive as it is often very exploitable.
However, there is a better option than using $\bar \sigma(t)$.

To improve the learning rate and ensure that the important parts of the game tree are found quickly, the selection functions used by SM-MCTS(-A) often have a~fixed exploration rate $\gamma>0$.
This can happen either naturally (like with Exp3) or because it has been added artificially (cf. Definition~\ref{def:modified_A}).
Each strategy $\sigma(t)$ is then of the form
\begin{equation}\label{eq:expl_rate}
\sigma(s) = (1-\gamma)\mu(s) + \gamma \cdot \textnormal{rnd}
\end{equation}
for some $\mu(t)$ (where rnd denotes the uniformly random strategy).
Rewriting $\bar \sigma(t)$ as
\begin{equation}\label{eq:rewriting_avg_str}
\bar \sigma(s)
= \frac{1}{t} \sum_{s=1}^t \sigma(s)
= \frac{1}{t} \sum_{s=1}^t \left[ (1-\gamma)\mu(s) + \gamma \cdot \textnormal{rnd} \right]
= (1-\gamma)\bar \mu(t) + \gamma \cdot \textnormal{rnd} ,
\end{equation}
we see that it contains the exact same amount of the~random noise caused by exploration.
While the~exploration speeds up learning, it also increases the~exploitability of $\bar \sigma_1$ in three ways: (a) it introduces noise to each $\bar \sigma_1^h$, (b) it does the same for the opponent, giving us inaccurate expectations about how an optimal opponent looks like and (c) it introduces a~noise to the~rewards SM-MCTS(-A) propagates upward in the tree.
As a~heuristic, \citealt{teytaud2011upper} suggest that the exploration should be removed.
In our case, we can use \eqref{eq:rewriting_avg_str} to obtain the ``denoised'' average strategy $\bar \mu(t)$:
\begin{equation}\label{eq:denoised_avg_str}
\bar{\mu}\left(t\right) :=
 \frac{1}{\left(1-\gamma\right)} \left( \bar{\sigma}\left(t\right) - \gamma \cdot \textup{\textrm{rnd}} \right) .
\end{equation}
Using $\bar \mu(t)$ instead of $\bar \sigma(t)$ cannot help with (b) and (c), but it does serve as an effective remedy for (a).
This intuition is captured by Proposition~\ref{prop:expl_removal}, which shows that a~better exploitability bound can be obtained if the~exploration is removed.
The experiments in Section \ref{sec:Experimental} verify that the~benefit of removing the~exploration in SM-MCTS is indeed large.

\begin{restatable}{proposition}{explRemoval}\label{prop:expl_removal}
Let $M$ be a matrix game, $A$ a~HC algorithm, $\gamma>0$, and $A^\gamma$ the ``$\gamma$-exploration modification'' of $A$ from Definition~\ref{def:modified_A}.
Let $\bar \sigma(t)$ be the average strategy obtained if both players use $A^\gamma$ in $M$, and let $\bar \mu(t)$ be as in \eqref{eq:denoised_avg_str}.
Then for each $p$ we have
\begin{enumerate}[(i)]
\item $\limsup_t \textnormal{expl}_p ( \bar \sigma_p(t) ) \leq 2\gamma$,\label{case:expl}
\item $\limsup_t \textnormal{expl}_p ( \bar \mu_p(t) ) \leq \gamma$.\label{case:no_expl}
\end{enumerate}
\end{restatable}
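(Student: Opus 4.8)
The plan is to prove (i) directly from the matrix-game results already in hand, and then to obtain (ii) by carefully isolating the two distinct sources of exploration-induced error. For (i), I would first invoke Lemma~\ref{lemma: A* je HC}(i): since $A$ is Hannan consistent (that is, $0$-HC), its exploration modification $A^\gamma$ is $\gamma$-HC. With both players using $A^\gamma$, the matrix game $M$ is played by two $\gamma$-HC players, so Lemma~\ref{L: HC-and-NE} applies with $\epsilon=\gamma$. Inequality \eqref{eq: HC and NE 1} then gives $\liminf_t u^M_1(\hat\sigma_1(t),br)\geq v^M-2\gamma$ and $\limsup_t u^M_1(br,\hat\sigma_2(t))\leq v^M+2\gamma$, which is precisely $\limsup_t \textnormal{expl}_p(\hat\sigma_p(t))\leq 2\gamma$ for each $p$. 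Finally, since exploitability is Lipschitz in the strategy (the payoffs lie in $[0,1]$, so $|\min_j(\sigma_1 M)_j-\min_j(\sigma_1'M)_j|\leq \|\sigma_1-\sigma_1'\|_1$), Lemma~\ref{lem:emp_and_avg} lets me replace $\hat\sigma_p$ by $\bar\sigma_p$ without changing the limit, yielding (i).

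For (ii), the guiding observation is that the factor $2\gamma$ in (i) should split into two equal parts: one $\gamma$ coming from the \emph{player's own} exploration, which perturbs $\bar\sigma_p$ away from the no-regret average $\bar\mu_p$, and one $\gamma$ coming from the \emph{opponent's} exploration, which distorts the effective game the player faces. De-noising through \eqref{eq:denoised_avg_str} should remove the first part and leave only the second. Concretely, I would track the de-noised realised value $\tilde g_1:=\frac1t\sum_s (\mu_1(s)M)_{j(s)}$, the payoff player $1$ would obtain by following the un-explored strategies $\mu_1(s)$ against the realised opponent actions. Because on its non-exploring steps player $1$ actually samples from $\mu_1(s)$ and the base algorithm $A$ is $0$-HC, a strong-law argument gives $\tilde g_1\geq u_1(br,\hat\sigma_2)-o(1)\geq v^M-o(1)$ with \emph{no} $\gamma$ penalty; this is the quantitative benefit of suppressing one's own exploration. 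It then remains to bound $v^M-u_1(\bar\mu_1,br)$ by the opponent's regret alone, i.e.\ to establish $\min_j(\bar\mu_1 M)_j\geq \tilde g_1-\gamma-o(1)$, and symmetrically for player $2$.

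The hard part will be exactly this last step: securing a single factor $\gamma$ rather than two. Naive post-processing of the bound from (i) does not suffice; writing $\bar\mu_1=(\bar\sigma_1-\gamma\,\textrm{rnd})/(1-\gamma)$ and inserting $u_1(\bar\sigma_1,br)\geq v^M-2\gamma$ only produces a bound of order $3\gamma$, since the renormalisation by $1/(1-\gamma)$ together with the term $\gamma(\textrm{rnd}\,M)_{j^*}$ (where $j^*$ is the column worst for $\bar\mu_1$) must be controlled. The root cause is that player $2$'s $\gamma$-HC guarantee is stated against player $1$'s \emph{explored} play $\bar\sigma_1$, not the de-noised $\bar\mu_1$, so the opponent's exploration threatens to re-enter. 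My plan to resolve this is to re-derive a one-sided value sandwich directly for $\bar\mu_1$: pairing the $\gamma$-free lower bound $\tilde g_1\geq v^M-o(1)$ (from player $1$'s $0$-HC base algorithm) with player $2$'s regret bound, and arguing that the worst-case column for $\bar\mu_1$ costs the opponent at most $\gamma$ beyond its realised play, so that exactly one $\gamma$ survives. I expect this bookkeeping---preventing the two players' independent exploration coins and the worst-case-versus-realised-column gap from each injecting a spurious extra $\gamma$---to be the principal technical obstacle.
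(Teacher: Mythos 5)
Your part (i) is correct and is exactly the paper's argument: Lemma~\ref{lemma: A* je HC} makes $A^\gamma$ a $\gamma$-HC algorithm, inequality \eqref{eq: HC and NE 1} of Lemma~\ref{L: HC-and-NE} (with $\epsilon=\gamma$) bounds the exploitability of the empirical frequencies by $2\gamma$, and Lemma~\ref{lem:emp_and_avg} transfers this to $\bar\sigma$.

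Part (ii), however, is not yet a proof, and the gap sits precisely at the step you flag as ``the principal technical obstacle'': establishing $\min_j(\bar\mu_1 M)_j\geq\tilde g_1-\gamma-o(1)$. Because your $\tilde g_1=\frac1t\sum_s(\mu_1(s)M)_{j(s)}$ averages over \emph{all} iterations, closing the argument requires a $\gamma$-regret bound for player~2 against the de-noised reward sequence $1-\mu_1(s)Me_j$; but player~2's guarantee is only stated against the realized rewards $1-v^M_{i(s)j}$ with $i(s)\sim(1-\gamma)\mu_1(s)+\gamma\cdot\textnormal{rnd}$, and translating between the two sequences costs up to $\gamma$ in each of the two terms of the regret, so the natural execution of your plan yields a bound of order $2\gamma$ (or the $3\gamma$ you already computed), not $\gamma$. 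The paper escapes this double counting by a device your sketch is missing: to bound $\textnormal{expl}_p(\bar\mu_p)$ it restricts attention to the subsequence of iterations on which player $p$ did \emph{not} explore. On that subsequence the empirical frequency of player $p$'s actions is exactly $\hat\mu_p$, so $u_1(br,\hat\mu_2)$ coincides with the restricted $g_{\max}$; player $p$'s play there is governed by the base algorithm $A$, which is $0$-HC, so the restricted realized payoff is pinned to $v^M$ with no $\gamma$ penalty; and the opponent's $\gamma$-regret bound transfers to the subsequence because the exploration coin is independent of the opponent's algorithm. Only the opponent's $\gamma$ survives. Your decomposition correctly predicts \emph{which} $\gamma$ should remain, but without this sub-sampling step the de-noised quantities never line up with the regret guarantees the algorithms actually provide, and the extra $\gamma$ you are trying to suppress re-enters.
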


Lemma~\ref{L: HC-and-NE} gives \eqref{case:expl} from Proposition~\ref{prop:expl_removal} as a~special case, and, as shown in the last part of the~appendix, going through its proof in more detail yields \eqref{case:no_expl}.

\section{Experimental evaluation}\label{sec:Experimental}
In this section, we present the~experimental data related to our theoretical results. First, we empirically evaluate our hypothesis that Exp3 and regret matching algorithms ensure the~$\epsilon$-UPO property. Second, we test the~empirical convergence rates of SM-MCTS and SM-MCTS-A on synthetic games as well as smaller variants of games played by people. We investigate the~practical dependence of the~convergence error based on the~important parameters of the~games and evaluate the~effect of removing the~samples due to exploration from the~computed strategies.
We show that SM-MCTS generally converges as close to the equilibrium as SM-MCTS-A, but does it substantially faster. Therefore, since the commonly used Hannan consistent algorithms seem to satisfy the ~$\epsilon$-UPO property, SM-MCTS if the preferable algorithm for practical applications.

At the time of writing this paper, the best theoretically sound alternative to SM-MCTS known to us is a modification of MCCFR called Online Outcome Sampling (OOS, \citealt{lisy2015online}). For a comparison of SM-MCTS, this variant of MCCFR, and several exact algorithms, we refer the reader to \cite{bosansky2016aij}. While OOS converges the fastest out of the sampling algorithms in small games, it is often inferior to SM-MCTS in online game playing in larger games, because of the large variance caused by the importance sampling corrections.

\subsection{Experimental Domains}

\paragraph{\textbf{Goofspiel}}
Goofspiel is a~card game that appears in many works dedicated to simultaneous-move games (for example \citealt{Ross71Goofspiel,Rhoads12Computer,saffidine2012,lanctot2013goof,bosansky2013-ijcai}).
There are $3$ identical decks of cards with values $\{0,\dots, d-1\}$ (one for nature and one for each player).
Value of $d$ is a~parameter of the~game.
The deck for the~nature is shuffled at the~beginning of the~game. In each round, nature reveals the~top card from its deck. Each player selects any of their remaining cards and places it face-down on the~table so that the~opponent does not see the~card. Afterward, the~cards are turned face-up, and the~player with the~higher card wins the~card revealed by nature. The~card is discarded in case of a~draw. At the~end, the~player with the~higher sum of the~nature cards wins the~game or the~game is a~draw.
People play the~game with $13$ cards, but we use smaller numbers to be able to compute the~distance from the~equilibrium (that is, exploitability) in a~reasonable time.
We further simplify the~game by a~common assumption that both players know the~sequence of the~nature's cards in advance.

\paragraph{\textbf{Oshi-Zumo}}

Each player in Oshi-Zumo (for example, \citealt{buro2003}) starts with $N$ coins, and a~one-dimensional playing board with $2K+1$ locations (indexed $0, \ldots, 2K$) stretches between the~players.
At the~beginning, there is a~stone (or a~wrestler) located in the~center of the~board (that is, at position $K$).
During each move, both players simultaneously place their bid from the~amount of coins they have (but at least one if they still have some coins).
Afterward, the~bids are revealed, the~coins used for bids are removed from the~game, and the~highest bidder pushes the~wrestler one location towards the~opponent's side.
If the~bids are the~same, the~wrestler does not move.
The game proceeds until the~money runs out for both players, or the~wrestler is pushed out of the~board.
The player closer to the~wrestler's final position loses the~game.
If the~final position of the~wrestler is the~center, the~game is a~draw.
In our experiments, we use a~version with $K=2$ and $N=5$.

\paragraph{\textbf{Random Game}}

To achieve more general results, we also use randomly generated games. The~games are defined by the~number of actions $B$ available to each player in each decision point and a~depth $D$ ($D=0$ for leaves), which is the~same for all branches.
Each joint action is associated with a uniformly-random reward from $\{ -1,0,1 \}$, and the~corresponding accumulated utilities in leaves are integers between $-D$ and $D$.

\begin{figure}[t]
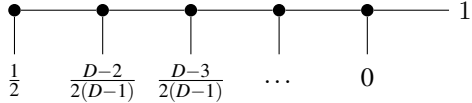

\centering
\usebox{\AntiL}
\caption{The Anti game used for evaluation of the~algorithms.}\label{fig:anti}
\end{figure}

\paragraph{\textbf{Anti}}

The last game we use in our evaluation is based on the~well-known single player game which demonstrates the~super-exponential convergence time of the~UCT algorithm \citep{coquelin2007bandit}. The~game is depicted in Figure~\ref{fig:anti}. In each stage, it deceives the~MCTS algorithm to end the~game while it is optimal to continue until the~end.

\subsection{$\epsilon$-UPO property}\label{sec: eps-UPOproperty}
To be able to apply Theorem \ref{thm: SM-MCTS convergence} (that is, the~convergence of SM-MCTS without averaging) to Exp3 and regret matching, the~selection algorithms have to assure the~$\epsilon$-UPO property for some $\epsilon$.
So far, we were unable to prove this hypothesis.
Instead, we support this claim by the~following numerical experiments.
Recall that having $\epsilon$-UPO property is defined as the~claim that for every game node $h\in\mathcal H$ and every joint action $\left(i,j\right)$  available at $h$, the~difference $\left|\bar{x}_{ij}\left(n\right)-\tilde{x}_{ij}\left(n\right)\right|$ between the~weighted and arithmetical averages decreases below $\epsilon$, as the~number $n$ of uses of $(i,j)$ at $h$ increases to infinity. We can think of this difference as a~bias between real and observed average payoffs.

\begin{figure}[t]
\includegraphics[width=\textwidth]{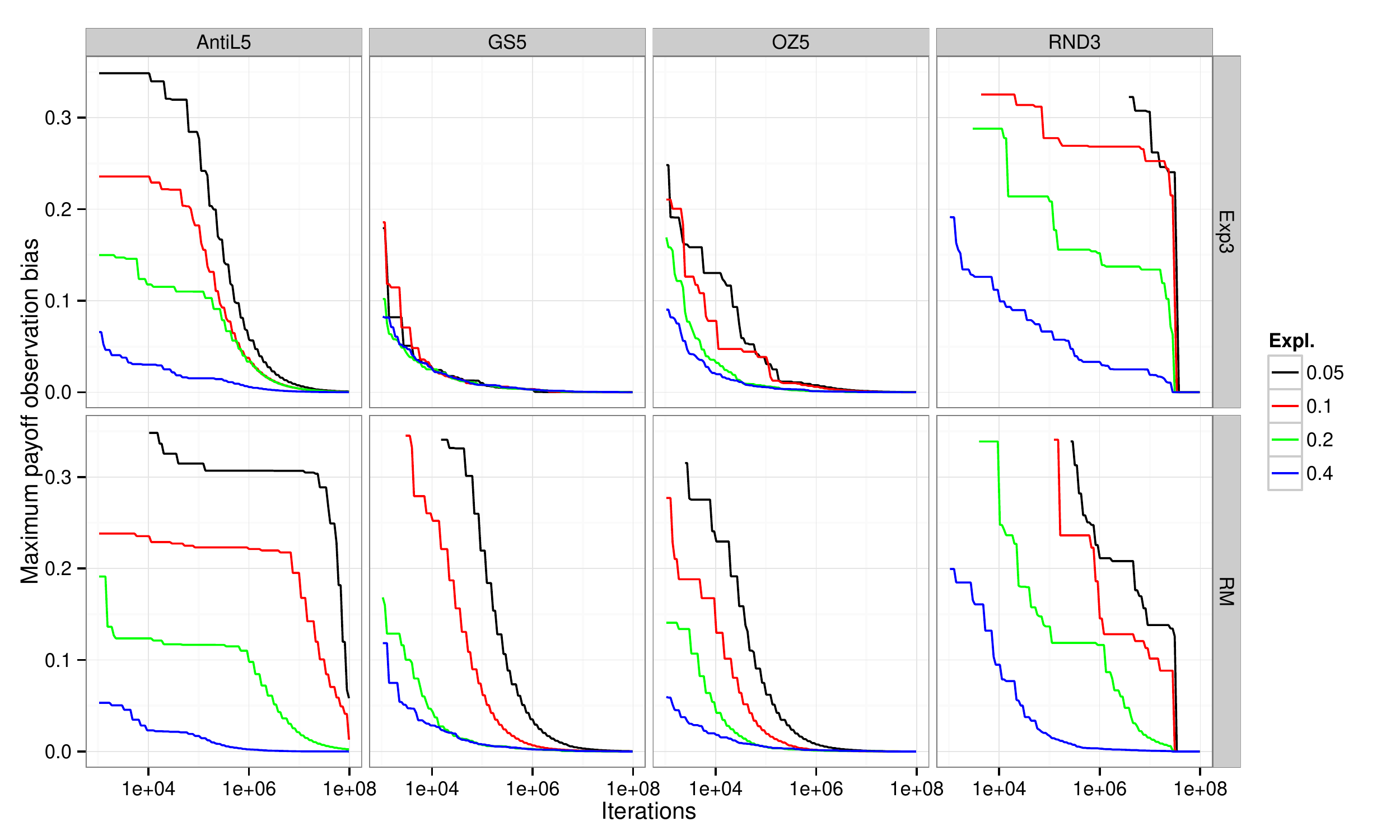}
\caption{The maximum of the~bias in payoff observations in SM-MCTS without averaging the~sample values (see the~second paragraph in Section \ref{sec: eps-UPOproperty}).}\label{fig:upoALL}
\end{figure}

We measured the~value of the~difference in the~sums in the~root node of the~four domains described above.
Besides the~random games, the~depth of the~game was set to 5.
For the~random games, the~depth and the~branching factor was $B=D=3$. Figure~\ref{fig:upoALL} presents one graph for each domain and each algorithm. The~x-axis is the~number of iterations and the~y-axis depicts the~maximum value of the~difference $\left|\bar{x}_{ij}\left(n\right)-\tilde{x}_{ij}\left(n\right)\right|$ from the~iteration on the~x-axis to the~end of the~run of the~algorithm ($\max_{n\in{(x,\dots,10^8)}, i,j} \left|\bar{x}_{ij}\left(n\right)-\tilde{x}_{ij}\left(n\right)\right|$). The~presented value is the~maximum from 50 runs of the~algorithms. For all games, the~difference eventually converges to zero. Generally, larger exploration ensures that the~difference goes to zero more quickly and the~bias in payoff observation is smaller.

\begin{figure} 
\centering
\begin{subfigure}{0.49\textwidth}
\includegraphics[width=\textwidth]{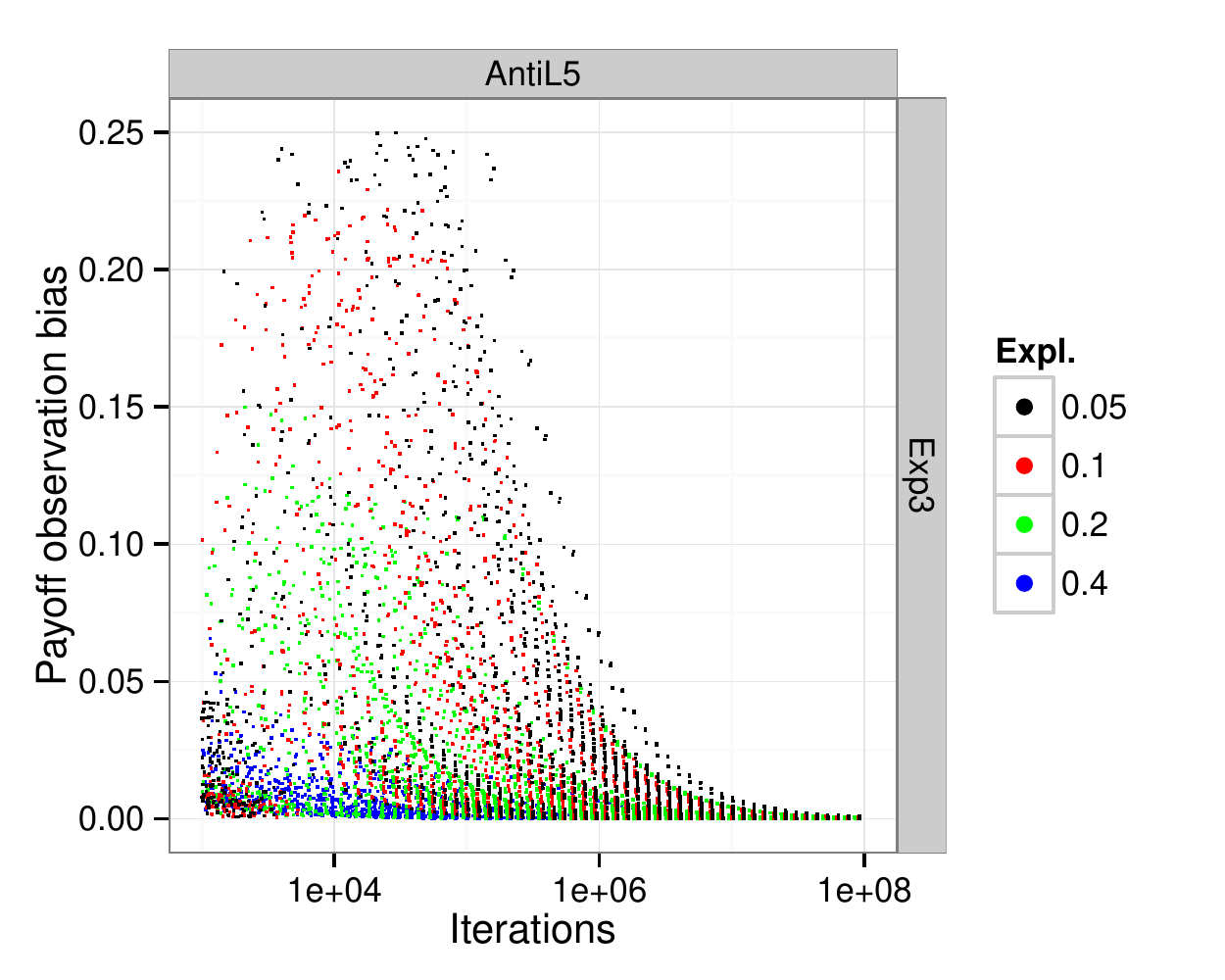}
\caption{Various exploration factors}\label{fig:upoAnti}
\end{subfigure}
\begin{subfigure}{0.49\textwidth}
\includegraphics[width=\textwidth]{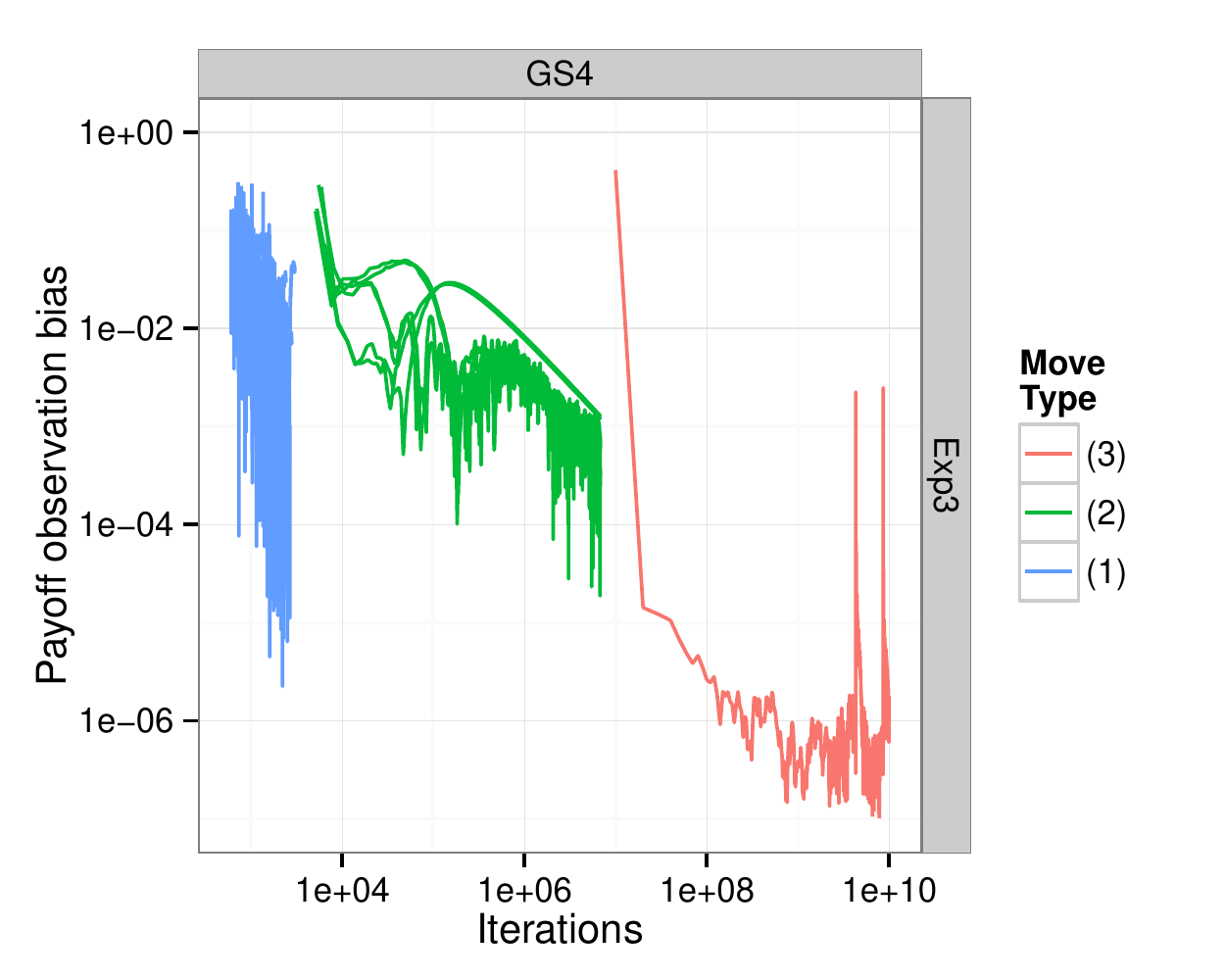}
\caption{Various joint actions}\label{fig:upoGS4}
\end{subfigure}
\caption{The dependence of the~current value of $\left|\bar{x}_{ij}\left(n\right)-\tilde{x}_{ij}\left(n\right)\right|$
on the~number of iterations that used the~given joint action in (a) Anti game and (b) Goofspiel with 4 cards per deck.}
\end{figure}

The main reason for the~bias is easy to explain in the~Anti game. Figure~\ref{fig:upoAnti} presents the~maximal values of the~bias in small time windows during the~convergence from all 50 runs. It is apparent that the~bias during the~convergence tends to jump very high (higher for smaller exploration) and then gradually decrease. This, however, happens only until a~certain point in time. The~reason for this behavior is that if the~algorithm learns an~action is good in a~specific state, it will use it very often and will do the~updates for the~action with much smaller weight in $\tilde{x}_{ij}\left(n\right)$ than the~updates for the~other action. However, when the~other action later proves to be substantially better, the~value of that action starts increasing rather quickly. At the~same time, its probability of being played starts increasing, and as a~result, the~weights used for the~received rewards start decreasing. This will cause a~strong dependence between the~rewards and the~weights, which causes the~bias. With smaller exploration, it takes more time to identify the~better alternative action; hence, when it happens, the~wrong action has already accumulated larger reward and the~discrepancy between the~right values and the~probability of playing the~actions is even stronger.

We also tested satisfaction of the~UPO property in the~root node of depth 4 Goofspiel, using  Exp3 algorithm and exploration $\epsilon=0.001$. The~results in Figure~\ref{fig:upoGS4} serve as evidence supporting our conjecture that Exp3 with exploration $0.001$ possesses the~$0.001$-UPO property (however this time, higher $n_{0}$ is required -- around $5\cdot10^{6}$).


We can divide the~joint actions $\left(i,j\right)$ at the~root into three groups: (1) the~actions which both players play (nearly) only when exploring, (2) the~actions which one of the~players chooses only because of the~exploration, and (3) the~actions which none of the~players uses only because of the~exploration. In Figure~\ref{fig:upoGS4}, (1) is on the~left, (2) in the~middle and (3) on the~right. The~third type of actions easily satisfied $\left|\bar{x}_{ij}\left(n\right)-\tilde{x}_{ij}\left(n\right)\right|\leq\epsilon$, while for the~second type, this inequality seems to eventually hold as well. The~shape of the~graphs suggests that the~difference between $\bar{x}_{ij}\left(n\right)$ and $\tilde{x}_{ij}\left(n\right)$ will eventually get below $\epsilon$ as well, however, the~$10^9$ iterations we used were not sufficient for this to happen. Luckily, even if the~inequality did not hold for these cases, it does not prevent the~convergence of SM-MCTS to an~approximate equilibrium. In the~proof of Proposition~\ref{prop: when tilde s = bar s}, the~term $\left|\bar{x}_{ij}\left(n\right)-\tilde{x}_{ij}\left(n\right)\right|$ is weighted by the~empirical frequencies $t_{j}/t$, resp. even $t_{ij}/t$. If an~action is only played when exploring, this number converges to $\epsilon/\left(\mbox{number of actions}\right)$, resp. $\left(\epsilon/\left(\mbox{number of actions}\right)\right)^{2}$, so even if we had $\left|\bar{x}_{ij}\left(n\right)-\tilde{x}_{ij}\left(n\right)\right|=1$, we could still bound the~required term by $\epsilon$, which is needed in the~proof of the~respective theorem.

\subsubsection{Removing exploration}

\begin{figure}[t]
\centering
\includegraphics[width=0.3\textwidth]{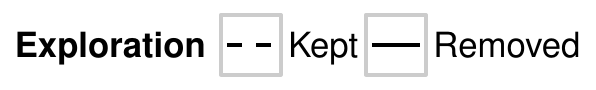}

\begin{subfigure}{0.244\textwidth}
\includegraphics[width=\textwidth]{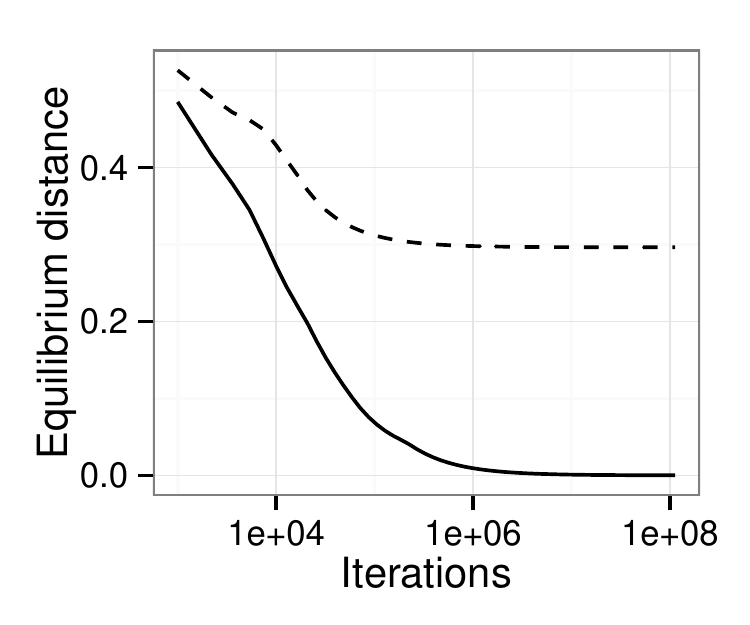}
\caption{Anti(5)}\label{fig:convRem_Anti}
\end{subfigure}
\begin{subfigure}{0.244\textwidth}
\includegraphics[width=\textwidth]{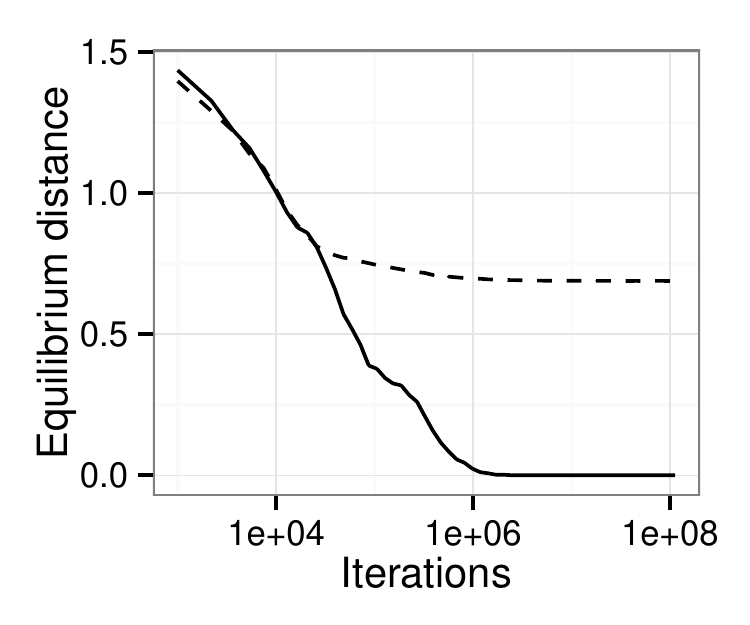}
\caption{Goofspiel(5)}\label{fig:convRem_GS}
\end{subfigure}
\begin{subfigure}{0.244\textwidth}
\includegraphics[width=\textwidth]{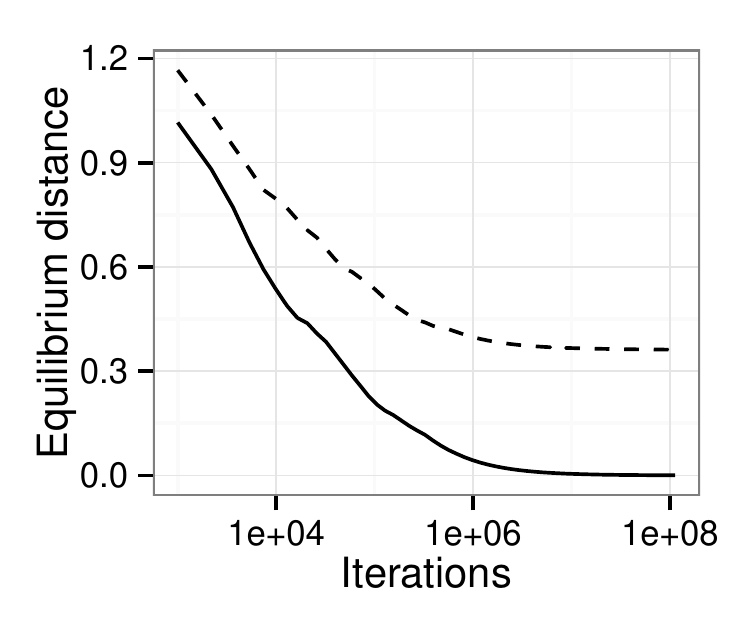}
\caption{Oshi-Zumo(5)}\label{fig:convRem_OZ}
\end{subfigure}
\begin{subfigure}{0.244\textwidth}
\includegraphics[width=\textwidth]{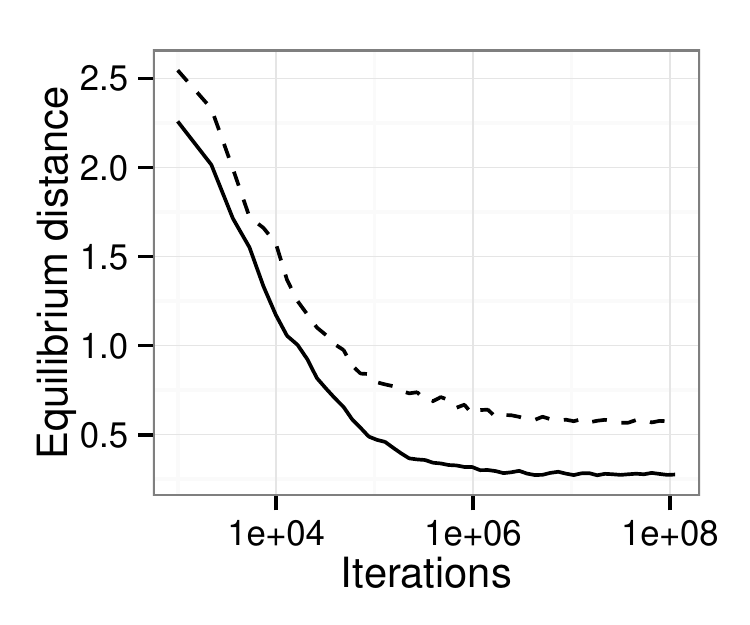}
\caption{Random(3)}\label{fig:convRem_RND}
\end{subfigure}
\caption{The effect of removing exploration in SM-MCTS with Exp3 selection and $\epsilon=0.2$.}\label{fig:explRem}
\end{figure}

In Section~\ref{sec: exploitability}, we show that the~computed strategy should generally improve when we disregard the~samples caused by exploration.
Figure~\ref{fig:explRem} shows that the practical benefit is large, and suggests that the~exploration should always be removed, and we do so in all the experiments below.

\subsection{Empirical convergence rate of SM-MCTS(-A) algorithms}\label{sec:emp_conv_rate}

\begin{figure}[tp]
\centering
\includegraphics[width=0.7\textwidth]{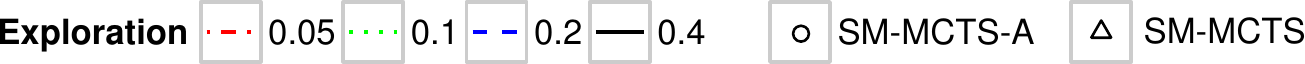}

\begin{subfigure}{0.45\textwidth}
\includegraphics[width=\textwidth]{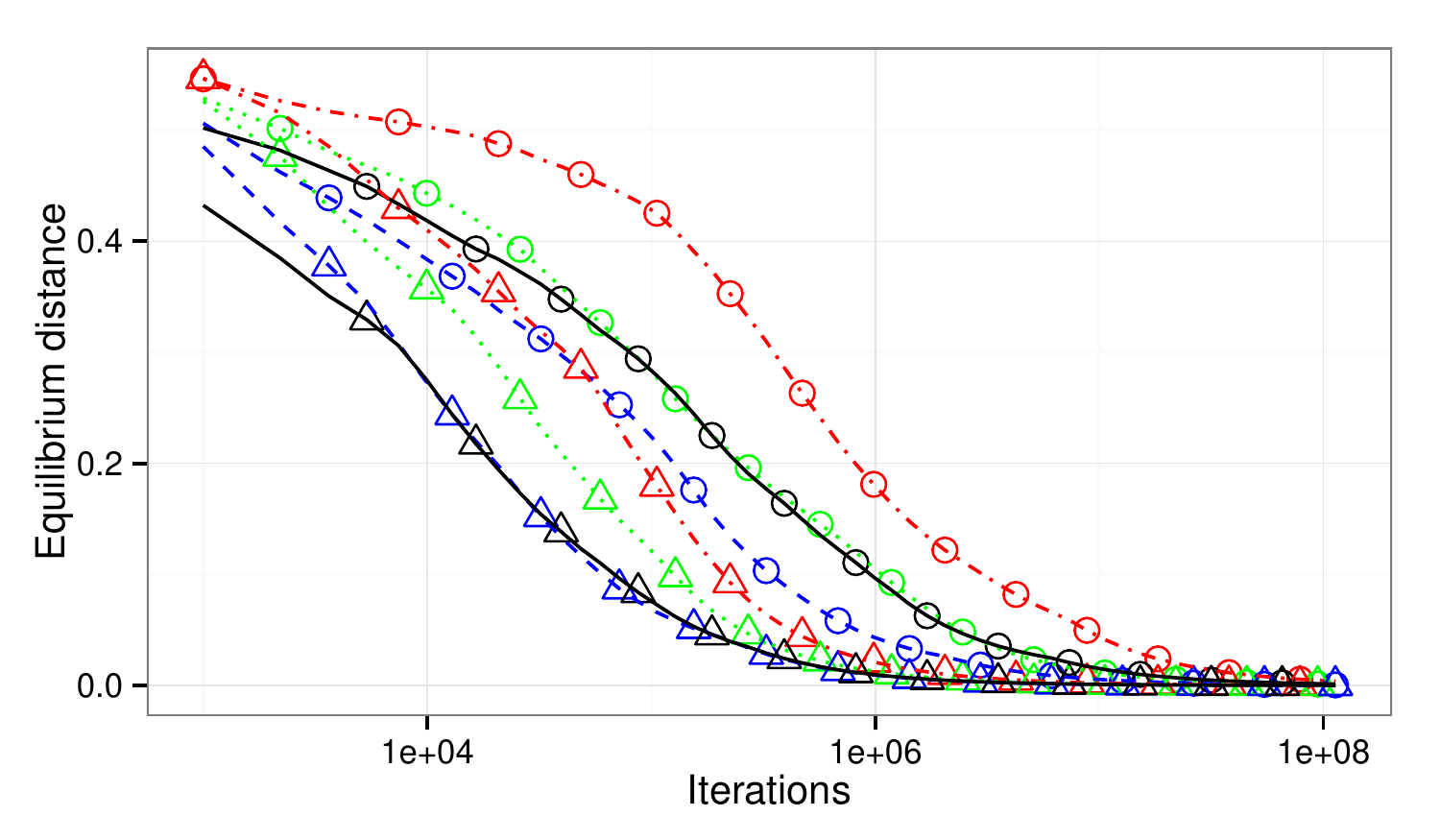}
\caption{Anti(5), Exp3}\label{fig:convAnti_Exp3}
\end{subfigure}
\begin{subfigure}{0.45\textwidth}
\includegraphics[width=\textwidth]{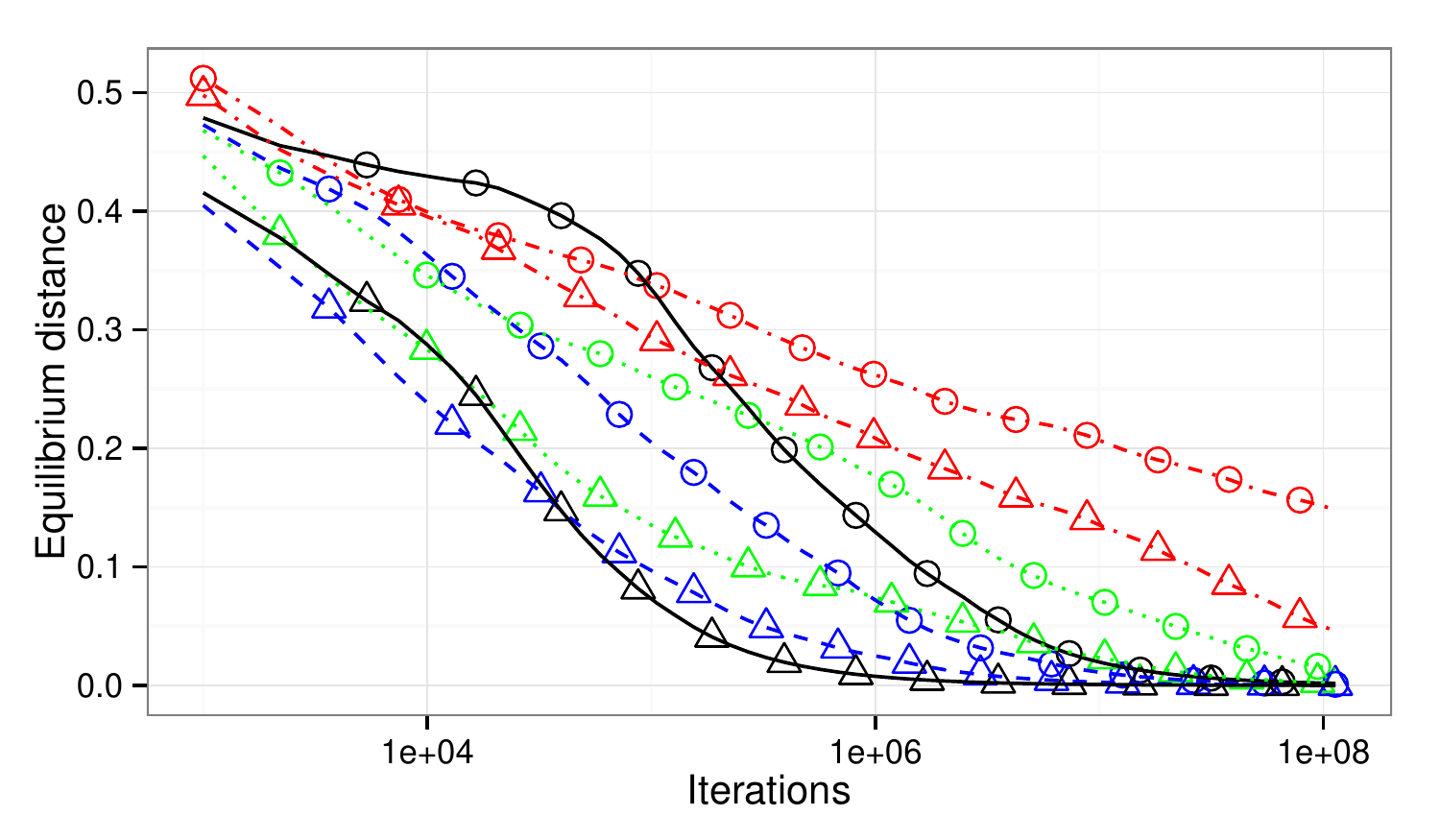}
\caption{Anti(5), RM}\label{fig:convAnti_RM}
\end{subfigure}

\begin{subfigure}{0.45\textwidth}
\includegraphics[width=\textwidth]{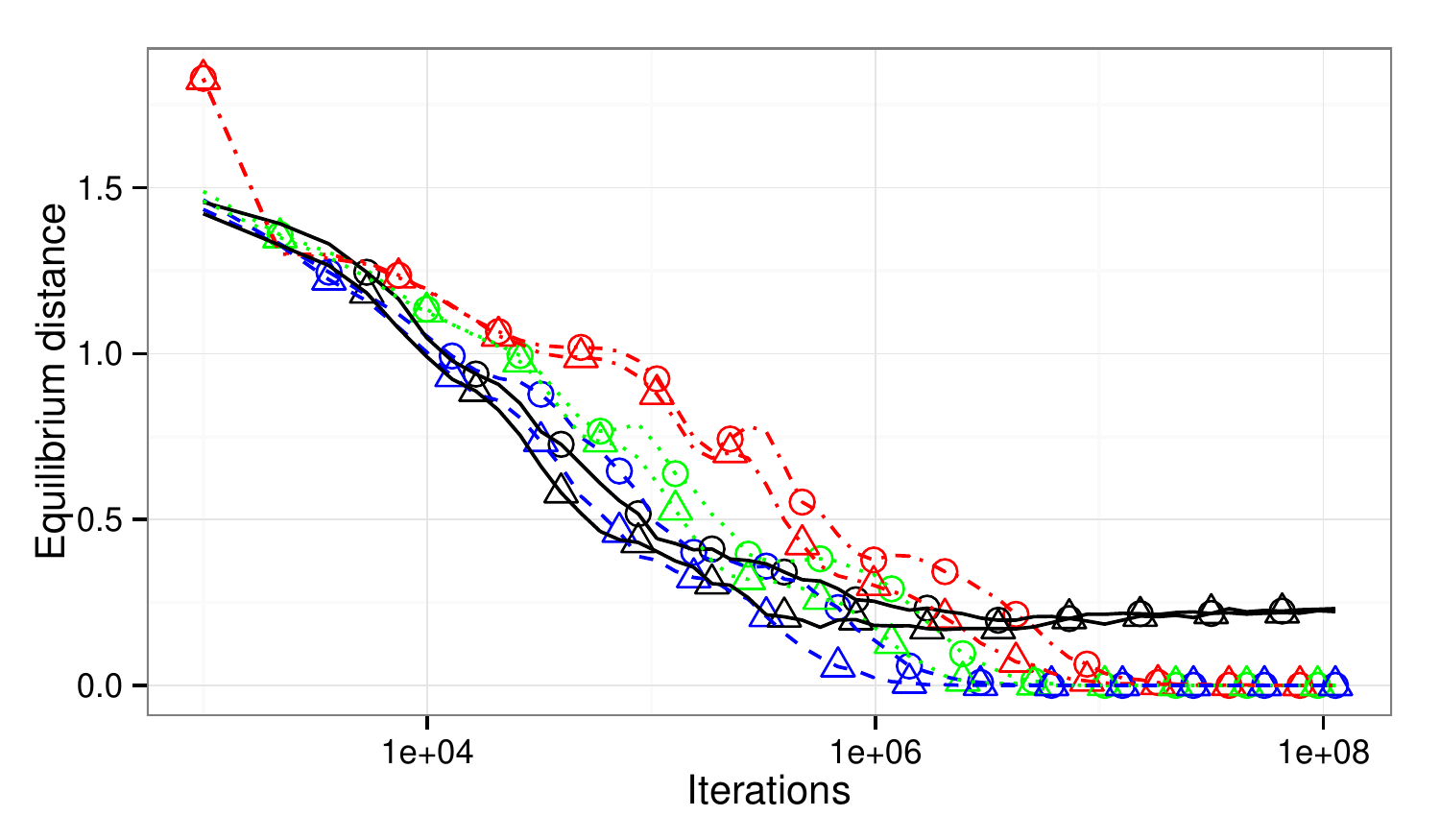}
\caption{Goofspiel(5), Exp3}\label{fig:convGS_Exp3}
\end{subfigure}
\begin{subfigure}{0.45\textwidth}
\includegraphics[width=\textwidth]{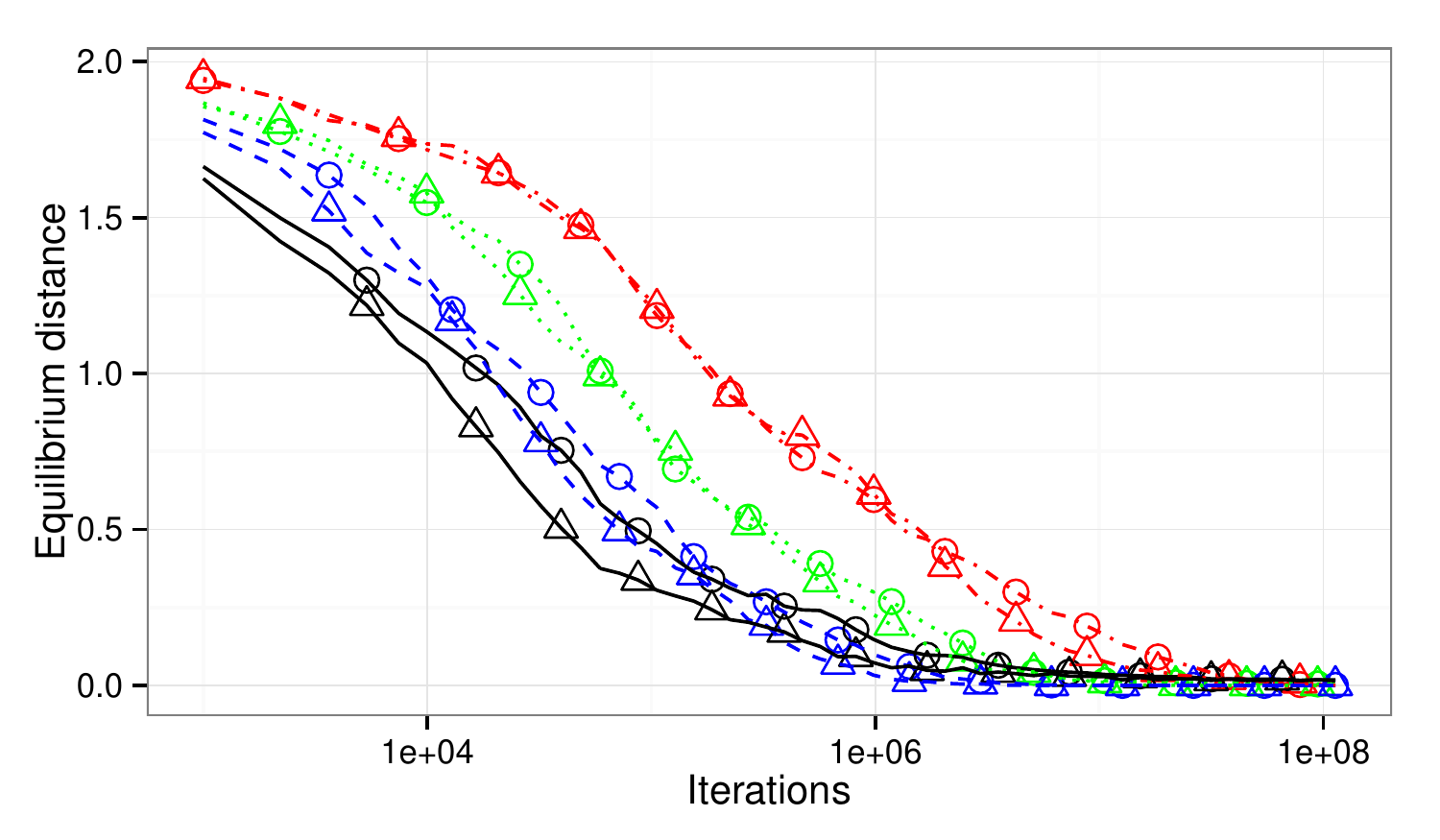}
\caption{Goofspiel(5), RM}\label{fig:convGS_RM}
\end{subfigure}

\begin{subfigure}{0.45\textwidth}
\includegraphics[width=\textwidth]{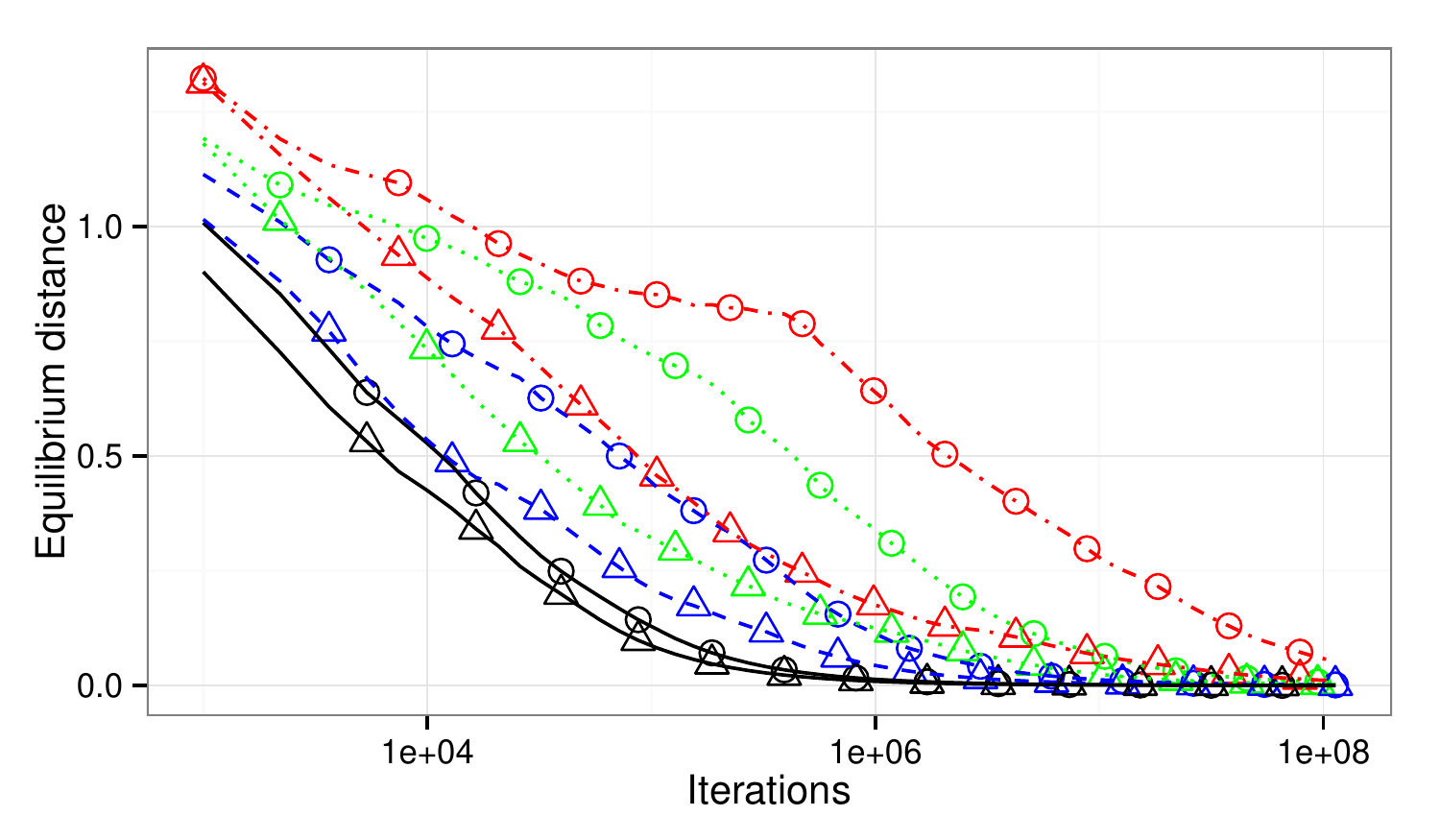}
\caption{Oshi-Zumo(5), Exp3}\label{fig:convAnti_Exp3b}
\end{subfigure}
\begin{subfigure}{0.45\textwidth}
\includegraphics[width=\textwidth]{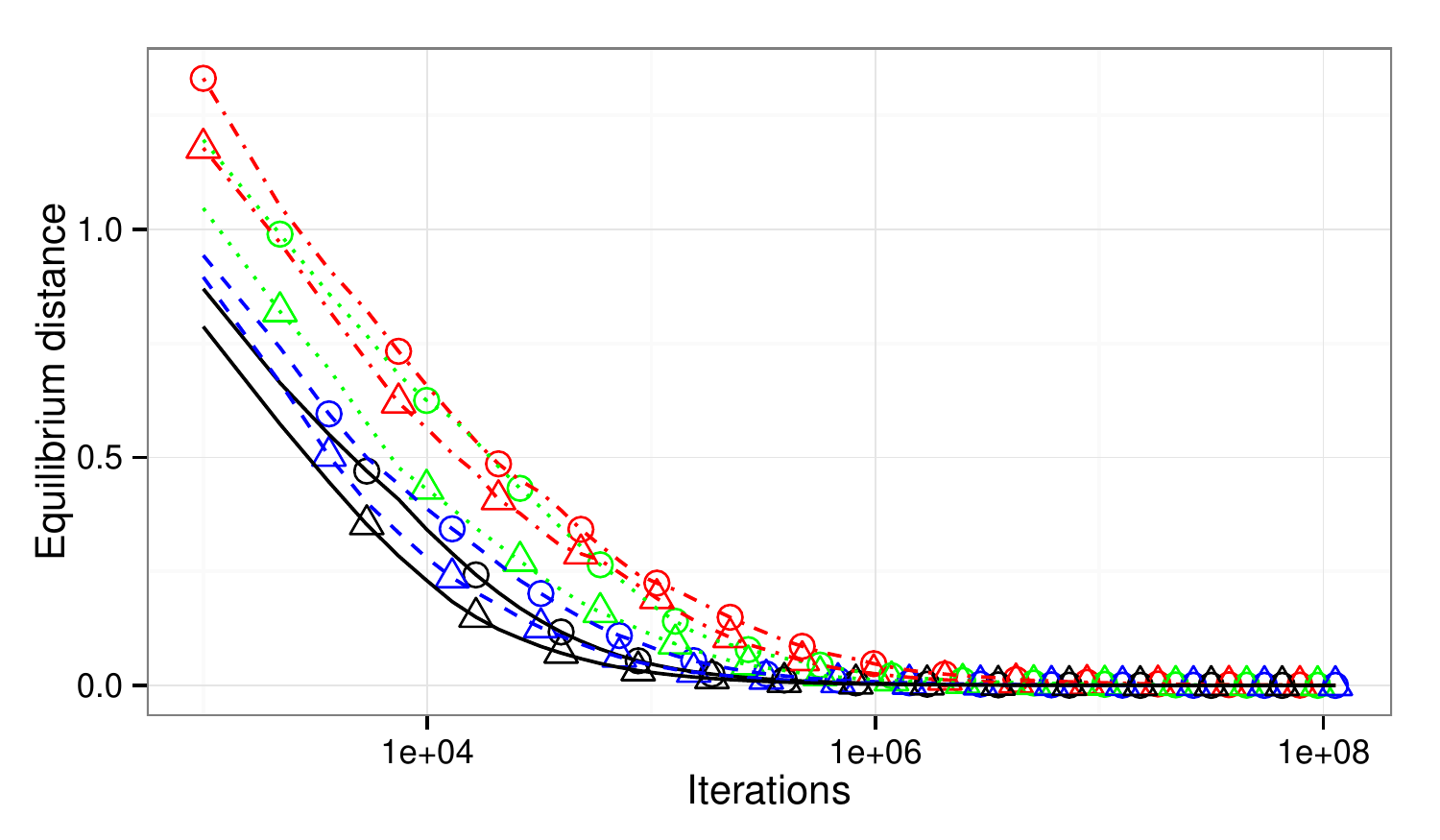}
\caption{Oshi-Zumo(5), RM}\label{fig:convAnti_RMb}
\end{subfigure}

\begin{subfigure}{0.45\textwidth}
\includegraphics[width=\textwidth]{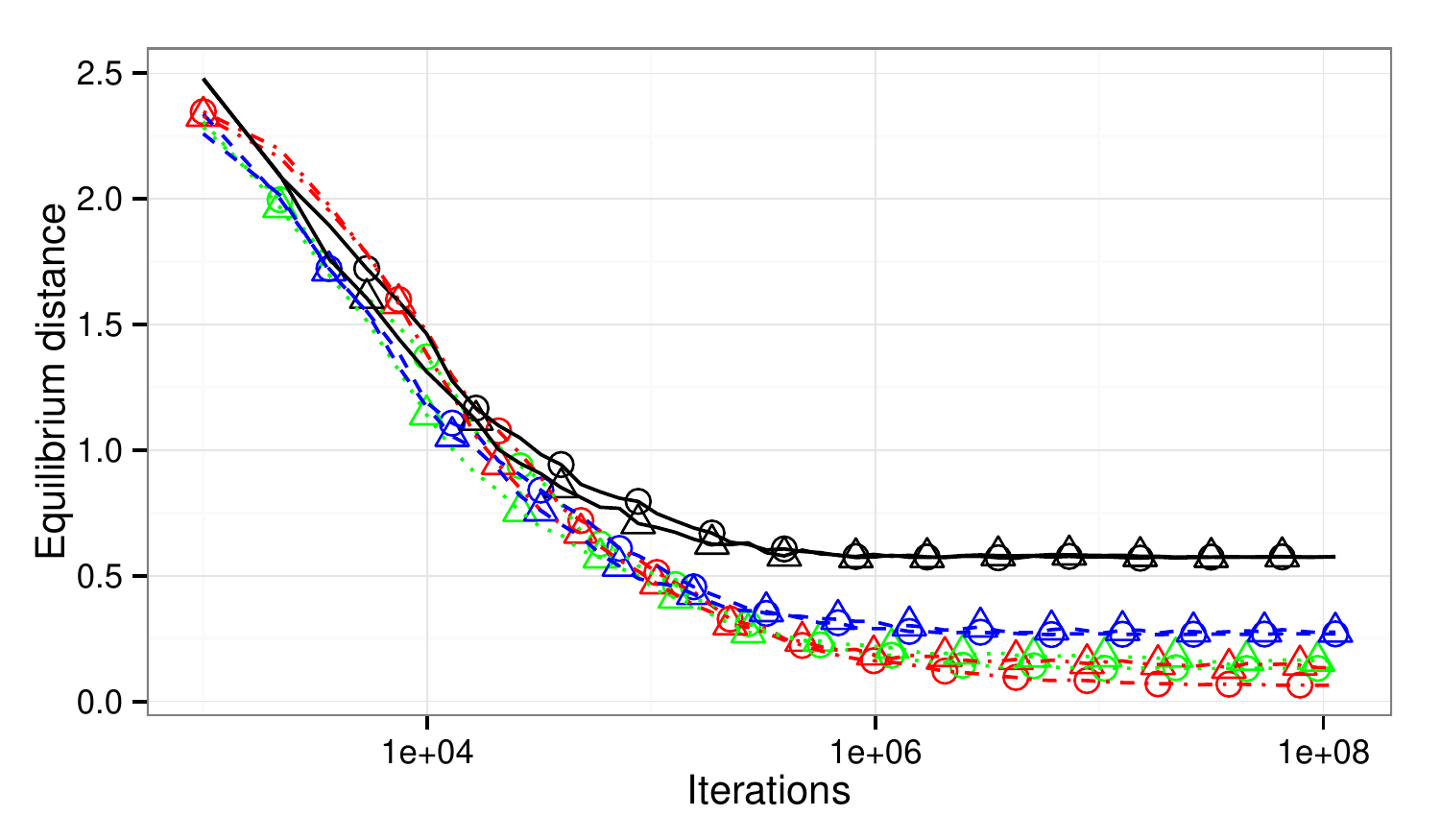}
\caption{Random(3,3), Exp3}\label{fig:convAnti_Exp3c}
\end{subfigure}
\begin{subfigure}{0.45\textwidth}
\includegraphics[width=\textwidth]{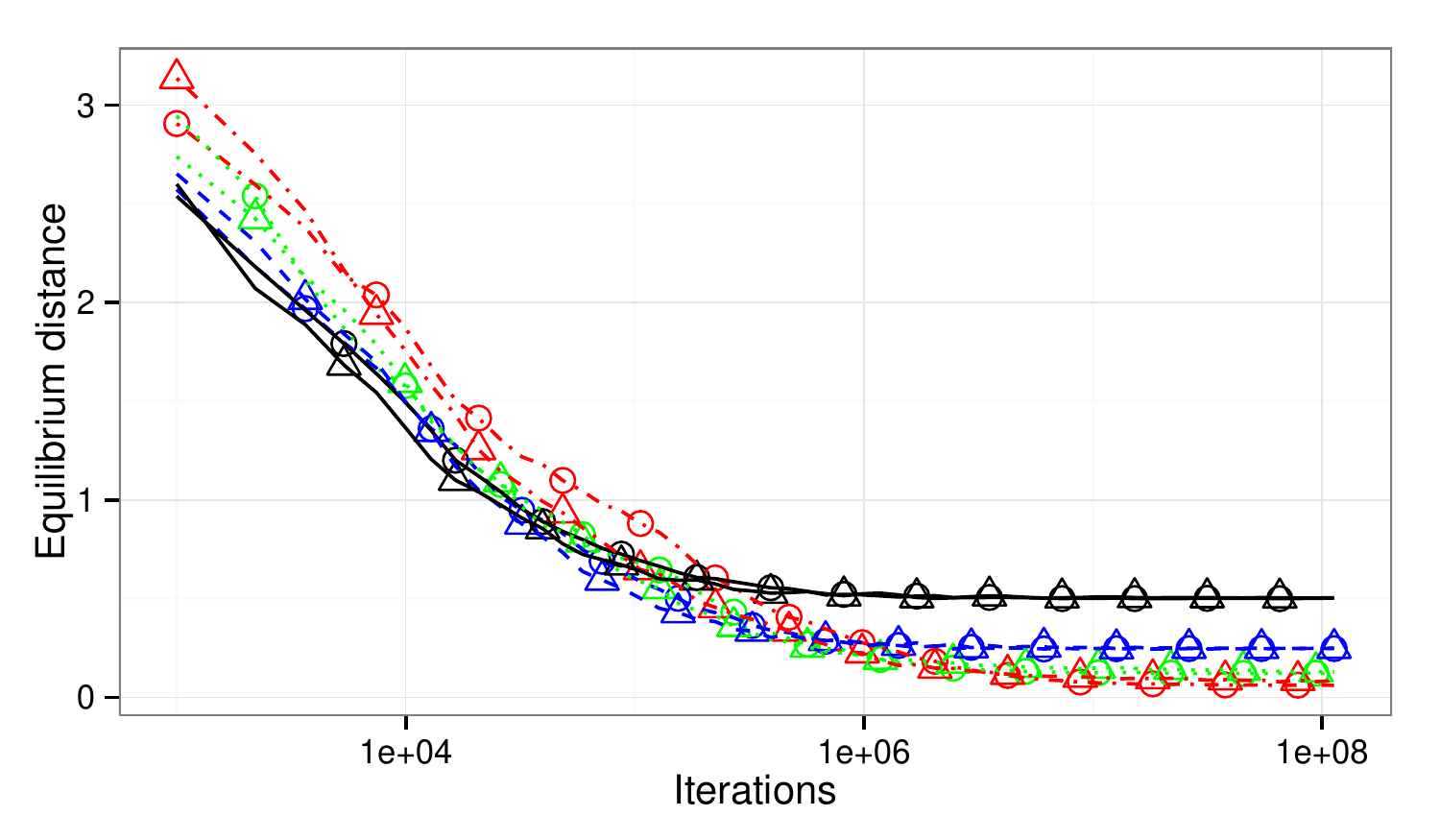}
\caption{Random(3,3), RM}\label{fig:convAnti_RMc}
\end{subfigure}
\caption{Comparison of empirical convergence rates of SM-MCTS (triangles) and SM-MCTS-A (circles) with Exp3 and RM selection functions in various domains.}\label{fig:conv}
\end{figure}

In this section, we investigate the~empirical convergence rates of the~analyzed algorithms. We first compare the~speeds of convergence of SM-MCTS-A and SM-MCTS and then investigate the~dependence of the~error of the~eventual solution of the~algorithms on relevant parameters. Finally, we focus on the~effect of removing the~exploration samples discussed in Section~\ref{sec: exploitability}.
In all cases, ``equilibrium distance'' is measured by the exploitability $\textrm{expl}(\cdot)$ of the evaluated strategy.

\subsubsection{SM-MCTS with and without averaging}

Figure~\ref{fig:conv} presents the~dependence of the~exploitability of the~strategies produced by the~algorithms on the~number of executed iterations. We compare the empirical speed and quality of convergence of SM-MCTS and SM-MCTS-A with different setting of the exploration parameter, however, the~samples caused by exploration are removed from the~strategy for both algorithms, as suggested in Section~\ref{sec: exploitability}. All iterations are executed from the~root of the~game. The~colors (and line types) in the~graphs represent different settings of the~exploration parameter. SM-MCTS-A (circles) seems to always converge to the~same distance from the~equilibrium as SM-MCTS (triangles), regardless of the~used selection function and exploration setting.
The convergence of the~variant with averaging is generally slower.
The difference is most visible in the~Anti game with Exp3 selection function, where the~averaging can cause the~convergence to require even ten times more iterations to reach the~same distance from a~NE as the~algorithm without averaging. The~situation is similar also in Oshi-Zumo. However, the~effect is much weaker with RM selection. With the~exception of the~Anti game, the~variants with and without averaging converge at almost the~same speed.


\subsubsection{Distance from the~equilibrium}\label{sec:emp_eq_dist}

Even though the~depth of most games in Figure~\ref{fig:conv} was 5, even with large exploration ($0.4$), the~algorithm was often able to find the~exact equilibrium. This indicates that in practical problems, even the~linear bound on the~distance from the~equilibrium from the~example in Section~\ref{sec:Discussion} is too pessimistic.

If the~game contains pure Nash equilibria, as in Anti and the~used setting of Oshi-Zumo, exact equilibrium can often be found.
If (non-uniform) mixed equilibria are required, the~gap between the~final
solution and the~equilibrium increases both with the~depth of the~game and the~amount of exploration.
The effect of the~amount of exploration is visible in Figure~\ref{fig:convGS_Exp3}, where the~largest exploration prevented the~algorithm from converging to the~exact equilibrium. A more gradual effect is visible in Figures~\ref{fig:conv}(g,h), where the~exploitability seems to increase linearly with increasing exploration.
Note that in all cases, the~exploitability (computed as the~sum $\textnormal{expl}_1+\textnormal{expl}_2$) was less than $2\cdot\epsilon M$, where $\epsilon$ is the~amount of exploration and M is the~maximum utility value.

\begin{figure}[t]
\centering
\includegraphics[width=0.4\textwidth]{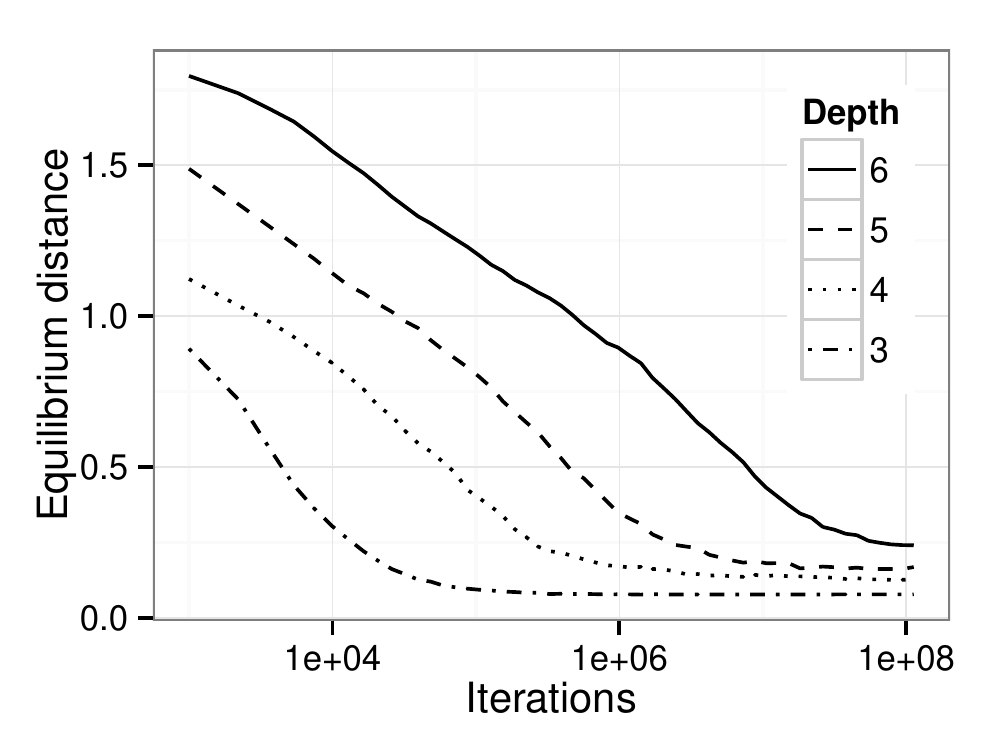}
\caption{Convergence (measured by exploitability) of SM-MCTS with Exp3 selection on random games with three actions of each player in each stage and various depths.}\label{fig:convD}
\end{figure}

Figure~\ref{fig:convD} presents the~exploitability of SM-MCTS with Exp3 selection and exploration $\epsilon=0.2$ in random games with $B=3$ and various depths. The~eventual error increases with increasing depth, but even with the~depth of 6, the~eventual error was on average around 0.25 and always less than $0.3$.

\section{Conclusion\label{sec:Discussion}}

We studied the~convergence properties of learning dynamics in zero-sum extensive form games with simultaneous moves, but otherwise perfect information (SMGs). These games generalize normal form games and are a~special case of extensive form games. Hannan consistent action selection by each player is known to converge to a~Nash equilibrium in normal form games.
In general extensive form games, the~convergence to a~Nash equilibrium is guaranteed if both players use in each of their decision points a~separate selection strategy that is Hannan consistent with respect to the~counterfactual values.
However, with a~single simulation of the~game play, the estimates of these values suffers from a~large variance. This causes the~simulation-based algorithms that use counterfactual values to converge slowly in larger games.
We investigated whether the~convergence can be achieved without using counterfactual values in SMGs.
The studied learning dynamic directly corresponds to a~class of Monte Carlo Tree Search algorithms commonly used for playing SMGs. There was very little pre-existing theory that would describe the~behavior of the~algorithms in these games and provide guarantees on their performance.

Our main results show that using a~separate Hannan consistent algorithm to choose actions in each decision point of an~SMG, which is the~standard SM-MCTS, is not sufficient to guarantee the~convergence of the~learning dynamic to a~Nash equilibrium. This guarantee is obtained if the~used selection function, in addition to being HC, guarantees the~Unbiased Payoff Observations property. We hypothesize that Exp3 and regret matching selection functions have this property, and provide experimental evidence supporting this claim.
Alternatively, the convergence can be guaranteed with any HC selection, if one is willing to use $\textnormal{SM-MCTS-A}$ and accept the (empirically) slower convergence rate caused by averaging the back-propagated values (Section~\ref{sec:emp_conv_rate}).

Our other results are as follows.
In Section~\ref{sec: exploitability}, we provide an~analysis of a~previously suggested improvement of SM-MCTS which proposes to remove the~exploration samples from the algorithm's output. We provide formal and empirical evidence which suggests that this modification is sound and generally improves the~strategy.

Table~\ref{tab:summary} summarizes some of the more detailed results. In Theorem \ref{thm: SM-MCTS-A convergence} (\ref{thm: SM-MCTS convergence}) we show that SM-MCTS-A (SM-MCTS) algorithm with an~$\epsilon$-HC selection function eventually converges at least to a~$C\epsilon$-NE of a~game, where for game depth $D$, $C$ is of the~order $D^2$ ($2^D$). In Section~\ref{sec:Counterexample}, we show that the~worst-case dependence of $C$ on $D$ cannot be sublinear, even after the~exploration is removed. This gives us both lower and upper bounds on the~value of $C$, but it remains to determine whether these bounds are tight.
The same lower bound holds for both SM-MCTS and SM-MCTS-A, which means they may in fact converge to the same distance from the equilibrium, as suggested by the experiments in Section~\ref{sec:emp_eq_dist}.
We hypothesize that the~upper bound bound is not tight (and that after exploration-removal, the optimal value is $C=2D$).

\begin{table}
\small
\renewcommand{\arraystretch}{1.5}
\begin{center} \begin{tabular}{ r|c|c|c| }
\multicolumn{1}{r}{}
 &  \multicolumn{1}{c}{SM-MCTS-A}
 & \multicolumn{2}{c}{SM-MCTS~~~} \\
\cline{2-4}
Assumptions & $\epsilon$-HC & $\epsilon$-HC, $\epsilon$-UPO & $\epsilon$-HC only\\
\cline{2-4}
Upper bound & $2D(D+1)\epsilon$ & $\left(12(2^D-1)-8D\right)\epsilon$ & might not converge\\
\cline{2-3}
Lower bound & $2D\epsilon$ & $2D\epsilon$ & to an~approx. NE at all\\
\cline{2-4}
\end{tabular}
\end{center}
\caption{Summary of the~proven bounds on the~worst-case eventual exploitability of the~strategy produced by SM-MCTS(-A) run with an~$\epsilon$-Hannan consistent selection function}\label{tab:summary}
\end{table}


\subsection{Open problems and future work}
While this paper provides a~significant step towards understanding learning and Monte Carlo tree search in simultaneous-move games, it also leaves some problems open.
First of all, SM-MCTS is used because of the speed with which it can find a good strategy.
However, a practically useful bound on this speed of convergence is missing.
Furthermore, many of the~guarantees presented in the~paper have not been shown to be tight, so an~obvious future research direction would be to improve the~guarantees or show the~tightness of these results (Problem~\ref{prob: linearity in D}). These could be further improved by allowing the exploration rate to depend on the position in the game tree.
Finally, a~better description of the~properties (of the~selection policy) which guarantee that SM-MCTS converges to an optimal strategy could be provided.
In particular, it would be interesting to better understand the UPO property and, more generally, the propagation of information and uncertainty in game trees with simultaneous moves. This would in turn enable either proving that the~common no-regret bandit algorithms indeed produce unbiased payoff observations, or finding some suitable alternative to the UPO property.
Moreover, focusing on these specific (non-pathological) algorithms or their suitable modifications might simplify the derivation of tighter performance bounds.

\section*{Acknowledgements}
We are grateful to Tor Lattimore for helpful comments on earlier version of this paper, and to the anonymous reviewers for their patience and feedback.
This research was supported by the~Czech Science Foundation (grants no. P202/12/2054 and 18-27483Y).
Access to computing and storage facilities owned by parties and projects contributing to the~National Grid Infrastructure MetaCentrum, provided under the~programme "Projects of Large Infrastructure for Research, Development, and Innovations" (LM2010005), is appreciated.

\begin{table}[htbp]
\begin{framed}
\centering
\begin{tabular}{ r l }
& \textbf{Abbreviations}\\
\text{SM-MCTS(-A)} & \text{(averaged) simultaneous-move Monte Carlo tree search}\\
\text{NE} & \text{Nash equilibrium}\\
\text{MAB} & \text{multi-armed bandit}\\
\text{HC} & \text{Hannan consistent}\\
\text{Exp3} & \text{exponential-weight algorithm for exploration and exploitation}\\
\text{RM} & \text{regret matching algorithm}\\
\text{UPO} & \text{unbiased payoff observations}\\
\text{CFR} & \text{the counterfactual regret minimization algorithm} \\
\text{EFG} & \text{extensive form game}\\
\\
& \textbf{Notation related to a multi-armed bandit problem $(P)$}\\
$x_i(t)$		& the reward for action $i$ at time $t$\\
$i(t)$ 			& the action chosen at time $t$\\
$\sigma$, $\bar \sigma$, $\hat \sigma$ & strategy (at the given time, average, and empirical frequencies)\\
$G(t)$, $G_{\max}(t)$ & the actually obtained (resp. maximum achievable) cumulative reward\\
$R(t)$ & the external regret at time $t$\\
$g$, $g_{\max}$, $r$ & the averaged variants of $G$, $G_{\max}$, and $R$\\
$\gamma$, $\epsilon$ & exploration rate, $\epsilon$-Hannan consistency parameter\\
\\
& \textbf{Matrix games}\\
$M = (v^M_{ij})_{i,j}$ & matrix game\\
$u^M$, $v^M$ & utility in $M$, minimax value of $M$\\
$br$ & best response\\
$(P^M)$ & MAB problem corresponding to $M$\\
$\widetilde M(t) = (v^{\widetilde M}_ij(t))_{ij}$ & repeated matrix game with bounded distortion\\
$(P^{\widetilde M})$, $R^{\widetilde M}$, $g^{\widetilde M}$, \dots & the corresponding MAB problem and related variables\\
\\
& \textbf{Notation related to simultaneous-move games...}\\
$h$, $v^h$, $t^h$ & state in the game, is value, and the number of visits\\
$i^h(t)$, $j^h(t)$ & actions selected at $h$ during its $t$-th visit\\
$t^h_i$, $t^h_j$, $t^h_{ij}$ & variables related to the number of uses of each (joint) action\\
$D$, $d_h$ & game depth, depth of the sub-tree rooted at $h$\\
& \textbf{...and the corresponding MAB problems}\\
$(P^h)$, $G^h$, $x^h$, \dots & player 1 MAB problem corresponding to SM-MCTS at $h$, related variables\\
$r^h_1$, $r^h_2$ & average regret corresponding to $(P^h)$, resp. player 2 variant of $(P^h)$\\
$h_{ij}$, $v^h_{ij}$, $x^h_{ij}$, \dots & state resulting from taking $(i,j)$ at $h$, corresponding variables $v^{h_{ij}}$, $x^{h_{ij}}$, \dots \\
$(\bar P^h)$ & player 1 MAB problem corresponding to SM-MCTS-A at $h$\\
$\bar x^h_{ij}(n)$, $\tilde x^h_{ij}(n)$ & the arithmetical (resp. weighted) average of values $x^h_{ij}(1)$, $\dots$, $x^h_{ij}(n)$\\
\end{tabular}
\end{framed}
\caption{The abbreviations and common notation for quick reference}\label{tab:notation}
\end{table}

\bibliography{refs}


\appendix
\section*{Appendix}

The appendix contains the~proofs for those results which have not been already proven in the~main text.

\section{Proofs of Lemma~\ref{lem:emp_and_avg} and \ref{lemma: A* je HC}{}}

We start with the~proof of Lemma \ref{lem:emp_and_avg}, which states that eventually, there is no difference between the~empirical and the~average strategies.

\empaavg*
\begin{proof}
It is enough to show that $ \underset{t\rightarrow\infty}{\limsup}\, |\hat{\sigma_1}^h(t)(i)-\bar{\sigma_1}^h(t)(i)|=0$ holds almost surely for any $h$ and $i$. Without changing the~limit behavior of $\hat \sigma$, we can assume that $t^h_i$ is defined as the~number of uses of $i$ during iterations $1$, $2$, \dots, $t$.\footnote{See the~definition of $t^h_i$ and footnote \ref{footnote: t_i}.} Using the~definitions of $\hat{\sigma_1}^h(t)(i)$ and $\bar{\sigma_1}^h(t)(i)$, we get
\[
\hat{\sigma_1}^h(t)(i)-\bar{\sigma_1}^h(t)(i)
  = \frac{1}{t} \left(t_i - \sum_{s=1}^t \sigma_1^h(s)(i)\right)
  = \frac{1}{t} \sum_{s=1}^t \left(\delta_{i, i^h(s)} - \sigma_1^h(s)(i) \right),
\]
where $\delta_{i,j}$ is the~Kronecker delta. Using the~(martingale version of) Central Limit Theorem on the~sequence of random variables $X_t=\sum_{s=1}^t \left(\delta_{i, i^h(s)} - \sigma_1^h(s)(i) \right)$ gives the~result (the conditions clearly hold, since $\mathbf{E}\left[\delta_{i,i^h(t)} - \sigma_1^h(t)(i)\big| X_1,...,X_{t-1}\right]=0$ implies that $X_t$ is a~martingale and $\delta_{i,i^h(t)} - \sigma_1^h(t)(i)\in\left[-1,1\right]$ guarantees that all required moments are finite).
\end{proof}

Next, we prove Lemma \ref{lemma: A* je HC}, which states that $\epsilon$-Hannan consistency is not substantially affected by additional exploration.
\AjeHC*
\begin{proof}
$(i)$: The~guaranteed exploration property of $A^{\gamma}$ is trivial. Denoting by {*} the~variables corresponding to the~algorithm $A^{\gamma}$
we get
\begin{eqnarray*}
r^{*}(t) & = & \frac{1}{t}R^{*}(t)\leq\frac{1}{t}\left(1\cdot t_{\textrm{ex}}+R(t-t_{\textrm{ex}})\right) \\
 & = & \frac{t_{\textrm{ex}}}{t}+\frac{R(t-t_{\textrm{ex}})}{t-t_{\textrm{ex}}}\cdot\frac{t-t_{\textrm{ex}}}{t},
\end{eqnarray*}
where, for given $t\in\mathbb{N}$, $t_{\textrm{ex}}$ denotes the~number of times $A^{\gamma}$ explored up to $t$-th iteration. By the~strong law of large numbers we have that $\underset{t\rightarrow\infty}{\lim} \ t_{\textrm{ex}}/ t=\gamma$ holds almost surely. This implies 
\begin{eqnarray*}
\limsup_{t\rightarrow\infty}r^{*}\left(t\right) & \leq & \limsup_{t\rightarrow\infty}\frac{t_{\textrm{ex}}}{t}+\limsup_{t-t_{\textrm{ex}}\rightarrow\infty}\frac{R\left(t-t_{\textrm{ex}}\right)}{t-t_{\textrm{ex}}}\cdot\limsup_{t\rightarrow\infty}\frac{t-t_{\textrm{ex}}}{t}\\
 & \leq & \gamma+\epsilon\left(1-\gamma\right)\\
 & \leq & \gamma+\epsilon,
\end{eqnarray*}
which means that $A^{\gamma}$ is $\left(\epsilon+\gamma\right)$-Hannan
consistent.

$(ii)$: The guaranteed exploration follows from the fact that if the second player also uses $A^{\sqrt{\cdot}}$, the probability that both players explore at once is at least $\frac{1}{\sqrt{t}} \cdot \frac{1}{\sqrt{t}} = \frac{1}{t}$. Since $\sum \frac{1}{t} = \infty$, this will happen infinitely many times, so each joint action will be sampled infinitely many times. The proof of $\epsilon$-Hannan consistency is similar to $(i)$ and we omit it.
\end{proof}

\section{The counterexample for Theorem~\ref{thm: SM-MCTS convergence}}

We first present some details related to the UPO property, and then proceed to give a proof of Lemma~\ref{lem: A_I properties} from Section~\ref{sec: HC counterexample}.

\subsection{Examples and a sufficient condition for the UPO property}\label{sec:UPO examples}

We present a~few examples which are related to the UPO property and support the~discussion that follows. We first give the~negative examples, and then continue with the~positive ones.

\begin{example}[Examples related to the~UPO property]\label{ex: UPO}$\ $
\begin{enumerate}[(i)]
\item Application to Section~\ref{sec: deterministic counterexample}: Assume that $(x(n))_{n=1}^\infty=(1,0,1,0,1,...)$ and $(w(n))_{n=1}^\infty=(1,3,1,3,1,...)$. Then we have $\bar x(n)\rightarrow\frac 1 2$, but $\tilde x(n)\rightarrow\frac 1 4$.
\item In Section \ref{sec:Counterexample} we constructed a~HC algorithm which produces a~pattern nearly identical to $(i)$ and leads to $\underset {n\rightarrow\infty} \limsup \ |\bar x^{h_0}_{ij}(n)-\tilde x^{h_0}_{ij}(n)|\geq \frac 1 4$ (in a~certain game).
\item Suppose that $w(n)$, $x(n)$, $n\in\mathbb{N}$ do not necessarily originate from SM-MCTS, but assume they satisfy $(a)$ and $(b)$:
\begin{enumerate}
\item $w(n)\in[0,C]$ and $x(n)\in[0,1]$  are independent random variables.
\item $x(n)$ have expectations approximately constant over $n\in\mathbb N$:
\begin{equation} \label{eq: UPO induction hypothesis}
\exists v\in[0,1] \forall n\in\mathbb{N}:\ \mathbf{E}[x(n)]\in\left[ v-\frac{\epsilon}{2},v+\frac{\epsilon}{2}\right].
\end{equation}
\end{enumerate}
Then, by strong law of large numbers, we almost surely have
\[ \underset {n\rightarrow\infty} \limsup |\bar x(n)-v|\leq\frac{\epsilon}{2} \ \textrm{ and } \ \underset {n\rightarrow\infty} \limsup |\tilde x(n)-v|\leq\frac{\epsilon}{2}, \]
which leads to
\begin{equation}
\underset {n\rightarrow\infty} \limsup |\bar x(n)-\tilde x(n)|\leq\epsilon. \label{eq: X}
\end{equation}
\item The~previous case can be generalized in many ways -- for example it is sufficient to replace bounded weights $w(n)$ by ones satisfying
\[ \exists q\in(0,1)\ \forall n\ \forall i,j:\ \mathbf{Pr}[w(n)\geq k]\leq q^k \]
(an assumption which holds with $q=\gamma/\left|\mathcal{A}_1(h)\right|$ when $w(n)$, $x(n)$  originate from SM-MCTS with fixed exploration). Note that the weights still need to be sufficiently independent, to avoid the pathological behavior from Section \ref{sec:Counterexample}.
\item In Section \ref{sec:Experimental} we provide empirical evidence which suggests that when the~variables $x(n),\ w(n)$ originate from SM-MCTS with Exp3 or RM selection policy, then the~assertion \eqref{eq: X} holds as well (and thus these two $\epsilon$-HC algorithms are likely $\epsilon$-UPO).
\end{enumerate}
\end{example}
Parts $(i)$ and $(ii)$ from Example \ref{ex: UPO} show that the~implication $(\textrm{A is }\epsilon\textrm{-HC}$ $\implies$ $\textrm{A is }$ $C\epsilon\textrm{-UPO})$ does not hold, no matter how huge is the~constant $C>0$. On the~other hand, $(iii)$ and $(iv)$ suggest that it is possible to prove that specific $\epsilon$-HC algorithms are $\epsilon$-UPO. However, the~guarantees we have about the~behavior of, for example, Exp3 are much weaker than the~assumptions made in $(iii)$ -- there is no independence between $w_{ij}(n), x_{ij}(m), \ m,n\in\mathbb{N}$, at best we can use some martingale theory. Moreover, even in nodes $h\in\mathcal{H}$ with $d_h=1$, we only have $\underset {n\rightarrow\infty} \limsup |\bar x_{ij}(n)-v^h_{ij}|\leq \epsilon$, instead of assumption  \eqref{eq: UPO induction hypothesis} from $(iii)$ $(b)$.

The observations above suggest that the~proof that specific $\epsilon$-HC algorithms are $\epsilon$-UPO will not be trivial. However, we can at least state a~sufficient condition for an~algorithm to be $\epsilon$-UPO, which is not dependent on the~tree structure of simultaneous-move games.

\begin{remark}[Sufficient condition for $\epsilon$-UPO]
Let $A$ be a~bandit algorithm, $K,L\in\mathbb N$ and for each $i\leq K$, $j\leq L$, let $(x_{ij}(n))_{n=1}^\infty$ be an~arbitrary sequence in $[0,1]$. Assume that player 1 uses $A$ to choose actions $i\in\{1,...,K\}$ and player 2 uses a~different instance of $A$ to choose actions $j\in\{1,....,L\}$, where the~reward for playing actions $(i,j)$ at time $t$ is $x_{ij}(t_{ij})$ (resp. $1-x_{ij}(t_{ij})$ for player 2). As before, denote 
\begin{align*}
\bar{x}_{ij}\left(n\right) & := \frac{1}{t} \sum_{m=1}^{n}x_{ij}\left(m\right),\\
w_{ij}\left(m\right) & := \left|\left\{ t\in\mathbb{N}|\,t_{ij}=m\ \& \ j(t)=j\right\}\right| , \\
\tilde{x}_{ij}\left(n\right) & := \frac{1}{\sum_{m=1}^{n}w_{ij}\left(m\right)}\sum_{m=1}^{n}w_{ij}\left(m\right)x_{ij}\left(m\right).
\end{align*}
Clearly, if
\[ \forall i\leq K \ \forall j\leq L:\ \underset {n\rightarrow\infty} \limsup |\bar x_{ij}(n)-\tilde x_{ij}(n)|\leq\epsilon \ \textrm{ a.s.} \]
holds for all such sequences $(x_{ij}(n))_{n=1}^\infty$, then $A$ is $\epsilon$-UPO.
\end{remark}
This condition should be easier to verify in practice than the~original $\epsilon$-UPO definition. On the~other hand, the~quantification over all possible sequences $(x_{ij}(n))_{n=1}^\infty$ might be a~too strong requirement. Moreover, it would be even better to have an~equivalent -- or at least sufficient, yet not too strong -- condition for an~algorithm to be $\epsilon$-UPO which only operates with one player.

\subsection{Details related to the~counterexample for Theorem \ref{thm: SM-MCTS convergence}} \label{sec:appendix:counterexample}

In Section \ref{sec:Counterexample} (Lemma \ref{lemma: modification of A}) we postulated the~existence of algorithms, which behave similarly to those from Section~\ref{sec: deterministic counterexample}, but unlike those from Section~\ref{sec: deterministic counterexample}, the~new algorithms are $\epsilon$-HC:
\modificationofA*
First, we define these algorithms, and then we prove their properties in Lemma \ref{lem: A_I properties}. Lemma \ref{lemma: modification of A}, then follows directly from Lemma \ref{lem: A_I properties}.

\begin{remark}
The intention behind the~notation which follows is that $ch$ denotes how many times the~other player ``cheated'', while $\bar{ch}$ is the~average ratio of cheating in following the
cooperation pattern. The~variables $\tilde{ch}$ and $\tilde{\bar{ch}}$
then serve as estimates of $ch$ and $\bar{ch}$. We present the~precise definitions below. The~nodes I, J, actions X, Y, L, R, U and D and the~respective payoffs refer to the~game $G$ from Figure \ref{fig: hra}.
\end{remark}
\textbf{Definition of the~algorithm $A_{1}^{J}$:}

Fix an~increasing sequence of integers $b_{n}$ and repeat for $n\in\mathbb{N}$:

\begin{enumerate}
\item \textbf{Buffer building $B_{n}$: }Play according to some $\epsilon$-HC
algorithm for $b_{n}$ iterations (continuing with where we left of
in the~$\left(n-1\right)$-th buffer building phase).
\item \textbf{Cooperation $C_{n}$:} Repeat $U,U,D,D$ for $t=1,2,...$
and expect the~other player to repeat $L,R,R,L$. At each iteration
$t$, with probability $\epsilon$, check whether the~other player
is cooperating -- that is play the opposite of the pattern-prescribed action and if the~payoff does not correspond to the~expected pattern the~second player should be following, set $\tilde{ch}\left(t\right)=\frac{1}{\epsilon}$. If the~other player passes this check, or if we did not perform it,
set $\tilde{ch}\left(t\right)=0$.
\item \textbf{End of cooperation }(might not happen)\textbf{:} While executing
step 2, denote $\tilde{\bar{ch}}\left(t\right):=\frac{1}{t}\sum_{s=1}^{t}\tilde{ch}\left(s\right)$.
Once $t$ satisfies 
\begin{equation}\label{eq:end_coop}
\epsilon\cdot\frac{b_{n}}{b_{n}+t}+1\cdot \frac{t}{b_{n}+t}\geq2\epsilon,
\end{equation}
we check at each iteration whether the~estimate $\tilde{\bar{ch}}\left(t\right)$
threatens to exceed $2\epsilon$ during the~next iteration or not.
If it does, we end the~cooperation phase, set $n:=n+1$ and continue
by the~next buffer building phase.
\item \textbf{Simulation of the~other player:} While repeating steps 1, 2 and 3, we simulate the~other player's algorithm $A_{2}^{J}$ (this is possible since from the~knowledge of our action and the~received payoff we can recover the~adversary's action).
If it ends the~cooperation phase and starts the~next buffer building
phase, we do the~same.
\item Unless the~cooperation phase is terminated, we stay in phase $C_n$ indefinitely.
\end{enumerate}

\noindent \textbf{Definition of $A_2^J$:} The~algorithm $A_{2}^{J}$ is identical to $A_{1}^{J}$, except for
the fact that it repeats the~pattern U, D, D, U instead of L, R, R, L and expects
the other player to repeat L, R, R, L.

\noindent \textbf{Definition of $A^I$:} The~algorithm $A^{I}$ is a
straightforward modification of $A_1^{J}$ (replacing $2\epsilon$ and $epsilon$ by $2\epsilon$ and $3\epsilon$ in Eq. \eqref{eq:end_coop}) -- it repeats the~sequence Y, X, X, Y and expects to
receive payoffs 0, 0, 0, ... whenever playing $X$ and payoffs 1, 0, 1, 0, ...
when playing $Y$. However, whenever it deviates from the~Y, X, X, Y
pattern to check whether these expectations are met, it plays
the same action once again (to avoid disturbing the~payoff
pattern for $Y$).

\begin{remark}\label{rem: A_IJ is correct}
The steps 2 and 3 from the~algorithm description are correctly defined because the~opponent's action choice can be recovered from the~knowledge of our action choice and the~resulting payoff. Regarding step 3, we note that the~condition here is trivially satisfied for $t=1,...,\epsilon b_{n}$, so the~length $t_n$ of the~cooperation phases $C_{n}$ tends to infinity as $n\rightarrow\infty$, regardless of the~opponent's actions.
\end{remark}

\begin{lemma}\label{lem: A_I properties}
$\left(1\right)$ When facing each other, the~average strategies of algorithms $A^{I},\, A_{1}^{J}$ and $A_{2}^{J}$ will converge to the~sub-optimal
strategy $\sigma^{I}=\sigma_{1}^{J}=\sigma_{2}^{I}=\left(\frac{1}{2},\frac{1}{2}\right)$. However the~algorithms will suffer regret no higher than $C\epsilon$ for some $C>0$ (where $C$ is independent of $\epsilon$).

$\left(2\right)$ There exists a~sequence $b_{n}$ (controlling the~length of phases $B_{n}$), such that when facing a~different adversary, the~algorithms
suffer regret at most $C\epsilon$. Consequently $A^{I},\, A_{1}^{J}\textrm{ and }A_{2}^{J}$
are $C\epsilon$-Hannan consistent.
\end{lemma}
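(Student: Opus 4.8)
The plan is to treat each of the three selection algorithms as a bandit algorithm fed an arbitrary reward sequence, and to bound its external regret (Definition~\ref{def:extRegret}) by splitting the time steps into the buffer-building phases $B_n$ and the cooperation phases $C_n$. The structural fact I would lean on throughout is that regret is subadditive over a partition of the time steps: writing $B$ and $C$ for the sets of buffer and cooperation iterations, $\max_i(a_i+b_i)\le\max_i a_i+\max_i b_i$ gives $R(t)\le R_B+R_C$, where $R_B$ (resp.\ $R_C$) is the regret measured only on the buffer (resp.\ cooperation) subsequence against its own best fixed action. This lets me treat the two phase types independently and recombine, and it sidesteps the fact that the globally best fixed action may differ from the per-phase best one.

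For claim $(1)$, when the three algorithms face each other, I would first show that no cooperation phase is ever abandoned. Since each player recovers the opponent's action from the received payoff, a player flags a cheat at step $t$ only if it performs its own random check (probability $\epsilon$) \emph{and} the opponent's action genuinely departs from the expected pattern. When both players cooperate, the only departures are the opponent's own checks, so $\mathbf{E}[\tilde{ch}(t)]=\epsilon\cdot\epsilon\cdot\tfrac1\epsilon=\epsilon$ and, by the law of large numbers, $\tilde{\bar{ch}}(t)\to\epsilon<2\epsilon$. This is exactly why the threshold in \eqref{eq:end_coop} is $2\epsilon$ rather than $\epsilon$: it absorbs the false positives generated by the opponent's checks. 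After the grace period imposed by \eqref{eq:end_coop} the increments of $\tilde{\bar{ch}}$ decay like $1/(\epsilon t)$, so the estimate stays below $2\epsilon$ and the phase persists; should an early fluctuation end some $C_n$, the longer grace periods of later phases (growing with $b_n$) make premature termination summably rare, so by a Borel--Cantelli argument a persistent phase occurs almost surely. Once a cooperation phase runs forever, the $4$-periodic pattern together with the balance-preserving design of the $\epsilon$-checks forces the empirical frequencies to $(\tfrac12,\tfrac12)$ at every node, and the deterministic computation of Section~\ref{sec: deterministic counterexample} (which gives $r^h_p=0$), perturbed by the $O(\epsilon)$ fraction of checks, yields regret at most $C\epsilon$.

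For claim $(2)$, against an arbitrary adversary, I would bound $R_B$ and $R_C$ separately. On the buffer subsequence the algorithm runs a single interleaved $\epsilon$-HC procedure, so $\limsup|B|^{-1}R_B\le\epsilon$; if the adversary eventually cooperates forever then $|B|$ stays finite and $R_B$ contributes nothing to the average. On the cooperation subsequence the detection rule keeps the importance-weighted estimate $\tilde{\bar{ch}}$ below $2\epsilon$ up to the detection moment; since $\tilde{\bar{ch}}$ is a conditionally unbiased tracker of the opponent's true deviation rate, the opponent's empirical action distribution during any cooperation phase is within $O(\epsilon)$ of the cooperating one. In the matching-pennies node $J$ this makes the best fixed action beat the balanced pattern by only $O(\epsilon)$ per step, and the same holds at $I$ where the counterfactual value of $Y$ is pinned down by the checks; hence $R_C\le C'\epsilon\,|C|+(\text{overshoot})$. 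I then choose $(b_n)$ growing fast enough that the grace period forces $t_n\to\infty$ (cf.\ Remark~\ref{rem: A_IJ is correct}), the concentration of $\tilde{\bar{ch}}$ keeps the per-phase overshoot $o(|C|)$, and the buffer phases are long enough for the asymptotic $\epsilon$-HC bound to take hold. Summing gives $\limsup t^{-1}R(t)\le C\epsilon$, i.e.\ all three algorithms are $C\epsilon$-Hannan consistent, and Lemma~\ref{lemma: modification of A} then follows at once.

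The main obstacle I anticipate is the cooperation-phase regret bound in part $(2)$: I must convert the coarse detection signal $\tilde{\bar{ch}}$ into a genuine per-step regret guarantee, controlling both the delay before a cheating adversary is detected and the overshoot of $\tilde{\bar{ch}}$ past $2\epsilon$. Because the raw increments $1/\epsilon$ are large, only the time-average concentrates, so this needs a martingale concentration argument that is uniform enough across the growing phases, and it is precisely here that the growth rate of $(b_n)$ and the grace-period condition \eqref{eq:end_coop} must be tuned. The node $I$ analysis is additionally delicate since the counterfactual reward $x^I_Y$ only updates when $Y$ is actually played, which is why the special ``repeat the checking action'' rule is required to keep the reward bookkeeping consistent while still extracting an unbiased cheat estimate.
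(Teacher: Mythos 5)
Your proposal is correct and follows essentially the same route as the paper's proof: part $(1)$ via the law of large numbers and Borel--Cantelli to show the players settle permanently into a cooperation phase, followed by the deterministic no-regret computation perturbed by the $O(\epsilon)$ fraction of checks, and part $(2)$ via a phase-wise regret decomposition with the $\epsilon$-HC bound on buffers and a concentration argument relating $\tilde{\bar{ch}}$ to $\bar{ch}$ whose success probability is boosted by growing $b_n$ and summed with Borel--Cantelli. The ``main obstacle'' you flag is resolved in the paper exactly as you anticipate, by choosing $b_n$ so that the event $\bigl(\tilde{\bar{ch}}(t)\leq 2\epsilon \implies \bar{ch}(t)\leq 3\epsilon \text{ for } t\geq b_n/\epsilon\bigr)$ fails with probability at most $2^{-n}$.
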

\begin{proof}
\textbf{Part $\left(1\right)$:} Note that, disregarding the~checks made by the~algorithms, the~average strategies in cooperation phases converge to $\left(\frac{1}{2},\frac{1}{2}\right)$. Furthermore, the~probability of making the~checks is the~same, regardless on which action is supposed to be played next. Therefore if the~algorithms eventually settle in cooperative phase, the~average strategies converge to $\left(\frac{1}{2},\frac{1}{2}\right)$.

\vspace{5mm}
\noindent \textbf{Step $\left(i\right)$: The~conclusion of (2) holds for $A_{p}^{J}$, $p=1,2$.}

We claim that both algorithms $A_{p}^{J}$, $p=1,2$ will
eventually settle in the~cooperative mode, thus generating the~payoff
sequence $1,0,1,0$ during at least $\left(1-\epsilon\right)^{2}$-fraction
of the~iterations and generating something else at the~remaining $1-\left(1-\epsilon\right)^{2}<2\epsilon$
iterations. It is then immediate that the~algorithms $A_{p}^{J}$
suffer regret at most $2\epsilon$ (since as observed in Section~\ref{sec: deterministic counterexample}, the~undisturbed pattern produces no regret).

\noindent  \textbf{Proof of $\left(i\right)$:} If the~other player uses the~same algorithm,
we have $\mathbf{E}\left[\tilde{ch}\left(t\right)\right]=\epsilon$
and $\mathbf{Var}\left[\tilde{ch}\left(t\right)\right]\leq\frac{1}{\epsilon}<\infty$,
thus by the~Strong Law of Large Numbers $\tilde{\bar{ch}}\left(t\right)\longrightarrow\epsilon$ holds almost surely.
In particular, there exists $t_{0}\in\mathbb{N}$ such that
\[ \mathbf{Pr}\left[\forall t\geq t_{0}:\,\tilde{\bar{ch}}\left(t\right)\leq2\epsilon\right]>0. \]
In Remark \ref{rem: A_IJ is correct} we observed that the~cooperative
phase always lasts at least $\epsilon b_{n}$ steps. Therefore once
we have $b_{n}\geq\frac{1}{\epsilon}t_{0}$, there is non-zero probability
both players will stay in $n$-th cooperative phase forever. If they
do not, they have the~same positive probability of staying in the~next cooperative
phase and so on -- by Borel-Cantelli lemma, they will almost surely
stay in $C_{n_{0}}$ for some $n_{0}\in\mathbb{N}$.

\vspace{5mm}
\noindent \textbf{Step $\left(ii\right)$: $A^{I}$ will eventually settle in the~cooperative mode.}

\noindent \textbf{Proof of $\left(ii\right)$:} This statement can be proven by a~similar argument as $\left(i\right)$. The~only difference is that instead of checking whether the~other player is cheating, we check whether the~payoff sequence is the~one expected by $A^I$. The~fact that $A^{I}$ settles in cooperative mode then immediately implies that $A^{I}$ will repeat the~Y, X, X, Y pattern during approximately a~$\left(1-2\epsilon\right)$-fraction
of the~iterations (we check with probability $\epsilon$, and we always check twice). It is then also immediate that $A^{I}$ will receive
the expected payoffs during at least $\left(1-\epsilon\right)$-fraction of the~iterations. (The payoff is always $0$, as expected, when $X$ is played, and it is correct in at least $1-2\epsilon$ cases when playing $Y$. $X$ and $Y$ are both played with same frequency, which gives the~result.)

\vspace{5mm}
\noindent \textbf{Step $\left(iii\right):$ If $A_{p}^{J}$, $p=1,2$ and $A^{I}$ settle in cooperative mode, then $A^{I}$ suffers regret at most $C\epsilon$.}

\noindent \textbf{Proof of $\left(iii\right)$:} Recall the~definition \eqref{equation: equivalent MAB description of SM-MCTS} of $x^h_i(t)$, definition of $t^h_i$ which follows it, and the notation introduced in \eqref{not: property}. We denote by $i\left(t\right):=i^I(t)$ the~$t$-th action chosen by $A^{I}$ and by $(i^*(s))_s$ the~4-periodic of sequence of ``ideal case actions'' starting with Y, X, X, Y. Recall that $x^J\left(n\right)$ is the~payoff obtained during the~$n$-th visit of $J$ and $x^I_{Y}\left(t\right)=x^J(t_Y)$ is the~payoff we \emph{would} receive for playing $Y$ in $I$ at time $t$ (and $x^I_{X}\left(t\right)=0$ is always zero). Finally we denote by  $(x^*_Y(s))_s$ the~4-periodic of sequence of ``ideal case payoffs'' starting with 1, 0, 0, 0. Our goal is to show that the~average regret 
\[ r^I\left(t\right)=\frac{1}{t}\left(\sum_{s=1}^{t}x^I_{Y}\left(s\right)-\sum_{s=1}^{t}x^I_{i\left(s\right)}\left(s\right)\right) \]
is small. As we already observed in Section~\ref{sec: deterministic counterexample}, if neither the~payoff pattern $(X^*_Y(s))_s = (1,0,1,0,...)$ coming from $J$, nor the~action pattern $(i^*(s))_s=(Y, X, X, Y, ...)$ at $I$ are disturbed, then we have $\left(i(s)\right)=\left(i^*(s)\right)$ and $\left(x^I_Y(s)\right)=\left(x^*_Y(s)\right)$, and therefore $A^{I}$ suffers no regret.
As long as the algorithms stay in the cooperative mode, the only way the pattern can be disturbed is when the random checks occur.
The checks at $J$ occur with probability $\epsilon$ for each player, which increases the average asymptotic regret by no more than $2\epsilon$ (these checks will actually cancel out in expectation, but we ignore this).
Moreover, the combination of the~2-periodicity of the~reward-pattern at $J$ and the fact that $A^I$ checks twice in a row guarantees that any check performed at $I$ only affects 2 the two iterations during which it occurs. In the worst case, this increases the average asymptotic regret by no more than $2\epsilon$. Hence the total average regret is at most $4\epsilon$ in the limit.

\vspace{5mm}
\noindent \textbf{Part $\left(2\right)$: Algorithms $A^J_p$, $p=1,2$, and $A^I$ are $C\epsilon$-HC.}

\noindent \textbf{a)} Firstly, we assume that both players stick to their assigned patterns during at least $\left(1-2\epsilon\right)$-fraction of iterations (in other words, assume that $\limsup\,\bar{ch}\left(t\right)\leq2\epsilon$). By the~same argument as in $\left(1\right)$, we can show that the~algorithms will then suffer regret at most $C\epsilon$.

\noindent \textbf{b)} On the~other hand, if $\limsup\,\bar{ch}\left(t\right)>2\epsilon$, the~algorithm will almost surely keep switching between $B_{n}$ and
$C_{n}$ (consequence of Strong Law of Large Numbers). Denote by $r_{b,n}$ and $r_{c,n}$ the~regret from phases $B_{n}$ and $C_{n}$, recall that $b_n, \ t_n$ are the~lengths of these phases and set 
\[
r_{n}:=\frac{b_{n}}{b_{n}+t_{n}}r_{b,n}+\frac{t_{n}}{b_{n}+t_{n}}r_{c,n}=\textrm{overall regret in }B_{n}\textrm{ and }C_{n}\textrm{ together}.
\]
Finally, let $r=\limsup \,r^J\left(t\right)$ denote the~bound on the~limit of regret of $A_{p}^{J}$. We need to prove that $r\leq C\epsilon$. To do this, it is sufficient to show that $\limsup_{n}\, r_{n}$ is small -- thus our goal will be to prove that if the~sequence $b_{n}$ increases quickly enough, then $\limsup\, r_{n}\leq C\epsilon$ holds almost surely. Denote by $\left(F_{n}\right)$ the~formula
\[
\forall t\geq\frac{1}{\epsilon}b_{n}:\,\tilde{\bar{ch}} \left(t\right) \leq 2\epsilon\implies\bar{ch} \left(t\right)\leq3\epsilon.
\]
We know that $\left|\tilde{\bar{ch}}-\bar{ch}\right|\rightarrow0$ a.s., therefore we can choose $b_{n}$ such that 
\[
\mathbf{Pr}\left[\left(F_{n}\right)\textrm{ holds}\right]\geq1-2^{-n}
\]
holds. Since $\sum2^{-n}<\infty$, Borel-Cantelli lemma gives that $\left(F_{n}\right)$ will hold for all but finitely many $n\in\mathbb{N}$. Note that if $\left(F_{n}\right)$ holds for both players, then their empirical strategy is at most $3\epsilon$ away from the~NE strategy, and thus $r_{c,n}\leq C\epsilon$. Since $r_{n}$ is a~convex combination of $r_{b,n}$ and $r_{c,n}$ and in $B_{n}$ we play $\epsilon$-consistently, we can compute 
\begin{equation*}
r\leq\limsup\, r_{n}\leq\max\left\{ \limsup\, r_{b,n},\limsup\, r_{c,n}\right\} \leq\max\left\{ \epsilon,C\epsilon\right\} =C\epsilon.
\end{equation*}
The proof of $C\epsilon$-Hannan consistency of $A^{I}$ is analogous.
\end{proof}

\section{Proofs related to the convergence of SM-MCTS(-A)}

\subsection{Convergence of SM-MCTS-A} \label{sec:SM-MCTS-A proofs}
In this section, we give the~proofs for lemmas which were used to prove Theorem \ref{thm: SM-MCTS-A convergence}. We begin with Lemma \ref{L: HC-and-NE}, which established a~connection between average payoff $g$ and game value $v$ for matrix games.

\HCandNE*
\begin{proof}
It is our goal to draw conclusion about the~quality of the~empirical strategy based on information about the~performance of our $\epsilon$-HC algorithm. Ideally, we would like to relate the~utility $u^M_1(\hat \sigma)$ to the~average payoff $g^M(t)$. However, as this is generally impossible, we can do the~next best thing:
\begin{align*}
u^M_1 \left(br,\hat{\sigma}_{2}(t)\right)=\, & \,\underset{i}{\max}\,\underset{j}{\sum}\hat{\sigma}_{2}(t)(j)v^M_{ij}=\underset{i}{\max}\,\underset{j}{\sum}\frac{t_{j}}{t}v^M_{ij}=\frac{1}{t}\underset{i}{\max}\,\underset{j}{\sum}t_{j}v^M_{ij}\nonumber \\
= \ & \,\frac{1}{t}\underset{i}{\max}\,\underset{s=1}{\overset{t}{\sum}}v^M_{ij(s)}=\frac{1}{t}G_{\max}(t)=g_{\max}(t).\label{eq: u(br, )}
\end{align*}
\textbf{Step 1:} Let $\eta>0$. Using the $\epsilon$-HC property gives us the~existence of such $t_{0}$ that $g^M_{\max}(t)-g^M(t)<\epsilon+\frac{\eta}{2}$ holds for all $t\geq t_{0}$, which is equivalent to $g^M(t)>g^M_{\max}-(\epsilon+\frac{\eta}{2})$. However, in our zero-sum matrix game setting, $g^M_{\max}$ is always at least $v^M$, which implies that $g^M(t)>v^M-(\epsilon+\frac{\eta}{2})$.
Using the~same argument for player $2$ gives us that $g^M(t)<v^M+\epsilon+\frac{\eta}{2}$. Therefore we have the~following statement, which proves the~inequalities \eqref{eq: HC and NE 2}:
\begin{equation*}
\forall t\geq t_0:\ v^M- (\epsilon+\frac{\eta} 2 ) < g^M(t)<v^M+\epsilon+\frac{\eta}{2} \textrm{ holds almost surely.}
\end{equation*}

\vspace{5mm}
\textbf{Step 2:} We assume, for contradiction with inequalities \eqref{eq: HC and NE 1}, that with non-zero probability, there exists an~increasing sequence of time steps $t_{n}\nearrow\infty$, such that $u^M_1\left(br,\hat{\sigma}_{2}(t_{n})\right)\geq v^M+2\epsilon+\eta$ for some $\eta>0$. Combing this with the~inequalities we proved above, we see that 
\begin{eqnarray*}
\limsup_{t\rightarrow\infty}r^M\left(t\right) & \geq & \limsup_{n\rightarrow\infty}r^M\left(t_{n}\right)=\limsup_{n\rightarrow\infty}\left(g^M_{\max}\left(t_{n}\right)-g^M\left(t_{n}\right)\right)\\
 & = & \limsup_{n\rightarrow\infty}\left(u^M_1\left(br,\hat{\sigma_{2}}\left(t_{n}\right)\right)-g^M\left(t_{n}\right)\right)\\
 & \geq & v^M+2\epsilon+\eta-\left(v^M+\epsilon+\eta/2\right) = \epsilon + \eta > \epsilon
\end{eqnarray*}
holds with non-zero probability, which is in contradiction with $\epsilon$-Hannan consistency. 
\end{proof}

\hryschybou*

\begin{remark}
\label{rem: tilde notation}In the~following proof, and in the~proof of Proposition \ref{prop: when tilde s = bar s},  we will be working with regrets, average payoffs and other quantities related to matrix games with bounded distortion, in which we have two sets of rewards -- the~``precise'' rewards $v^M_{ij}$ corresponding to the~matrix M and the~``distorted'' rewards $v^{\widetilde M}_{ij}(t)$. We denote the~variables related to the~distorted rewards $v^{\widetilde M}_{ij}(t)$ with the~superscript `${\widetilde M}$' (for example $g^{\widetilde M}_{\max}(t)=\max_i \frac 1 t \sum_{s=1}^t v^{\widetilde M}_{i j(s)}(s)$) and use the~superscript `$M$` for variables related to $v^M_{ij}$ (for example $g^M_{\max}(t)= \max_i \frac 1 t \sum_{s=1}^{t}v^M_{i j(s)}$).
\end{remark}

\begin{proof}[Proof of Proposition \ref{Prop: hry s chybou}]
This proposition strengthens the~result of Lemma \ref{L: HC-and-NE} and its proof will also be similar. The~only additional technical ingredient is the~inequality \eqref{eq:gmax-tildegmax}.

Since $(\widetilde M)$ is a~repeated game with $c\epsilon$-bounded distortion, there almost
surely exists $t_{0}$, such that for all $t\geq t_{0}$, $\,\left|v^{\widetilde M}_{ij}(t)-v^M_{ij}\right|\leq c\epsilon$
holds. This leads to
\begin{equation}
\left|g^{\widetilde M}_{\max}(t)-g^M_{\max}(t) \right| \leq \underset{i}{\max}\,\left|\frac{1}{t}\underset{s=1}{\overset{t}{\sum}}\left(v^{\widetilde M}_{ij(s)}(s)-v^M_{ij(s)}\right)\right|\leq\frac{t_{0}}{t}+c\epsilon\cdot\frac{t-t_{0}}{t}\overset{t\rightarrow\infty}{\longrightarrow}c\epsilon.\label{eq:gmax-tildegmax}
\end{equation}
The remainder of the~proof contains no new ideas, and it is nearly exactly the~same as the~proof of Lemma \ref{L: HC-and-NE}, therefore we just note what the~two main steps are:

\noindent \textbf{Step 1:} Hannan consistency gives us that
\begin{equation*}
\forall \eta>0\ \exists t_0\in\mathbb N\  \forall t\geq t_0:\ g^{\widetilde M}\left(t\right) \geq v^M -\left(\epsilon+\eta\right)-\left|g^{\widetilde M}_{\max}\left(t\right)-g^M_{\max}\left(t\right)\right| \textrm{ holds a.s.},
\end{equation*} 
from which we deduce the~inequalities
\begin{equation*}
v^M - ( \epsilon+c\epsilon) \leq \underset{t\rightarrow\infty}{\liminf}\, g^{\widetilde M}(t) \leq \underset{t\rightarrow\infty}{\limsup}\, g^{\widetilde M}(t) \leq v^M+\epsilon+c\epsilon.
\end{equation*}

\noindent \textbf{Step 2:} For contradiction we assume that there exists an~increasing sequence of time steps $t_{n}\nearrow\infty$, such that 
\begin{equation}
u^M_1(br,\hat{\sigma}_{2}(t_{n}))\geq v^M+2(c+1)\epsilon+\eta \label{eq: contradiction}
\end{equation}
holds for some $\eta>0$. Using the~identity $g^M_{\max}\left(t_{n}\right)=u^M_1\left( br,\hat{\sigma}_{2}(t_{n}) \right)$ and inequalities \eqref{eq:gmax-tildegmax} and \eqref{eq: contradiction}, we then compute that the~regret $r^{\widetilde M}(t_n)$ is too high, which completes the~proof:
\begin{align*}
\epsilon \geq \limsup_{t\rightarrow\infty} r^{\widetilde M}\left(t\right)
 &	\geq \limsup_{n\rightarrow\infty} r^{\widetilde M}\left(t_{n}\right) = \limsup_{n\rightarrow\infty} \left(g^{\widetilde M}_{\max} \left(t_{n}\right)- g^{\widetilde M}\left(t_{n}\right)\right) \\
 & = \limsup_{n\rightarrow\infty} \left( g^{M}_{\max} \left(t_{n}\right) -\left( g^{M}_{\max} \left(t_{n}\right) - g^{\widetilde M}_{\max} \left(t_{n}\right) \right) - g^{\widetilde M}\left(t_{n}\right)\right) \\
 &	\geq \limsup_{n\rightarrow\infty} u^M_1\left( br,\hat{\sigma}_{2} (t_{n}) \right) - \limsup_{n\rightarrow\infty} \left| g^M_{\max}(t_{n}) - g^{\widetilde M}_{\max} (t_{n}) \right| - \limsup_{n \rightarrow\infty} g^{\widetilde M}(t_{n}) \\
 &	\overset{\eqref{eq:gmax-tildegmax},\eqref{eq: contradiction}}{\geq} \left(v^M+2(c+1)\epsilon+\eta			\right) - c\epsilon - \left(v^M+\epsilon+c\epsilon\right) \\
 & 	= \epsilon+\eta > \epsilon.
\end{align*}
\end{proof}

\subsection{Convergence of SM-MCTS}
In Section \ref{sec: SM-MCTS convergence}, we stated Theorem \ref{thm: SM-MCTS convergence} which claimed that if a SM-MCTS algorithm uses a selection function which is both $\epsilon$-UPO and $\epsilon$-HC, it will find an approximate equilibrium. To finish the proof of Theorem \ref{thm: SM-MCTS convergence}, it remains to prove Proposition \ref{prop: when tilde s = bar s}, which establishes a~connection between regrets $R^h$ and $R^{M_h}$ of the~selection function with respect to rewards $x^h_i(t)$ , resp. w.r.t. the~matrix game $M_h=(v^h_{ij})_{ij}$. The~goal is to show that if $R^h(T)$ is small and algorithm $A$ is $\epsilon$-UPO, then the~regret $R^{M_h}(T)$ is small as well.

\whentildesbars*

\begin{proof}
Let $\epsilon$, $c$, $A$ and $h$ be as in the~proposition. Since in this proof we will only deal with actions taken in $h$, we will temporarily omit the~upper index `$h$' in variables $t^h_i$, $t^h_j$, $t^h_{ij}$, $i^h(t)$ and $j^h(t)$. As in Remark \ref{rem: tilde notation}, $R^{M_h}$ is the~regret of player 1 with respect to the~matrix game $M_h=\left(v^h_{ij}\right)_{ij}$:
\[ R^{M_h}\left(T\right)=\max_{i^*}\sum_{t=1}^{T}v^h_{i^*j(t)}-\sum_{t=1}^{T}v^h_{i(t)j(t)}=:\max_{i^*}S^{M_h}_{i^*}(T) . \]
Recall that by \eqref{eq:reward_for_action}, we have $x^h_i \left(t\right) = x^h_{ij(t)} \left(t_{ij(t)} \right)$.
This allows us to rewrite the~regret $R^{\widetilde M_h}=R^h$ that player 1 is attempting to minimize:
\begin{align*}
R^h\left(T\right) \overset{\text{def.}}= & \max_{{i^*}} \sum_{t=1}^{T} x^h_{i^*} \left( t \right) - \sum_{t=1}^{T} x^h(t) \\
 \overset{\eqref{eq:reward_for_action}}= & \max_{{i^*}} \sum_{t=1}^{T} x^h_{{i^*}j(t)} \left(t_{{i^*}j(t)}\right) - \sum_{t=1}^{T} x^h_{i(t)j(t)}\left(t_{i(t)j(t)}\right) \\
 =: & \max_{i^*}S^h_{i^*}(T).
 \end{align*}
Since $A$ is $\epsilon$-HC, we have $\limsup_{T}R^h\left(T\right)/T\leq\epsilon$ a.s. and our goal infer that $\limsup_{T}R^{M_h}\left(T\right)/T$ $\leq2$ $\left(c+1\right)\epsilon$ holds a.s. as well. Let $i^*$ be an~action of player 1. Note the~following two facts:
\begin{align*}
& T_{j} = \sum_{m=1}^{T_{i^*j}}w^h_{i^*j}\left(m\right) && \text{...by definition of $w^h_{ij}\left(n\right)$,} \tag{a} \\
& \left( \ j(t) = j \ \& \ t_{i^*j(t)} = m \ \right)\implies x^h_{i^*j(t)}(t_{i^*j(t)})=x^h_{i^*j}(m) && \text{...by \eqref{eq:reward_for_action}} \tag{b}
\end{align*}
We can rewrite $S^h_{i^*}\left(T\right)$ as follows:
\begin{align*}
S^h_{i^*}\left(T\right) = & \sum_{t=1}^{T}x^h_{i^*j(t)}\left(t_{i^*j(t)}\right)-\sum_{t=1}^{T}					x^h_{i(t)j(t)}\left(t_{i(t)j(t)}\right)\\
\overset{(b)}{=} & \sum_{j}\sum_{m=1}^{T_{i^*j}}x^h_{i^*j}\left(m\right)\left|\left\{ t\leq T|\, 				t_{i^*j(t)} = m\,\&\, j(t)=j\right\}\right| - \sum_{i,j}\sum_{m=1}^{T_{ij}}x^h_{ij}\left(m\right)\\
	\overset{(a)}{=} & \sum_{j}\frac{T_{j}}{\sum_{m=1}^{T_{i^*j}}w_{i^*j}\left(m\right)} \sum_{m=1}^{ 			T_{i^*j}} x^h_{i^*j}\left(m\right) * w_{ij}\left(m\right)-\sum_{i,j}\frac{T_{ij}}{T_{ij}}\sum_{m=1}			^{T_{ij}} x^h_{ij} \left(m\right)\\
 = & \sum_{j}T_{j}\tilde{x}^h_{i^*j}\left(T_{i^*j}\right)-\sum_{i,j}T_{ij}\bar{x}^h_{ij}\left(T_{ij}\right)\\
 = & \sum_{j}T_{j}\left(v^h_{i^*j}+\tilde{x}^h_{i^*j}\left(T_{i^*j}\right)-v^h_{i^*j}\right)-\sum_{i,j}T_{ij}		\left(v^h_{ij}+\bar{x}^h_{ij}\left(T_{ij}\right)-v^h_{ij}\right)\\
 =: & \sum_{j}\sum_{m=1}^{T_{j}}v^h_{i^*j}-\sum_{i,j}\sum_{m=1}^{T_{ij}}v^h_{ij}+X^h_{i^*}\left(T\right)\\
 = & \sum_{t=1}^{T}v^h_{i^*j(t)}-\sum_{t=1}^{T}v^h_{i(t)j(t)}+X^h_{i^*}\left(T\right),\\
 = & S^{M_h}_{i^*}\left(T\right)+X_{i^*}\left(T\right)
\end{align*}
where 
\[ X^h_{i^*}\left(T\right) := \sum_{j}T_{j}\left(\tilde{x}^h_{i^*j}\left(T_{i^*j}\right)-v^h_{i^*j}\right)-\sum_{i,j}T_{ij}\left(\bar{s}^h_{ij}\left(T_{ij}\right)-v^h_{ij}\right). \]
In particular, we can bind the~regret $R^h(T)$ as
\[
R^{M_h} = \max_{i^*} S^{M_h}_{i^*}(T) =\max_{i^*} \left( S^h_{i^*}(T) - X^h_{i^*}(T) \right) \leq R^h(T) + \max_{i^*} \left|X^h_{i^*}(T)\right|. \]
Clearly $X^h_{i^*}\left(T\right)$ satisfies 
\begin{equation*}
\left|\frac{X^h_{i^*}\left(T\right)}{T}\right| \leq \sum_{j}\frac{T_{j}}{T}\left|\tilde{x}^h_{i^*j}\left(T_{i^*j}\right)-v^h_{i^*j}\right|+\sum_{i,j}\frac{T_{ij}}{T}\left|\bar{x}^h_{ij}\left(T_{ij}\right)-v^h_{ij}\right|.
\end{equation*}
Using the~$\epsilon$-UPO property and the~assumption that $\limsup_{n}\,\left|\bar{x}^h_{ij}\left(n\right)-v^h_{ij}\right|\leq c\epsilon$
holds a.s. for each $i,j$, we get 
\begin{equation*} \begin{split}
\limsup_{T\rightarrow\infty}\left|\frac{X^h_{i^*}\left(T\right)}{T}\right|
\leq \ & \limsup_{T\rightarrow\infty}\sum_{j}\frac{T_{j}}{T}\left|\tilde{x}^h_{i^*j}\left(T_{i^*j}\right)-v^h_{i^*j}\right|+\\
 & \ + \ \sum_{i,j}\frac{T_{ij}}{T}\left|\bar{x}^h_{ij}\left(T_{ij}\right)-v^h_{ij}\right|\\
 \leq \ & \left(c+1\right)\epsilon\limsup_{T\rightarrow\infty}\sum_{j}\frac{T_{j}}{T}+\\
 & \ + c\epsilon\limsup_{T\rightarrow\infty}\sum_{i,j}\frac{T_{ij}}{T}=\left(2c+1\right)\epsilon.
\end{split} \end{equation*}
 Consequently, this implies that 
\begin{equation*} \begin{split}
\limsup_{T\rightarrow\infty} R^{M_h}\left(T\right)/T \ \leq \ & \limsup_{T\rightarrow\infty} R^h\left(T\right)/T+ \limsup_{T\rightarrow\infty}\max_{i^*}X^h_{i^*}\left(T\right)/T \\
\leq \ & \epsilon+\left(2c+1\right)\epsilon=2\left(c+1\right)\epsilon
\end{split} \end{equation*}
holds almost surely, which is what we wanted to prove.
\end{proof}

\section{The proof of Proposition \ref{prop:expl_removal}}

\explRemoval*

\begin{proof}[Sketch]
By Lemma~\ref{lem:emp_and_avg}, it suffices to prove the result for the empirical frequencies $\hat \sigma(t)$ and $\hat \mu(t)$.
By definition of exploitability, the first case follows from \eqref{eq: HC and NE 1} in Lemma~\ref{L: HC-and-NE} (using $\epsilon=0$).
Since the proof of the second case is similar to that of \eqref{eq: HC and NE 1} in Lemma~\ref{L: HC-and-NE}, we only give the main ideas. We focus on bounding the second player's exploitability.

The first ingredient is that $u^M_1(br,\hat\mu_2(t))$ is equal to $g^{M,A}_{\max}(t)$, the analogy of $g^M_{\max}(t)$ where we only consider the~iterations where the second player didn't explore.
Secondly, using the $\gamma$-Hannan-consistency of player 1's $A^\gamma$, we bound the~asymptotic difference between $g^M_{\max}(t)$ and $g^M(t)$.
Since the second player explores independently of first player's algorithm, the~same bound holds for $g^{M,A}_{\max}(t)$ and $g^{M,A}(t)$.
Thirdly, the Hannan-consistency of player 2's $A$ implies that $g^{M,A}(t) \geq v^M $ holds in the limit.

Finally, we combine these observations and show that if the exploitability of $\hat \mu_2(t)$ was higher than $\gamma$ in the limit -- that is, if $u^M_1(br,\hat\mu_2(t))$ was higher than $v^M + \gamma$ -- then the regret $g^{M,A}_{\max}(t) - g^{M,A}(t) = u^M_1(br,\hat\mu_2(t)) - g^{M,A}(t)$ of player 1 would be higher than $\gamma$.
\end{proof}

\end{document}